\documentclass[11pt]{article}

\usepackage{amsmath}
\usepackage{amsthm}
\usepackage{amssymb}
\usepackage{fancyhdr}
\usepackage{hyperref}

\newtheorem{definition}{Definition}
\newtheorem{lemma}{Lemma}
\newtheorem{theorem}{Theorem}
\newtheorem{corollary}{Corollary}

\theoremstyle{remark}
\newtheorem{remark}{Remark}

\title{Six Birds:\\ Foundations of Emergence Calculus}
\author{Ioannis Tsiokos\\\texttt{ioannis@automorph.io}}
\date{25 January 2026}

\fancypagestyle{firstpage}{%
  \fancyhf{}
  \fancyfoot[C]{\scriptsize DOI: \href{https://doi.org/10.5281/zenodo.18365949}{10.5281/zenodo.18365949}}

}

\begin{document}
\maketitle
\thispagestyle{firstpage}
\vspace{-0.5em}
\begin{center}
\scriptsize \textcopyright\ 2026 Automorph Inc.
\end{center}
\vspace{-0.75em}

\begin{abstract}
We present a math-only framework in which \emph{theories} arise as fixed points of idempotent operators and \emph{open-endedness} requires changing the operator rather than iterating it.
Under minimal hypotheses of composability plus limited interface access with bounded observation, we show that the operator vocabulary P1–P6 arises canonically as unavoidable closure mechanics (Theorem~\ref{thm:meta-prim}).
Order-theoretic closure operators and dynamics-induced endomaps $E_{\tau,f}$ (built from a Markov kernel, a coarse-graining lens, and a timescale) are treated uniformly via the abstraction of idempotent endomaps and their fixed points.
We define a computable total-variation idempotence defect for $E_{\tau,f}$ and show that small retention error implies approximate idempotence, yielding robust packaged “objects” at a given time-scale $\tau$ within a fixed lens/theory.
To prevent spurious directionality, we define arrow-of-time as path-space KL divergence and prove that coarse-graining cannot increase it (data processing); we further formalize the protocol-trap principle showing that autonomous protocol holonomy does not generate sustained asymmetry unless a genuine affinity is present in the lifted dynamics.
Finally, we give a finite forcing-style counting lemma: definable predicates are exponentially rare relative to a partition-based theory, so generic predicate extensions are overwhelmingly non-definable, providing a mathematically clean anti-saturation mechanism for strict ladder climbing.
\end{abstract}

\clearpage

\tableofcontents

% Scope IDs: see docs/spec/theorem-inventory.md and paper-contract.md
% Main-text claims include: T-CL-01, T-AOT-01, T-AOT-02, T-ACC-01, T-IC-01, T-IC-02, T-P1-01, T-P2-01, T-FOR-01, T-FOR-02.
\section{Introduction}
This paper isolates a finite, discipline-agnostic \emph{emergence calculus}: a minimal operator toolkit for when a coarse description becomes a \emph{theory} with stable objects, how strict theory extension remains possible, and how directionality audits stay honest under observation.
A new structural result makes this toolkit non-ad hoc: Theorem~\ref{thm:meta-prim} shows that once processes are composable but interface-limited, the operator vocabulary P1--P6 arises canonically as unavoidable closure mechanics of description (we refer to these as the “six birds”).
The organizing principle is to distinguish \emph{completion} (idempotent endomaps with fixed points) from \emph{extension} (refinements of what is definable), and to keep directionality certificates strictly tied to audit functionals that are monotone under coarse observation.
Within that scope we prove a small spine of claims: closure saturation, data processing for path reversal asymmetry, a protocol-trap audit for hidden schedules, and a finite forcing-count lemma that quantifies definability rarity.

The manuscript is structured to keep assumptions explicit and finite.
After conventions, we develop order-closure and idempotent endomaps, connect them to dynamics-induced
operators on the simplex, and then turn to accounting and arrow-of-time diagnostics.
We treat definability/forcing in finite theory terms and collect primitives P1--P6 as closure-changing
operations. The appendices record reproducibility checks, a Lean map, a Python evidence map, a Zeno
decision frontier, and a toolkit appendix that packages defect calculus, route mismatch, and bridge
objects for reuse.
A compact interaction diagram for how the six primitives fit into the rest of the calculus appears in Section~\ref{sec:six-birds-loop}.
We then make explicit how the same toolkit expresses \emph{downward influence} (macro-to-micro constraint within a coupled theory package) in Section~\ref{sec:downward-influence}.

\begin{center}
\fbox{\begin{minipage}{0.96\linewidth}
\begin{remark}[Interoperability of certificates (toolkit safety)]\label{rem:tk-interoperability}
This manuscript uses three logically distinct ``certificate families'':
\emph{stability} (existence/robustness of fixed points of a completion rule),
\emph{novelty} (strict theory extension by adding non-definable predicates),
and \emph{directionality} (nontrivial audit values such as affinities or path reversal asymmetry).
They are intentionally separated to prevent accidental overinterpretation.
The three certificate families (Stability, Novelty, Directionality) are logically independent. Each can hold or fail independently of the others. The only designed interactions are: (C1) completion defines stable objects, (C2) extension changes definability, (C3) audit is monotone under coarse-graining. No other implications are intended or valid.

\smallskip
\noindent\textbf{Non-implications.}
\begin{itemize}
  \item Stability (small idempotence defect / approximate closure) does \emph{not} imply novelty:
  a system may stabilize within a fixed definability structure (Section~\ref{sec:idempotent-endo}) without any strict theory extension (Section~\ref{sec:forcing}).
  \item Novelty does \emph{not} imply directionality:
  adding a generic predicate (Section~\ref{sec:forcing}) can change what is expressible while leaving the audit certificates unchanged.
  \item Stability and novelty do \emph{not} imply directionality:
  directionality is certified only by the audit functional and its monotonicity laws (Sections~\ref{sec:acc} and \ref{sec:aot}).
\end{itemize}

\noindent\textbf{No false positives for directionality under observation.}
Coarse-graining cannot create arrow-of-time: the path reversal asymmetry contracts under pushforward (data processing in Section~\ref{sec:aot}).
Apparent irreversibility caused by hiding an internal protocol/phase is diagnosed by the protocol-trap/clock-audit pattern (Section~\ref{sec:aot}).

\smallskip
\noindent\textbf{Where coupling actually occurs.}
The only intended coupling mechanisms are explicit operators:
(i) \emph{completion} changes fixed-point sets (Sections~\ref{sec:closure} and \ref{sec:idempotent-endo}),
(ii) \emph{extension} refines definability (Section~\ref{sec:forcing}),
and (iii) the audit functional is functorial/monotone under lenses (Sections~\ref{sec:acc} and \ref{sec:aot}).
Appendix~E's ECT template is an optional coupling that links completion/accounting to a throughput/capacity certificate under explicit slots; it still does not certify novelty or directionality.
Appendix~E's balanced-atom route (Definitions~\ref{def:ect-atom-ss}--\ref{def:ect-bal} and Corollary~\ref{cor:ect-bal-icap}) is one optional way to couple packaging/accounting to a throughput certificate; it still does not certify novelty (Section~\ref{sec:forcing}) or directionality (Sections~\ref{sec:acc}--\ref{sec:aot}).
Route dependence (holonomy) is treated separately as a reduction-order effect and is not, by itself, a directionality certificate.
\end{remark}
\end{minipage}}
\end{center}

\subsection{The organizing picture: a three-certificate loop}\label{sec:big-picture}

The results of this paper can be read as a minimal calculus for building hierarchical structure while keeping directionality honest.
The core observation is that three notions often conflated in informal discussions are logically distinct and have distinct mathematical \emph{certificates}:

\begin{itemize}
  \item \textbf{Stability / objects (completion).} A theory level is specified by an idempotent endomap (exact or approximate); its fixed points are the “objects” at that theory level.
  In order theory this is a closure operator; in dynamics it is the induced endomap $E_{\tau,f}$.
  The certificate is (approximate) idempotence, quantified by the TV defect $\delta_{\tau,f}$.
  \item \textbf{Novelty / open-endedness (theory extension).} Iterating a fixed completion rule saturates by idempotence.
  Strict ladder climbing therefore requires changing the theory/completion rule.
  The certificate is \emph{non-definability}: a new predicate is not measurable with respect to the old partition-based theory.
  Our finite forcing lemma makes this generic: when there is hidden volume ($N-K>0$), uniformly random predicate extensions are overwhelmingly non-definable.
  \item \textbf{Directionality / irreversibility (audit).} Arrow-of-time is defined on path space as $\Sigma_T(\rho)=D_{\mathrm{KL}}(\mathbb P_{\rho,T}\|\mathcal{R}_{*}\mathbb P_{\rho,T})$.
  The certificate is either positive path-space KL or, under the accounted regime, a non-exact log-ratio 1-form (nonzero cycle affinities).
  Data processing ensures coarse-graining cannot create this certificate, and the protocol-trap theorem ensures it is not an artifact of hiding an internal clock.
\end{itemize}

These certificates compose into a simple loop:
choose a theory (lens) $f$, induce a packaging endomap $E_{\tau,f}$ from dynamics and timescale, extract its (approximate) fixed points as objects, then adjoin a generic non-definable predicate to obtain a strict theory extension and repeat—while auditing any apparent arrow-of-time via path-space KL and affinity.
The main text develops each piece of this loop in a representation-independent way and keeps all nonessential computational evidence in the appendices.
\paragraph{Executive map (emergence calculus).}
A \emph{theory} in this paper is a package of (i) a lens/definability structure, (ii) a completion rule, and (iii) an audit.
The six primitives are the minimal moves that act on these components:
P5 selects objects as fixed points of a completion,
P6 supplies audits/monotones that remain honest under observation,
P4 provides a bounded theory index for coherent refinement,
P2 carves feasibility and representability,
P1 rewrites dynamics when closure fails to descend.
P3 diagnoses non\-commuting reduction routes (route mismatch); this is \emph{not} a directionality certificate.
Emergence means the appearance of stable fixed points within a theory.
Open-endedness means strict theory extension rather than iteration of a fixed closure.
Section~\ref{sec:six-birds-loop} makes this interaction loop explicit.

\paragraph{Roadmap.}
Section~\ref{sec:closure} develops the order-closure backbone and the saturation principle.
Section~\ref{sec:idempotent-endo} isolates idempotent endomaps as the unifying abstraction and introduces the induced empirical endomap $E_{\tau,f}$ with a computable idempotence defect and a qualified theory-birth criterion.
Section~\ref{sec:acc} formulates \textbf{A\_AUT}+\textbf{A\_REV}+\textbf{A\_ACC} as a graph 1-form exactness statement and defines affinities.
Section~\ref{sec:aot} defines arrow-of-time as path reversal asymmetry, proves data processing, and formalizes the protocol-trap/clock-audit principle (including the corrected autonomous reading of protocol holonomy).
Section~\ref{sec:forcing} gives the finite forcing lemma as an anti-saturation move for strict theory extension.
Section~\ref{sec:primitives} summarizes the primitives P1--P6 as closure-changing operations, and Section~\ref{sec:examples} gives a few surgical examples.
Section~\ref{sec:six-birds-loop} ties the primitives into the closure/forcing/audit spine as an explicit emergence calculus.

\section{Related work}\label{sec:related}

This paper is written to be largely self-contained; we briefly situate standard ingredients that
we use as mathematical tools.

\paragraph{Closure operators, reflections, and idempotents.}
Closure operators on posets/lattices are classical objects in order theory: extensive, monotone,
and idempotent endomaps whose fixed points form a ``closed'' substructure.
In categorical language, (order-theoretic) closure operators are the thin-category shadow of
reflective subcategories and idempotent monads: a reflector is left adjoint to the inclusion of the
full subcategory of fixed points, and the universal property characterizes the ``least closed object
above'' a given one.
Our main text uses the order-theoretic formulation, and the Lean scaffold mechanizes the thin
(poset) instance. See, e.g., \cite{DaveyPriestley2002,MacLane1998,Borceux1994}.

\paragraph{Coarse-graining of Markov dynamics and lumpability.}
State aggregation and lumpability for finite Markov chains provide exact hypotheses under which a
coarse observation of a Markov chain is itself Markov (or approximately so after subsampling).
We use coarse-graining primarily as a measurable map (a ``lens'') and emphasize statements that are
valid without assuming exact lumpability (notably, data-processing for relative entropy).
When we do speak about induced macro-dynamics, the relevant classical references are the standard
Markov chain texts that treat aggregation/lumpability and time reversal; see
\cite{KemenySnell1960,Norris1997,RogersPitman1981}. For approximate lumpability and metastable coarse objects in
finite-state settings, see e.g.\ \cite{KhudaBukhshEtAl2019,NilssonJacobi2010}.

\paragraph{Path-space KL and DPI.}
Relative entropy/KL divergence and the data processing inequality are standard in information theory:
pushforward under a measurable map cannot increase KL. We apply this at the level of path measures,
so that coarse observation cannot create time-reversal asymmetry.
The use of path-space relative entropy between forward and time-reversed Markov path measures as a
measure of irreversibility also appears in the literature on reversible Markov chains and (more
recently) stochastic thermodynamics; we cite these only as orientation since our arguments are finite
and measure-theoretic. See, e.g., \cite{CoverThomas2006,CsiszarKorner2011,Kelly1979,Seifert2012,Merhav2010,Puglisi2010}
and recent coarse-graining/hidden-drive analyses such as \cite{DeguntherVanderMeerSeifert2023,BauerSeifertVanderMeer2025}.

\paragraph{Graph cycles, affinities, and nonequilibrium network structure.}
The cycle/cocycle decomposition viewpoint for Markov networks and the role of affinities as
cycle integrals are standard in nonequilibrium statistical mechanics; see, e.g.,
\cite{Polettini2012,AltanerEtAl2012,DalCengioLecomtePolettini2023}.

\paragraph{Protocol geometry and stochastic pumps.}
Geometric-phase currents and stochastic-pump effects in driven Markov systems provide a related
route-dependence perspective on nonequilibrium protocols; see \cite{SinitsynNemenman2007}.

\paragraph{Definability rarity and the forcing analogy.}
In our finite setting, ``definable from a theory'' is treated as measurability with respect to a
finite $\sigma$-algebra (equivalently: being constant on the blocks of a partition). Our use of
``theory'' is intentionally modest: it is a finite definability structure (partition / finite
$\sigma$-algebra), not an infinitary logical theory. Our ``finite forcing lemma'' is a counting
statement: when blocks have nontrivial internal volume, predicates that are definable from the coarse
theory form an exponentially small subset of all predicates, so a random predicate extension is
generically non-definable.
The term ``forcing'' is used only as an analogy to the role of generic extensions in set theory; our
results are finite and do not invoke infinitary semantics. Background references include standard
finite model theory and set-theory texts, e.g.\ \cite{Libkin2004,Cohen1963,Kunen2011,MacphersonSteinhorn2011,Hyttinen2015}.

% Scope IDs: see docs/spec/theorem-inventory.md and paper-contract.md
\section{Conventions and notation}
\subsection{Finite state spaces, distributions, and kernels}

Throughout, $Z$ denotes a finite set (the microstate space). We write $\Delta(Z)$ for the probability simplex on $Z$.
We represent distributions as row vectors $\mu \in \Delta(Z)$ and Markov kernels as row-stochastic matrices $P$ so that the time-$1$ update is $\mu \mapsto \mu P$.
All logarithms are natural logs.
We use \emph{theory} (and \emph{theory level}) as the primary term throughout.

\subsection{Paths, time reversal, and relative entropy}

Fix a horizon $T \in \mathbb{N}$. A path is $z_{0:T} := (z_0,z_1,\dots,z_T)\in Z^{T+1}$.
Given an initial distribution $\rho\in\Delta(Z)$ and a time-homogeneous kernel $P$, the forward path law is
\[
\mathbb{P}_{\rho,T}(z_{0:T}) \;=\; \rho(z_0)\prod_{t=0}^{T-1} P(z_t,z_{t+1}).
\]
Let $\mathcal R:Z^{T+1}\to Z^{T+1}$ be path reversal, $\mathcal R(z_0,\dots,z_T)=(z_T,\dots,z_0)$, and let $\mathcal{R}_{*}\mathbb{P}_{\rho,T}$ denote the pushforward law.

For probability measures $p,q$ on a finite set, we use the KL divergence
\[
D_{\mathrm{KL}}(p\|q) := \sum_x p(x)\log\frac{p(x)}{q(x)},
\]
with the conventions $0\log(0/q):=0$ and $p(x)>0,\,q(x)=0 \Rightarrow D_{\mathrm{KL}}(p\|q)=+\infty$.

The finite-horizon path reversal asymmetry $\Sigma_T(\rho)$ is defined in Section~\ref{sec:aot} as the
KL divergence between $\mathbb{P}_{\rho,T}$ and its time-reversed pushforward $\mathcal{R}_{*}\mathbb{P}_{\rho,T}$.

\begin{remark}[Stationarity is not needed for $\Sigma_T$ or data processing]
The definition in Section~\ref{sec:aot} makes sense for any initial distribution $\rho$ (it may be $+\infty$ if reverse-support is missing).
The data processing inequality for KL is purely measure-theoretic and therefore applies to $\Sigma_T(\rho)$ without any stationarity assumption.
Stationarity enters only when one wants a steady-state interpretation (e.g.\ relating $\Sigma_T(\pi)$ to entropy production and/or considering long-time rates).
\end{remark}

\subsection{Coarse-graining lenses}

A (deterministic) coarse-graining or \emph{lens} is a map $f:Z\to X$ to a finite set $X$.
The pushforward of $\mu\in\Delta(Z)$ is $(f_*\mu)(x):=\sum_{z:f(z)=x}\mu(z)$.
On path space we use the coordinatewise lens $f^{T+1}:Z^{T+1}\to X^{T+1}$, and write $f_*\mathbb{P}_{\rho,T}$ for the induced macro path law.

\begin{remark}
WLOG, we replace $X$ by the image $f(Z)$ so that $f$ is surjective and every fiber $B_x=f^{-1}(x)$ is nonempty.
This does not change the induced $\sigma$-algebra/partition or any pushforward laws.
We will often refer to a lens $f:Z\to X$ (equivalently, the induced finite partition / $\sigma$-algebra)
as a \emph{finite observational theory} on $Z$: it specifies which predicates are expressible and which
distinctions are invisible. Refining $f$ by adjoining a predicate $h$ will be called a \emph{theory extension}.
\end{remark}

\subsection{A unified theory package viewpoint}\label{sec:tk-theory-package}

\begin{remark}[Emergence and open-endedness (terminology)]
Throughout, \emph{emergence} means the appearance of stable fixed points of a completion rule within a theory package,
while \emph{open-endedness} means strict theory extension (a change in definability or completion), not iteration of a fixed closure.
These are structural notions; no claim is made that either occurs in all systems.
\end{remark}

% ID: D-TK-THY-01
\begin{definition}[Finite theory package]\label{def:tk-theory-package}
Assume \textbf{A\_FIN}. A \emph{finite theory package} is a tuple
\[
\mathcal T \;=\; (Z,\ f,\ \Sigma_f,\ E,\ \mathcal A)
\]
where:
\begin{itemize}
  \item $Z$ is a finite carrier set (microstate space).
  \item $f:Z\to X$ is a \emph{lens} (coarse description). The induced definability structure $\Sigma_f$ is the finite $\sigma$-algebra (equivalently, partition) on $Z$ generated by $f$, i.e.\ the family of predicates measurable with respect to $f$.
  \item $E:\mathcal V\to\mathcal V$ is a \emph{completion/packaging endomap} on a chosen space of descriptions $\mathcal V$ associated to $Z$.
  Typically $E$ is idempotent ($E\circ E=E$) or approximately idempotent in a specified norm.
  In the empirical/dynamical instantiation $\mathcal V=\Delta(Z)$, while in the order-theoretic instantiation $\mathcal V=Z$ (with $E$ a closure operator).
  \item $\mathcal A$ is an \emph{audit functional} (a certificate of directionality or drive) defined on the relevant objects in context (e.g.\ path laws for a Markov dynamics, or antisymmetric edge 1-forms on a support graph), together with a designated monotonicity principle under coarse observation: for any further coarsening $g:X\to Y$, the induced pushforward should not increase the audit, schematically
  \[
  \mathcal A\big((g\circ f)_\#(\cdot)\big)\ \le\ \mathcal A\big(f_\#(\cdot)\big),
  \]
  whenever both sides are defined.
\end{itemize}
We call $\Sigma_f$ the \emph{expressive content} of the theory and $E$ its \emph{completion rule}; the fixed points of $E$ are the internally complete descriptions/objects recognized by the theory at the chosen scale and lens.
\end{definition}

\begin{remark}
We use the symbol $\Sigma$ in two unrelated roles: $\Sigma_f$ denotes the definability $\sigma$-algebra induced by a lens $f$, while $\Sigma_T(\rho)$ denotes the finite-horizon path reversal asymmetry defined in Section~\ref{sec:aot}.
Context (and subscripts) distinguish these usages.
\end{remark}

\begin{remark}[Instantiations in this manuscript]\label{rem:tk-theory-instantiations}
The components of $\mathcal T$ appear throughout the paper in specialized forms.
In Section~\ref{sec:closure}, an order-closure operator $c$ on a poset is an idempotent completion rule; fixed points of $c$ are the ``objects'' at that rung of a closure ladder.
In Section~\ref{sec:idempotent-endo}, we isolate idempotent endomaps as the minimal abstraction behind this fixed-point viewpoint.
In Section~\ref{sec:idempotent-endo} we also construct, from a Markov kernel, a lens $f$, and a timescale, a dynamics-induced completion endomap $E_{\tau,f}$ on $\Delta(Z)$.
In Section~\ref{sec:forcing}, adjoining a new predicate corresponds to refining $\Sigma_f$ to a strictly stronger definability structure, i.e.\ a theory extension.
Finally, Sections~\ref{sec:acc} and \ref{sec:aot} provide two audit families used here: graph 1-form exactness/affinities (ACC) and path reversal asymmetry (arrow-of-time), the latter enjoying a precise monotonicity principle under coarse-graining (data processing).
\end{remark}

\subsection{A minimal instantiation recipe}\label{sec:instantiation-recipe}
To apply the emergence calculus to an unrelated multiscale system, one supplies (at minimum) the following five pieces of data.
\begin{enumerate}
  \item \textbf{Micro description space.} A finite (or otherwise well-posed) space $Z$ of microstates, histories, or happenings, together with the ambient space of descriptions $\mathcal V$ used in the domain (e.g.\ $Z$ itself with an order, or $\Delta(Z)$ for distributional descriptions).
  \item \textbf{Lens/refinement family.} A lens $f:Z\to X$ (or a refinement family $f_j:Z\to X_j$) specifying which distinctions are observable/definable at the current theory. This induces the definability structure (written $\Sigma_f$ in the finite setting).
  \item \textbf{Completion/packaging rule.} A completion endomap $E:\mathcal V\to\mathcal V$ (often idempotent or approximately idempotent) whose fixed points are the theory's \emph{objects} (stable packaged descriptions).
  \item \textbf{Audit functional.} An audit $\mathcal A$ that is functorial/monotone under lenses (so it does not generate false positives under coarse observation). Examples in this paper include the ACC $1$-form affinities (Section~\ref{sec:acc}) and finite-horizon path reversal asymmetry (Section~\ref{sec:aot}).
  \item \textbf{Extension/update move.} A rule that changes the theory package (typically by extending definability, changing the completion rule, or rewriting admissible dynamics). Open-endedness is never obtained by iterating a fixed completion; it requires such strict extension (Section~\ref{sec:forcing}).
\end{enumerate}

Given these inputs, the three core certificates used throughout the paper can be read as pass/fail tests:
\begin{description}
  \item[\textbf{Emergence (stability/objecthood).}] Pass when the completion has multiple robust fixed points (objects) and the idempotence defect is small at the intended scale (Section~\ref{sec:idempotent-endo} and the empirical endomap construction).
  \item[\textbf{Open-endedness (novelty).}] Pass when a theory extension is \emph{strict} in the definability sense (new predicates are not definable from the old lens), so that closing again can yield genuinely new objects; the finite forcing lemma quantifies that such strict extensions are generic (Section~\ref{sec:forcing}).
  \item[\textbf{Directionality (audit).}] Pass when the chosen audit is nontrivial in the full model (e.g.\ nonzero affinity class or nonzero path reversal asymmetry) and remains consistent under observation (data processing). Apparent directionality caused by hiding protocol/state variables is ruled out by the protocol-trap analysis (Section~\ref{sec:aot}).
\end{description}

Operationally, one applies the toolkit by iterating:
choose $(f,E,\mathcal A)$ for a candidate theory, identify its objects (fixed points), audit consistency under coarse observation, and then perform an explicit extension/rewrite step to obtain the next theory package.
Appendices collect reproducibility artifacts and sanity checks for the finite Markov and finite definability instantiations used as running examples.

\subsection{Support graphs and discrete 1-forms}

Given a kernel $P$ on $Z$, its directed support has an edge $z\to z'$ whenever $P(z,z')>0$.
When microreversibility holds (support is bidirected), we also use the induced undirected support graph with edge $\{z,z'\}$ whenever $P(z,z')>0$ and $P(z',z)>0$.

On bidirected support edges we use the antisymmetric edge 1-form
\[
a(z,z') \;:=\; \log\frac{P(z,z')}{P(z',z)}.
\]
For a directed cycle $\gamma=(z_0\!\to z_1\!\to\cdots\to z_m=z_0)$ in the bidirected support, its \emph{cycle integral} is
\[
\oint_\gamma a \;:=\; \sum_{i=0}^{m-1} a(z_i,z_{i+1}).
\]
If $a$ is exact, i.e.\ there exists $\Phi:Z\to\mathbb{R}$ with $a(z,z')=\Phi(z')-\Phi(z)$ on all bidirected edges, then $\oint_\gamma a=0$ for every cycle $\gamma$.

\subsection{Assumption bundles}

We will cite standing hypotheses by short tags (e.g.\ \textbf{A\_FIN}+\textbf{A\_LENS}) rather than repeating prose.
Use the following as the canonical assumption bundle list in this manuscript (tag spellings are aligned with \path{docs/spec/theorem-inventory.md}):

\begin{description}
  \item[\textbf{A\_FIN}] Finite combinatorial setting: underlying state/index sets are finite (so sums/KL on finite alphabets are unambiguous), and when we treat completion/packaging maps as linear operators they are finite-dimensional matrices.
  \item[\textbf{A\_AUT}] Autonomy: time-homogeneous dynamics; phase variables are part of the state (no external schedule).
  \item[\textbf{A\_REV}] Microreversibility on support: $P(z,z')>0 \Rightarrow P(z',z)>0$ (bidirected support).
  \item[\textbf{A\_ACC}] Accounting/1-form decomposition: the antisymmetric log-ratio 1-form admits a decomposition into an exact part plus fixed affinity components (graph $H^1$ viewpoint).
  \item[\textbf{A\_NULL}] Null regime: all affinity coordinates / cycle integrals vanish (e.g.\ all basis affinities $A_r=0$).
  \item[\textbf{A\_STAT}] A stationary distribution $\pi$ exists (full support when required by context).
  \item[\textbf{A\_LENS}] A specified coarse-graining lens $f:Z\to X$ is part of the statement.
\end{description}

% Scope IDs: see docs/spec/theorem-inventory.md and paper-contract.md
\section{Order-closure and closure ladders}
\label{sec:closure}
\subsection{Order-theoretic closure and fixed points}

We begin with the minimal order-theoretic notion of closure. This section is intentionally
agnostic about dynamics, probability, or measurement; those enter later. The only idea we
need is that a closure operator defines a notion of ``completion,'' and its fixed points
constitute the ``objects'' of the theory determined by that completion rule.

\begin{definition}[Closure operator]\label{def:closure-operator}
Let $(L,\le)$ be a poset. A map $c:L\to L$ is a \emph{closure operator} if for all $x,y\in L$:
\begin{enumerate}
  \item (Extensive) $x \le c(x)$.
  \item (Monotone) $x \le y \Rightarrow c(x)\le c(y)$.
  \item (Idempotent) $c(c(x)) = c(x)$.
\end{enumerate}
\end{definition}

For a preorder, the same discussion applies after quotienting by the induced equivalence relation
$x\sim y \iff x\le y\le x$. We use posets to identify fixed points by equality.

\begin{definition}[Closed points / objects]\label{def:closed-points}
Given a closure operator $c:L\to L$, its \emph{closed points} (or \emph{objects of the theory determined by $c$}) are
the fixed points
\[
\mathrm{Fix}(c) \;:=\; \{x\in L : c(x)=x\}.
\]
\end{definition}

The order on $L$ induces a natural order on closure operators: we write $c \le d$ if
$c(x)\le d(x)$ for all $x\in L$ (pointwise order). Intuitively, $d$ is ``stronger'' than $c$ if it
closes more aggressively.

\begin{lemma}[Closed points are antitone in closure strength]\label{lem:closed-antitone}
If $c \le d$ are closure operators on $L$, then $\mathrm{Fix}(d)\subseteq \mathrm{Fix}(c)$.
Equivalently: if $d(x)=x$ then $c(x)=x$.
\end{lemma}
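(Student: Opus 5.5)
The plan is a direct squeeze argument using only the three axioms of Definition~\ref{def:closure-operator} together with antisymmetry of the partial order on $L$. Fix $x\in\mathrm{Fix}(d)$, so $d(x)=x$. The goal is to show $c(x)=x$ by proving the two inequalities $x\le c(x)$ and $c(x)\le x$, and then invoking antisymmetry. This is where the standing choice of posets (rather than preorders) is used: it lets us identify fixed points by equality.

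The first inequality, $x\le c(x)$, is immediate from extensivity of $c$. For the reverse inequality I would chain the pointwise hypothesis $c\le d$ with the fixed-point property of $d$:
\[
c(x)\;\le\; d(x)\;=\;x .
\]
Antisymmetry then gives $c(x)=x$, i.e.\ $x\in\mathrm{Fix}(c)$. Since $x$ was an arbitrary element of $\mathrm{Fix}(d)$, this yields $\mathrm{Fix}(d)\subseteq\mathrm{Fix}(c)$. The ``equivalently'' clause of the statement is just the elementwise restatement of this inclusion, so nothing further needs to be proved there.

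I do not expect a genuine obstacle. The only point worth noting is which hypotheses are actually used. The argument needs extensivity of $c$ and the pointwise order $c\le d$. Monotonicity and idempotence of either operator are not needed, and neither is extensivity of $d$. I would record this in a brief remark, since it shows the antitonicity holds for any extensive $c$ lying pointwise below a map $d$. I would also add a brief remark that, in the preorder setting, the same argument gives $c(x)\sim x$ rather than equality, which is consistent with the quotienting convention stated before Definition~\ref{def:closed-points}.
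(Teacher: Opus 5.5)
Your proof is correct and is exactly the paper's squeeze argument: extensivity gives $x\le c(x)$, the pointwise hypothesis together with $d(x)=x$ gives $c(x)\le x$, and antisymmetry yields $c(x)=x$. Your side observations also hold: only extensivity of $c$ and $c\le d$ are used, and in the preorder case the argument gives $c(x)\sim x$.
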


\begin{proof}
Let $x\in \mathrm{Fix}(d)$, so $d(x)=x$. By extensiveness, $x\le c(x)$. By $c\le d$ we have
$c(x)\le d(x)=x$. Hence $x\le c(x)\le x$, so $c(x)=x$.
\end{proof}

\subsection{Closure ladders and saturation}

\paragraph{The Box is the Thing.}
Idempotence implies that repeatedly applying a \emph{fixed} completion rule cannot yield unbounded novelty.

\begin{lemma}[One-step stabilization under iteration]\label{lem:closure-iterate-stabilizes}
Let $c$ be a closure operator on $L$. Define the iterate $c^{(n)}$ by $c^{(0)}=\mathrm{id}$ and
$c^{(n+1)} = c\circ c^{(n)}$. Then for all $x\in L$ and all integers $n\ge 1$,
\[
c^{(n)}(x) = c(x).
\]
\end{lemma}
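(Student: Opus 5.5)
The argument is a direct induction on $n$ that uses only the idempotence axiom of Definition~\ref{def:closure-operator}. Extensiveness and monotonicity play no role, so the statement actually holds for any idempotent endomap $c:L\to L$. This is worth noting because Section~\ref{sec:idempotent-endo} will reuse the same saturation reasoning.

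\emph{Base case} $n=1$. By the recursive definition, $c^{(1)}=c\circ c^{(0)}=c\circ\mathrm{id}=c$, so $c^{(1)}(x)=c(x)$ for every $x\in L$.

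\emph{Inductive step.} Assume $c^{(n)}(x)=c(x)$ for all $x\in L$ and some fixed $n\ge 1$. Then
\[
c^{(n+1)}(x)=c\bigl(c^{(n)}(x)\bigr)=c\bigl(c(x)\bigr)=c(x).
\]
The first equality is the definition of the iterate, the second is the induction hypothesis, and the third is the idempotence clause $c(c(x))=c(x)$. This completes the induction.

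There is no real obstacle. The only point needing care is to start the induction at $n=1$ rather than $n=0$, because $c^{(0)}=\mathrm{id}$ differs from $c$ unless $x\in\mathrm{Fix}(c)$. Equivalently, one could phrase the result as: $c(x)\in\mathrm{Fix}(c)$ by idempotence, and every further application of $c$ fixes that point. This second phrasing is the form in which the saturation principle (``The Box is the Thing'') will be cited later.
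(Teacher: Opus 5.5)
Your proof is correct and follows the paper's argument: induction from the trivial case $n=1$, with the step $c^{(n+1)}(x)=c(c^{(n)}(x))=c(c(x))=c(x)$ justified by idempotence. Your observation that extensiveness and monotonicity are not needed is also accurate, since the paper's proof likewise uses only idempotence.
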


\begin{proof}
For $n=1$ the statement is trivial. If $c^{(n)}(x)=c(x)$ for some $n\ge 1$, then
$c^{(n+1)}(x)=c(c^{(n)}(x))=c(c(x))=c(x)$ by idempotence.
\end{proof}

\begin{corollary}[Closure saturates]\label{cor:closure-saturates}
Fix a closure operator $c$. For any $x\in L$, the sequence
$x,\; c(x),\; c(c(x)),\; c(c(c(x))),\dots$ stabilizes after one closure step:
it is constant from the second term onward.
\end{corollary}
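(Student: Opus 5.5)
The plan is to read Corollary~\ref{cor:closure-saturates} as a direct re-indexing of Lemma~\ref{lem:closure-iterate-stabilizes}. The sequence $x,\ c(x),\ c(c(x)),\dots$ is exactly $\bigl(c^{(n)}(x)\bigr)_{n\ge 0}$. Here the zeroth term is $c^{(0)}(x)=x$ and the $(n+1)$-st term is $c^{(n)}(x)$ for $n\ge 0$. The first step is to record this identification explicitly. A one-line induction on $n$ does this, using only the recursive definition $c^{(n+1)}=c\circ c^{(n)}$, so that ``the term obtained by applying $c$ to the previous term'' matches the lemma's iterate.

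The second step invokes Lemma~\ref{lem:closure-iterate-stabilizes}. It gives $c^{(n)}(x)=c(x)$ for every $n\ge 1$, so every term from the second onward (indices $n\ge 1$) equals $c(x)$. This is precisely the claim that the sequence is constant from the second term on, i.e.\ it stabilizes after one closure step.

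For completeness I would add a remark that nothing stronger holds in general. The first term $x$ may differ from $c(x)$, and it coincides with $c(x)$ exactly when $x\in\mathrm{Fix}(c)$. So ``after one step'' is sharp, and the sequence is constant from the first term iff $x$ is closed. Note also that only idempotence is used; extensivity and monotonicity play no role here.

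There is no genuine obstacle. The only point needing care is the off-by-one bookkeeping between ``second term'' and ``$n\ge 1$'', which the explicit identification in the first step settles.
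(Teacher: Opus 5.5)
Your argument is correct and is exactly the paper's: the corollary is stated without separate proof as an immediate reindexing of Lemma~\ref{lem:closure-iterate-stabilizes}. The term at index $n\ge 0$ is $c^{(n)}(x)$, and the lemma gives $c^{(n)}(x)=c(x)$ for all $n\ge 1$. Your remarks are also correct: only idempotence is used, and the sequence is constant from the start iff $x\in\mathrm{Fix}(c)$. One small fix: use a single indexing convention, since ``zeroth term'' is $0$-based while ``$(n+1)$-st term $=c^{(n)}(x)$'' is $1$-based.
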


\begin{remark}
Viewed abstractly, a closure operator acts like a completion rule.\\
Iterating a fixed completion rule stabilizes. Thus strict ladder growth requires changing the completion rule
(equivalently, passing to a stronger theory).
\end{remark}

Thus, any notion of open-endedness based purely on repeated application of a single completion rule is
structurally limited. To obtain a strict hierarchy of ``new objects,'' one must change the completion rule.

\begin{definition}[Strictly stronger closure]\label{def:strictly-stronger}
For closure operators $c,d$ on $L$, write $c \prec d$ (``$d$ is strictly stronger than $c$'') if
$c\le d$ and there exists an $x\in L$ such that $d(x)\neq c(x)$.
\end{definition}

\begin{definition}[Closure ladder]\label{def:closure-ladder}
A \emph{closure ladder} is a sequence of closure operators $(c_n)_{n\in\mathbb N}$ on $L$ such that
$c_n \prec c_{n+1}$ for all $n$. The fixed-point sets form a nested chain
\[
\mathrm{Fix}(c_0)\supseteq \mathrm{Fix}(c_1)\supseteq \mathrm{Fix}(c_2)\supseteq \cdots
\]
by Lemma~\ref{lem:closed-antitone}.
\end{definition}

\begin{remark}
The strictness relation $\prec$ is irreflexive and transitive on closure operators (by the pointwise order),
and Lemma~\ref{lem:closed-antitone} records the antitone behavior of fixed points.
On a fixed finite lattice, any $\prec$-chain is finite; the open-endedness modeled here comes from changing
the theory/closure package or enlarging domains, not from a single finite lattice.
Viewed abstractly, a closure operator acts like a completion rule. Iterating a fixed completion rule stabilizes.
In a poset, closure operators correspond to reflections onto their fixed points (see Remark~\ref{rem:lean-closure}).
\end{remark}

\begin{remark}[Lean formalization note]\label{rem:lean-closure}
The order-theoretic closure operator used here is formalized in the repository using mathlib's
\texttt{ClosureOperator}. The basic closure and ladder API lives in
\path{formal/ClosureLadder/Basic.lean}. In particular, the iterate-stabilization statement
Lemma~\ref{lem:closure-iterate-stabilizes} corresponds to Lean lemmas
\texttt{closure\allowbreak *iterate\allowbreak *succ} and
\texttt{closure\allowbreak *iterate\allowbreak *ge\allowbreak *one}, and ladder monotonicity
corresponds to \texttt{ladder\allowbreak *mono}.
\end{remark}

% Scope IDs: see docs/spec/theorem-inventory.md and paper-contract.md
\section{Idempotent endomaps and induced closures}
\subsection{Idempotent endomaps}
\label{sec:idempotent-endo}

The common mathematical nucleus behind both order-theoretic closures and the
dynamics-induced operators used later is \emph{idempotence}.  In this section we
package that nucleus in a way that avoids conflating two different notions:
\emph{order-closures} (which require an ambient poset) versus \emph{idempotent
endomaps} (which require no order at all).

\begin{definition}[Idempotent endomap]
Let $A$ be a set (or type). An \emph{idempotent endomap} on $A$ is a function
$e : A \to A$ such that $e \circ e = e$, i.e.\ $e(e(a)) = e(a)$ for all $a\in A$.
\end{definition}

\begin{definition}[Fixed points]
For an endomap $e : A \to A$ define its fixed-point set
\[
\mathrm{Fix}(e) := \{a\in A : e(a)=a\}.
\]
We interpret $\mathrm{Fix}(e)$ as the collection of ``objects'' completed by $e$.
\end{definition}

\begin{lemma}[Idempotents split]
Let $e : A \to A$ be an idempotent endomap. Let $i:\mathrm{Fix}(e)\hookrightarrow A$
denote inclusion, and define $r:A\to \mathrm{Fix}(e)$ by $r(a):=e(a)$ (viewed as an
element of $\mathrm{Fix}(e)$).
Then $r\circ i = \mathrm{id}_{\mathrm{Fix}(e)}$ and $i\circ r = e$.
In particular, $e$ restricts to the identity on $\mathrm{Fix}(e)$ and
$\mathrm{Fix}(e)$ canonically identifies with the image of $e$.
\end{lemma}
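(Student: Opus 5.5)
The plan is to verify the two identities directly from the definitions, after first checking that $r$ is well defined. For any $a\in A$, idempotence gives $e(e(a))=e(a)$, so $e(a)\in\mathrm{Fix}(e)$. Hence $r(a):=e(a)$ is a legitimate element of $\mathrm{Fix}(e)$, and $r:A\to\mathrm{Fix}(e)$ is a function. This is the only place idempotence is genuinely needed, and it is the step I would state most carefully. I expect no real obstacle; the remaining work is bookkeeping between elements of $A$ and elements of the subtype $\mathrm{Fix}(e)$.

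Next I compute $r\circ i$. For $x\in\mathrm{Fix}(e)$ we have $r(i(x))=e(x)=x$, by the defining property of a fixed point. So $r\circ i=\mathrm{id}_{\mathrm{Fix}(e)}$. For the other composite, take $a\in A$; then $i(r(a))=e(a)$, since $i$ merely forgets membership in $\mathrm{Fix}(e)$. Thus $i\circ r=e$ as maps $A\to A$.

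For the ``in particular'' clauses, restricting $i\circ r=e$ along $i$ gives $e\circ i=i\circ(r\circ i)=i$, so $e$ acts as the identity on $\mathrm{Fix}(e)$. For the identification with the image, the first step already shows $\mathrm{im}(e)\subseteq\mathrm{Fix}(e)$. Conversely, any $x\in\mathrm{Fix}(e)$ satisfies $x=e(x)\in\mathrm{im}(e)$. So $\mathrm{Fix}(e)=\mathrm{im}(e)$ as subsets of $A$, and this equality is canonical: it involves no choices and is compatible with $i$ and $r$.
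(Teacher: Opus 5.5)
Your proof is correct and follows the paper's argument step for step: idempotence makes $r$ well defined, the fixed-point property gives $r\circ i=\mathrm{id}_{\mathrm{Fix}(e)}$, and $i\circ r=e$ holds by definition of $i$. Your explicit check of the ``in particular'' clauses ($e\circ i=i$ and $\mathrm{Fix}(e)=\mathrm{im}(e)$) fills in what the paper leaves implicit.
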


\begin{proof}
For $a\in A$, idempotence gives $e(e(a))=e(a)$, so $e(a)\in \mathrm{Fix}(e)$ and
$r$ is well-defined. For $x\in \mathrm{Fix}(e)$ we have $r(i(x))=e(x)=x$, hence
$r\circ i=\mathrm{id}$. Finally, for $a\in A$, $(i\circ r)(a)=i(e(a))=e(a)$ by
definition of $i$, so $i\circ r=e$.
\end{proof}

\begin{remark}[Order-closures as a special case]
If $L$ is a poset and $c:L\to L$ is an order-closure (extensive, monotone,
idempotent), then forgetting the order yields an idempotent endomap on the
underlying set. The converse is false: an idempotent endomap need not be monotone
or extensive with respect to any nontrivial order.  In particular, the
dynamics-induced operators $E_{\tau,f}$ (defined later on probability simplices)
are treated as \emph{(approximate) idempotent endomaps} with fixed points and
idempotence defects measured in a chosen metric, rather than as order-closures.
We will sometimes call an idempotent endomap a \emph{semantic completion map} for a theory:
its fixed points are the internally complete descriptions with respect to that completion.
\end{remark}

\begin{remark}[Formalization note]
The constructions in this subsection are formalized in Lean:
\begin{itemize}
  \item \texttt{formal/ClosureLadder/}\\
  \texttt{IdempotentEndo.lean}
  \item Definitions: \texttt{ClosureLadder.IdempotentEndo}\\
  \texttt{ClosureLadder.IdempotentEndo.FixedPoints}.
  \item Bridge lemma:\\
  \texttt{ClosureLadder.\allowbreak ClosureOperator.}\\
  \texttt{toIdempotentEndo}.
\end{itemize}
\end{remark}

\subsection{Dynamics-induced empirical endomaps}
\label{sec:empirical-closure}

Throughout this section we assume \textbf{A\_FIN} and fix a finite state space $Z$.
Let $P$ be a (row-stochastic) Markov kernel on $Z$, and let $\Delta(Z)$ denote the simplex
of probability distributions on $Z$ (row vectors).
Fix a \emph{lens} $f:Z\to X$ (assumption \textbf{A\_LENS}), with fibers (blocks)
$B_x:=f^{-1}(x)$.

\begin{definition}[Coarse map and canonical lift; \textbf{D-IC-01}]
\label{def:D-IC-01}
Let $Q_f:\Delta(Z)\to \Delta(X)$ be the pushforward (coarse-graining) map
\[
(Q_f\mu)(x) \;:=\; \sum_{z\in B_x}\mu(z).
\]
Fix a \emph{canonical lift} $U_f:\Delta(X)\to \Delta(Z)$ specified by a choice of
\emph{prototype} distributions $(u_x)_{x\in X}$ with $\mathrm{supp}(u_x)\subseteq B_x$ and
$\sum_{z\in B_x}u_x(z)=1$, by setting
\[
(U_f\nu)(z) \;:=\; \sum_{x\in X}\nu(x)\,u_x(z).
\]
Equivalently, $U_f\delta_x=u_x$ and $Q_f U_f = \mathrm{id}_{\Delta(X)}$.

\paragraph{Existence Requires Choosing a Scale.}
For an integer time scale $\tau\ge 1$, define the \emph{induced empirical endomap}
\[
E_{\tau,f}:\Delta(Z)\to \Delta(Z),
\qquad
E_{\tau,f}(\mu) \;:=\; U_f\bigl(Q_f(\mu P^\tau)\bigr).
\]
\end{definition}

The pair $(f,E_{\tau,f})$ may be viewed as an \emph{empirical theory at scale $\tau$}:
$f$ specifies the expressive content (definability), while $E_{\tau,f}$ specifies a
semantics/completion rule whose fixed points are the objects recognized by the theory.

\begin{remark}[Not an order-closure]
The map $E_{\tau,f}$ is an affine endomap of $\Delta(Z)$ (indeed a Markov kernel on $Z$),
and it is always idempotent as an \emph{endomap} only in special regimes.
We treat $E_{\tau,f}$ within the ``fixed points of (approx) idempotents'' framework
(\S\ref{sec:idempotent-endo}), not as an order-closure operator.
See the repo spec note \texttt{docs/spec/empirical-vs-order-closure.md}.
\end{remark}

\begin{definition}[TV idempotence defect; \textbf{D-IC-02}]
\label{def:D-IC-02}
Let $\|\cdot\|_{\mathrm{TV}}$ denote total variation on $\Delta(Z)$, i.e.
$\|\mu-\nu\|_{\mathrm{TV}}=\tfrac12\|\mu-\nu\|_1$.
The \emph{idempotence defect} of $E_{\tau,f}$ is
\[
\delta_{\tau,f}
\;:=\;
\sup_{\mu\in \Delta(Z)} \bigl\|E_{\tau,f}(E_{\tau,f}(\mu)) - E_{\tau,f}(\mu)\bigr\|_{\mathrm{TV}}.
\]
\emph{Guardrail.} A small idempotence defect $\delta(E)$ is a \emph{saturation} diagnostic (reapplying $E$ changes $E(\mu)$ only slightly), but it does not by itself certify \emph{nontrivial} emergence or multiplicity (e.g., a constant map has $\delta(E)=0$ but only one fixed point). Any “multiple objects” conclusion therefore requires a separate nontriviality condition (for example: at least two distinct stable labels/prototypes at the chosen tolerance).
Since $\mu\mapsto \| \mu M\|_1$ is convex for fixed $M$ and $\Delta(Z)$ is the convex hull
of the Dirac masses $(\delta_z)_{z\in Z}$, the supremum is attained on extreme points:
\[
\delta_{\tau,f}
\;=\;
\max_{z\in Z} \bigl\|\delta_z(E_{\tau,f}^2 - E_{\tau,f})\bigr\|_{\mathrm{TV}}
\;=\;
\frac12\max_{z\in Z}\sum_{z'\in Z}\bigl|(E_{\tau,f}^2 - E_{\tau,f})(z,z')\bigr|.
\]
\end{definition}

\begin{definition}[Prototype stability at scale $\tau$]
For each $x\in X$ define the (scale-$\tau$) stability of the prototype $u_x$ by
\[
\begin{aligned}
s_{\tau,f}(x) &:= \bigl\|E_{\tau,f}(u_x)-u_x\bigr\|_{\mathrm{TV}}\\
&= \bigl\|U_f(Q_f(u_xP^\tau)) - U_f(\delta_x)\bigr\|_{\mathrm{TV}}.
\end{aligned}
\]
We say $u_x$ is \emph{$\varepsilon$-stable} if $s_{\tau,f}(x)\le \varepsilon$.
\end{definition}

\begin{theorem}[Approximate idempotence from retention]
\label{thm:T-IC-02}
\textbf{T-IC-02.}\par
Assume \textbf{A\_FIN}+\textbf{A\_LENS} and fix $E_{\tau,f}$ as in Definition~\ref{def:D-IC-01}.
Define a \emph{retention error}
\[
\varepsilon_{\tau,f}
\;:=\;
\max_{x\in X}\bigl\|Q_f(u_xP^\tau)-\delta_x\bigr\|_{\mathrm{TV}}.
\]
Then:
\begin{enumerate}
\item (Prototype stability bound) For all $x\in X$, $s_{\tau,f}(x)\le \varepsilon_{\tau,f}$.
\item (Idempotence defect bound) $\delta_{\tau,f}\le \varepsilon_{\tau,f}$.
In particular, if $\varepsilon_{\tau,f}=0$ then $E_{\tau,f}$ is exactly idempotent.
\end{enumerate}
\end{theorem}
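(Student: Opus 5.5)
The plan is to reduce both claims to one fact: the lift $U_f$ is a Markov kernel from $X$ to $Z$, so it is a TV contraction, and it is in fact a TV isometry because the prototypes $u_x$ have disjoint supports. Concretely, $U_f\nu = \sum_x \nu(x)u_x$ with $\mathrm{supp}(u_x)\subseteq B_x$ and the blocks disjoint. Hence for signed measures $\nu,\nu'$ on $X$ with equal total mass,
\[
\|U_f\nu-U_f\nu'\|_{\mathrm{TV}}=\tfrac12\sum_x|\nu(x)-\nu'(x)|=\|\nu-\nu'\|_{\mathrm{TV}}.
\]
Even plain contraction would suffice for both claims below.

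\emph{Part 1 (prototype stability).} By definition, $s_{\tau,f}(x)=\|U_f(Q_f(u_xP^\tau))-U_f(\delta_x)\|_{\mathrm{TV}}$. The isometry (or contraction) of $U_f$ bounds this by $\|Q_f(u_xP^\tau)-\delta_x\|_{\mathrm{TV}}\le\varepsilon_{\tau,f}$. This step is immediate.

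\emph{Part 2 (idempotence defect).} Write $\nu:=Q_f(\mu P^\tau)\in\Delta(X)$, so that $E_{\tau,f}(\mu)=U_f\nu=\sum_x\nu(x)u_x$. The map $E_{\tau,f}$ is affine; more precisely it is linear in $\mu$ and preserves mass, being $\mu\mapsto\mu M$ for the Markov matrix $M=P^\tau\,\Pi_f\,U$. Hence
\[
E_{\tau,f}(E_{\tau,f}(\mu))-E_{\tau,f}(\mu)=\sum_x\nu(x)\bigl(E_{\tau,f}(u_x)-u_x\bigr).
\]
Here I use $\sum_x\nu(x)u_x=U_f\nu$. The triangle inequality and convexity of TV then give
\[
\|E^2_{\tau,f}\mu-E_{\tau,f}\mu\|_{\mathrm{TV}}\le\sum_x\nu(x)\,s_{\tau,f}(x)\le\max_x s_{\tau,f}(x)\le\varepsilon_{\tau,f},
\]
using Part 1 in the last step. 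Taking the supremum over $\mu$ yields $\delta_{\tau,f}\le\varepsilon_{\tau,f}$. Alternatively, I can use the extreme-point formula from Definition~\ref{def:D-IC-02}: for $\mu=\delta_z$, $\nu$ is a probability vector, and the same bound follows.

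\emph{Exact case.} If $\varepsilon_{\tau,f}=0$, every difference above vanishes, so $E_{\tau,f}^2=E_{\tau,f}$.

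The only real care point is writing $E_{\tau,f}$ as a linear (matrix) action on $\Delta(Z)$, so that $E_{\tau,f}$ applied to the mixture $\sum_x\nu(x)u_x$ equals $\sum_x\nu(x)E_{\tau,f}(u_x)$. This holds because $\mu\mapsto\mu P^\tau$, $Q_f$, and $U_f$ are each linear on measures. I also need to confirm that TV of the mixed difference is bounded by the mixture of TVs; this follows from subadditivity and homogeneity of $\|\cdot\|_1$ with nonnegative weights $\nu(x)$ summing to $1$. No step appears to be a genuine obstacle.
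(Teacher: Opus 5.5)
Your proof is correct and follows the paper's argument essentially step for step. For part (1) you use that the stochastic lift $U_f$ contracts total variation, and for part (2) you write $E_{\tau,f}(\mu)=\sum_x \nu(x)u_x$ with $\nu=Q_f(\mu P^\tau)$, then combine linearity, the triangle inequality, and part (1). Your extra observation that $U_f$ is in fact a TV isometry (because the prototype supports are disjoint) is a harmless sharpening, giving $s_{\tau,f}(x)=\|Q_f(u_xP^\tau)-\delta_x\|_{\mathrm{TV}}$ exactly, though as you note, contraction alone suffices.
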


\begin{proof}
Since $U_f$ is a stochastic map, it is a contraction in total variation.
Thus for each $x$,
\[
\begin{aligned}
\|E_{\tau,f}(u_x)-u_x\|_{\mathrm{TV}}
&=
\|U_f(Q_f(u_xP^\tau)) - U_f(\delta_x)\|_{\mathrm{TV}}\\
&\le
\|Q_f(u_xP^\tau)-\delta_x\|_{\mathrm{TV}}
\le \varepsilon_{\tau,f}.
\end{aligned}
\]
This proves (1).

For (2), write any $\mu\in\Delta(Z)$ as $\mu_1:=E_{\tau,f}(\mu)=\sum_{x\in X}w_x u_x$
where $w:=Q_f(\mu P^\tau)\in \Delta(X)$ (so $w_x\ge 0$ and $\sum_x w_x=1$).
Then by linearity,
\[
E_{\tau,f}(\mu_1) - \mu_1
=
\sum_{x\in X} w_x\bigl(E_{\tau,f}(u_x)-u_x\bigr).
\]
Taking TV norms and using the bound from (1),
\[
\begin{aligned}
\|E_{\tau,f}(\mu_1) - \mu_1\|_{\mathrm{TV}}
&\le
\sum_{x\in X}w_x \|E_{\tau,f}(u_x)-u_x\|_{\mathrm{TV}}\\
&\le
\sum_{x\in X}w_x \varepsilon_{\tau,f}
=
\varepsilon_{\tau,f}.
\end{aligned}
\]
Now take the supremum over $\mu$ to obtain $\delta_{\tau,f}\le \varepsilon_{\tau,f}$.
\end{proof}

Refinement is not monotone-good in general: it can reveal additional stable objects when aligned with metastable structure, but it can also destroy stability when it misaligns with mixing/leakage scales.

\begin{theorem}[Refinement can help or hurt]
\label{thm:T-IC-01}
\textbf{T-IC-01.}\par
Assume \textbf{A\_FIN}+\textbf{A\_LENS}.
There exist finite Markov kernels $P$ and times $\tau$ together with two lenses
\[
f_{\mathrm{fine}}:Z\to X_{\mathrm{fine}},\qquad f_{\mathrm{coarse}}:Z\to X_{\mathrm{coarse}},
\]
such that $f_{\mathrm{coarse}}$ is a coarsening of $f_{\mathrm{fine}}$ (the fine partition refines the coarse),
and for the canonical lift $U_f$ defining $E_{\tau,f}$ (Definition~\ref{def:D-IC-01}),
the number of $\varepsilon$-stable prototypes can either increase or decrease under refinement
for a fixed stability threshold $\varepsilon>0$.

More precisely, both behaviors occur:
\begin{enumerate}
\item (\emph{Refinement reveals objects}) in one example, refinement increases the number of
$\varepsilon$-stable prototypes.
\item (\emph{Refinement destroys objects}) in another example, the stable count decreases.
\end{enumerate}
\end{theorem}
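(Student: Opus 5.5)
This is an existence statement, so the plan is to exhibit two small explicit examples and compute the stable counts directly. By the first part of Theorem~\ref{thm:T-IC-02}, $s_{\tau,f}(x)\le \varepsilon_{\tau,f}$. Moreover, $U_f$ maps $\delta_{x'}$ to $u_{x'}$, and these prototypes have disjoint supports, so $U_f$ is an isometry in total variation. Hence
\[
s_{\tau,f}(x)=\bigl\|Q_f(u_xP^\tau)-\delta_x\bigr\|_{\mathrm{TV}} = 1-(u_xP^\tau)(B_x),
\]
which is the probability of leaving block $B_x$ after $\tau$ steps when starting from $u_x$. Counting $\varepsilon$-stable prototypes therefore reduces to computing these block-retention probabilities. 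I will use uniform prototypes $u_x$ on each block, $\tau=1$, and $\varepsilon=1/4$.

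\emph{Refinement reveals objects.} Take $Z=\{1,2\}$ with $P=\mathrm{id}$ (or a lazy kernel with tiny switching probability $\eta$). For the coarse lens $f_{\mathrm{coarse}}$ with the single block $Z$, the retention error is $0$, so there is exactly one stable prototype. For the fine lens $f_{\mathrm{fine}}=\mathrm{id}$, each singleton has leakage $\eta\le\varepsilon$, so there are two stable prototypes. The count increases from $1$ to $2$. To avoid relying on a trivial kernel, I can instead take $Z=\{1,2,3,4\}$ with two metastable pairs $\{1,2\}$ and $\{3,4\}$: moves inside a pair and between pairs are each rare. Then the coarse lens with one block gives count $1$, and the fine lens aligned with the pairs gives count $2$.

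\emph{Refinement destroys objects.} Take $Z=\{1,2,3,4\}$ with blocks $A=\{1,2\}$ and $B=\{3,4\}$. Let $P$ swap $1\leftrightarrow 2$ and $3\leftrightarrow 4$ deterministically, i.e.\ fast mixing inside each block with no leakage between blocks. The coarse lens with blocks $A,B$ has zero leakage, so both prototypes are stable and the count is $2$. The fine lens $\mathrm{id}$ has leakage $1$ for every singleton, so the count is $0$. The count decreases from $2$ to $0$. In both examples the fine partition refines the coarse one, as the statement requires.

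The only real subtlety is the leakage identity. Writing $v=Q_f(u_xP^\tau)$, we have $\|v-\delta_x\|_{\mathrm{TV}}=1-v(x)$ because $v$ is a probability vector, and the isometry claim holds because the $u_{x'}$ have disjoint supports. With that identity in hand, all four counts are direct arithmetic. If desired, these can be cross-checked against the appendix's Python evidence.
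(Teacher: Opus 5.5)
Your proof is correct and follows the paper's own approach: explicit block constructions in which prototype stability is governed by block leakage, with refinement helping when it aligns with metastable structure and hurting when it cuts across fast mixing (your swap example plays the role of the paper's large-$p$ case). Your observation that disjoint prototype supports make $U_f$ a TV isometry, so that $s_{\tau,f}(x)=1-(u_xP^\tau)(B_x)$ exactly, is a useful sharpening: the ``destroys'' direction needs a lower bound on $s_{\tau,f}(x)$, which the upper bound of Theorem~\ref{thm:T-IC-02} alone does not supply and which the paper's sketch leaves implicit.
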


\begin{proof}[Proof idea by explicit two-block constructions]
The mechanism is controlled by the retention errors $\varepsilon_{\tau,f}$ of
Theorem~\ref{thm:T-IC-02}.

Consider a state space split into two blocks $B_0,B_1$ and a lift $U_f$ whose prototypes
$u_0,u_1$ are supported on the blocks. If the dynamics jump from $B_0$ to $B_1$ with
probability $p$ each step (and otherwise remain in the current block), then starting from $u_0$,
the probability of exiting $B_0$ within $\tau$ steps is $1-(1-p)^\tau$.
Hence $\|Q_f(u_0P^\tau)-\delta_0\|_{\mathrm{TV}}$ is on the order of $1-(1-p)^\tau$,
and similarly for $u_1$.

Now take a coarse lens that merges two metastable regions into one block, and a refinement that
splits them. If $p$ is very small (relative to $\tau$), the refined prototypes have small
retention error and are stable, so refinement increases the stable count.
If $p$ is not small (relative to $\tau$), the refined prototypes rapidly mix across the split
boundary, their retention errors exceed the stability threshold, and refinement decreases the
stable count.  \qedhere
\end{proof}

\begin{remark}[Qualified ``theory birth'']
The induced endomap $E_{\tau,f}$ packages a distribution by evolving for $\tau$ steps and then
forgetting micro-detail below the lens $f$, re-instantiating it via the chosen prototypes.
Theorem~\ref{thm:T-IC-02} shows that a small retention error implies small idempotence defect.
In this regime, the (approximately) stable prototypes $u_x$ behave as emergent ``objects'' at
scale $\tau$ in the theory $f$.
Refinement changes the lens and can therefore reveal or destroy such objects
(Theorem~\ref{thm:T-IC-01}).
\end{remark}

% Scope IDs: see docs/spec/theorem-inventory.md and paper-contract.md
\section{AUT + REV + ACC regime and graph 1-forms}
\label{sec:acc}

This section isolates the representation-independent content behind the admissibility regime
\textbf{A\_AUT}+\textbf{A\_REV}+\textbf{A\_ACC} that we use throughout: the obstruction to
``no-drive'' behavior is a cohomological (cycle) obstruction on the bidirected support graph.
Our presentation is discrete-time and purely graph-theoretic; any continuous-time (CTMC) reading
is optional and out of scope for the main text.

\subsection{Bidirected support and the log-ratio 1-form}

Assume \textbf{A\_FIN} and let $P$ be a time-homogeneous Markov kernel on a finite state space $Z$.
Write $P_{ij}:=P(i,j)$.
Under \textbf{A\_REV} (microreversibility on support), we have
\[
P_{ij}>0 \implies P_{ji}>0,
\]
so every allowed transition comes with a reverse transition.

Define the (undirected) \emph{support graph} $G=(Z,E)$ by declaring $\{i,j\}\in E$ iff $P_{ij}>0$
(equivalently $P_{ji}>0$ by \textbf{A\_REV}). Let $\vec E$ be the corresponding set of
\emph{oriented} edges: for each $\{i,j\}\in E$ we include both $(i,j)$ and $(j,i)$.

\begin{definition}[Edge log-ratio 1-form]
\label{def:edge-one-form}
Assume \textbf{A\_FIN}+\textbf{A\_REV}. The \emph{antisymmetric edge 1-form} associated to $P$ is
the function $a:\vec E\to \mathbb R$ defined by
\[
a_{ij} := a(i,j) := \log\frac{P_{ij}}{P_{ji}} \qquad \text{for } (i,j)\in \vec E.
\]
Then $a_{ji}=-a_{ij}$.
\end{definition}

The cycle/affinity viewpoint for Markov networks and its cohomological structure is standard; see
e.g.\ \cite{Schnakenberg1976,Polettini2012,AltanerEtAl2012,DalCengioLecomtePolettini2023}.

\begin{remark}[Why \textbf{A\_REV} matters]
If \textbf{A\_REV} fails, then ratios such as $P_{ij}/P_{ji}$ may be undefined, and many
time-reversal quantities become infinite rather than merely nonzero. In this paper we treat
\textbf{A\_REV} as the basic finiteness/admissibility condition for edge log-ratios.
\end{remark}

\subsection{Cycle integrals, exactness, and the null regime}

A (simple) cycle in $G$ is a sequence $(v_0,v_1,\dots,v_m=v_0)$.
We require $\{v_k,v_{k+1}\}\in E$ for all $k$.

\begin{definition}[Cycle integral / affinity along a cycle]
\label{def:cycle-integral}
Given an antisymmetric edge 1-form $a$ and an oriented cycle $\gamma=(v_0,\dots,v_m=v_0)$, define
\[
\mathcal A(\gamma) := \sum_{k=0}^{m-1} a_{v_k v_{k+1}}.
\]
Reversing the orientation negates $\mathcal A(\gamma)$.
\end{definition}

\begin{definition}[Exact 1-forms]
\label{def:exact-one-form}
An antisymmetric 1-form $a$ on $\vec E$ is \emph{exact} if there exists a potential
$\Phi:Z\to\mathbb R$ such that for every $(i,j)\in \vec E$,
\[
a_{ij} = \Phi(j)-\Phi(i).
\]
We write $a=d\Phi$ in this case.
\end{definition}

\begin{lemma}[Exact forms have zero cycle integrals]
\label{lem:exact-zero-cycles}
\mbox{}\par
Assume \textbf{A\_FIN}+\textbf{A\_REV}.
If $a=d\Phi$ is exact, then $\mathcal A(\gamma)=0$ for every cycle $\gamma$ in $G$.
\end{lemma}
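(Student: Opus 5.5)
The plan is a direct telescoping argument. Fix a cycle $\gamma=(v_0,v_1,\dots,v_m=v_0)$ in $G$. The definition of a cycle requires $\{v_k,v_{k+1}\}\in E$, so each consecutive pair $(v_k,v_{k+1})$ is an oriented edge in $\vec E$. By Definition~\ref{def:exact-one-form}, exactness $a=d\Phi$ therefore gives, for every $k$,
\[
a_{v_kv_{k+1}}=\Phi(v_{k+1})-\Phi(v_k).
\]

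Substituting this into Definition~\ref{def:cycle-integral} produces
\[
\mathcal A(\gamma)=\sum_{k=0}^{m-1}\bigl(\Phi(v_{k+1})-\Phi(v_k)\bigr).
\]
This sum telescopes to $\Phi(v_m)-\Phi(v_0)$, which is zero because the cycle closes, $v_m=v_0$.

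The only point that needs care, and it is hardly an obstacle, is checking that every term $a_{v_kv_{k+1}}$ is defined and covered by the exactness identity. Under \textbf{A\_REV} this holds: each undirected support edge contributes both orientations to $\vec E$, so the identity applies no matter which direction the cycle traverses an edge. Finiteness (\textbf{A\_FIN}) guarantees the sum is finite. Neither the simplicity of the cycle nor the orientation chosen for $\gamma$ plays any role.
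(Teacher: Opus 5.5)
Your proposal is correct and matches the paper's proof: substitute $a_{v_kv_{k+1}}=\Phi(v_{k+1})-\Phi(v_k)$ along the cycle and telescope to $\Phi(v_m)-\Phi(v_0)=0$. Your extra check that each consecutive pair lies in $\vec E$ (using \textbf{A\_REV}) is sound bookkeeping that the paper leaves implicit.
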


\begin{proof}
Along a cycle $\gamma=(v_0,\dots,v_m=v_0)$ we have
\[
\mathcal A(\gamma)=\sum_{k=0}^{m-1}\big(\Phi(v_{k+1})-\Phi(v_k)\big)=\Phi(v_m)-\Phi(v_0)=0.
\]
\end{proof}

\paragraph{Force Lives on Loops.}
\begin{theorem}[Cycle criterion for exactness]
\label{thm:cycle-criterion-exact}
Assume \textbf{A\_FIN}+\textbf{A\_REV} and let $a$ be an antisymmetric 1-form on the oriented edge
set $\vec E$ of $G$. The following are equivalent:
\begin{enumerate}
\item $a$ is exact: $a=d\Phi$ for some $\Phi:Z\to\mathbb R$.
\item $\mathcal A(\gamma)=0$ for every cycle $\gamma$ in $G$.
\item $\mathcal A(\gamma)=0$ for every cycle $\gamma$ in some fixed cycle basis of $G$
(e.g.\ a fundamental cycle basis from a spanning tree).
\end{enumerate}
\end{theorem}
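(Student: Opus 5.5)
We will prove the cycle of implications (1)$\Rightarrow$(2)$\Rightarrow$(3)$\Rightarrow$(1). The first implication is exactly Lemma~\ref{lem:exact-zero-cycles}. The second is immediate, since a cycle basis consists of cycles of $G$. All the content lies in (3)$\Rightarrow$(1).

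For (3)$\Rightarrow$(1), we work one connected component of $G$ at a time, since a potential may be chosen independently on each component. On a component $C$ we fix a spanning tree $T_C$ and a root $r_C$, and set $\Phi(r_C):=0$. For $v\in C$ we define $\Phi(v):=\sum a_{v_kv_{k+1}}$, summing along the unique tree path from $r_C$ to $v$. For a tree edge $(i,j)$, the tree paths to $i$ and $j$ differ by that single edge. Antisymmetry $a_{ji}=-a_{ij}$ then gives $a_{ij}=\Phi(j)-\Phi(i)$ in either orientation.

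For a non-tree edge $\{i,j\}$, consider its fundamental cycle $\gamma_{ij}$: traverse $i\to j$, then return along the tree path from $j$ back to $i$. The tree portion of this cycle contributes $\Phi(i)-\Phi(j)$, by the tree-edge identity just established and telescoping. Hence $\mathcal A(\gamma_{ij})=a_{ij}+\Phi(i)-\Phi(j)$, and the hypothesis $\mathcal A(\gamma_{ij})=0$ yields $a_{ij}=\Phi(j)-\Phi(i)$. Antisymmetry covers the reverse orientation, so $a=d\Phi$ on all of $\vec E$.

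The main obstacle is that (3) is stated for \emph{some} fixed cycle basis, not necessarily the fundamental basis of the chosen tree. To handle a general basis, we would pass to the linear-algebra viewpoint. The map $\gamma\mapsto\mathcal A(\gamma)$ extends linearly to the cycle space $\ker\partial\subseteq\mathbb R^{E}$, viewed as integer or real edge chains with orientation signs; the cycle integral is then the pairing $\langle a,\cdot\rangle$. Any cycle of $G$, in particular each fundamental cycle $\gamma_{ij}$, is an integer (hence real) linear combination of the basis cycles. Vanishing on a basis therefore forces vanishing on all fundamental cycles, which reduces us to the tree argument above.

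One point needs care here: a ``cycle basis'' should be understood as a basis of the real (or integral) cycle space, not a mod-2 basis, so that linearity of the pairing applies. We would state this convention explicitly. With a mod-2 basis the implication could fail, because a real combination would not be available.
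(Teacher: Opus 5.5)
Your proof is correct and follows the paper's argument. The implications (1)$\Rightarrow$(2) and (2)$\Rightarrow$(3) are handled the same way. For (3)$\Rightarrow$(1), you also build $\Phi$ along a rooted spanning tree of each component and force each chord through its fundamental cycle.

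The one real difference is your reduction from an arbitrary basis to the fundamental one. The paper's sketch applies the hypothesis directly to the fundamental cycles of the tree it chose. That is literally what (3) gives only when the fixed basis happens to be that fundamental basis. Your observation closes this step, which the paper leaves implicit: $\gamma\mapsto\mathcal A(\gamma)$ is the restriction of the $\mathbb R$-linear pairing $\langle a,\cdot\rangle$ to the cycle space, so it vanishes on all of it once it vanishes on a basis.

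Two small corrections to that addition.

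First, for a general real basis, a cycle of $G$ is a real (rational) combination of the basis cycles, but not necessarily an integer one, since rational cycle bases need not be integral. So drop ``integer (hence)''. Real coefficients are all your linearity argument uses.

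Second, your caveat about mod-2 bases is unfounded: the implication does not fail there.
\begin{itemize}
\item Suppose $\beta_1$ cycles are independent over $\mathbb F_2$. Then some $\beta_1\times\beta_1$ minor of their $0/1$ cycle--edge matrix is odd.
\item Orient the cycles. The same minor of the resulting signed ($0/\pm1$) cycle--edge matrix has the same parity, so it is nonzero.
\item Hence the oriented cycles are linearly independent over $\mathbb R$. By the dimension count $\beta_1=|E|-|V|+c(G)$ for the real cycle space, they form a real basis.
\item Vanishing of $\mathcal A(\gamma)$ does not depend on the orientation of $\gamma$.
\end{itemize}
So (3)$\Rightarrow$(1) holds for mod-2 cycle bases too, and no extra convention is needed.
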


\begin{proof}[Proof sketch]
(1)$\Rightarrow$(2) is Lemma~\ref{lem:exact-zero-cycles}. (2)$\Rightarrow$(3) is immediate.

For (3)$\Rightarrow$(1), fix a spanning tree $T$ of each connected component of $G$.
Choose a root $r$ in each component and define $\Phi(r)=0$; for any vertex $v$ in that component,
let $\Phi(v)$ be the sum of $a$ along the unique tree path from $r$ to $v$ (with signs determined
by orientations).
This makes $a_{ij}=\Phi(j)-\Phi(i)$ hold for every tree edge $(i,j)$.
For any chord edge $e=(u,v)\notin T$, the fundamental cycle $\gamma_e$ is the tree path from $u$
to $v$ together with $e$; the hypothesis $\mathcal A(\gamma_e)=0$ forces $a_{uv}=\Phi(v)-\Phi(u)$
on the chord as well.
Thus $a=d\Phi$ on all of $\vec E$.
\end{proof}

\begin{corollary}[Null regime as ``no-drive'' / detailed-balance-like]
\label{cor:null-regime}
Assume \textbf{A\_FIN}+\textbf{A\_REV}+\textbf{A\_NULL} and let $a$ be the log-ratio 1-form on the bidirected support graph.
Then $a$ is exact.
Equivalently, there exists $\Phi:Z\to\mathbb R$ such that
\[
\log\frac{P_{ij}}{P_{ji}} = \Phi(j)-\Phi(i) \quad \text{for all } (i,j)\in \vec E.
\]
In particular, defining $\pi(i)\propto e^{\Phi(i)}$ yields the detailed balance relations
$\pi(i)P_{ij}=\pi(j)P_{ji}$ on support.
\end{corollary}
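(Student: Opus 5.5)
The plan is to deduce the corollary directly from Theorem~\ref{thm:cycle-criterion-exact} and then check the detailed-balance identity by a one-line computation. First, \textbf{A\_REV} guarantees that every support edge is bidirected. Hence the log-ratio 1-form $a_{ij}=\log(P_{ij}/P_{ji})$ of Definition~\ref{def:edge-one-form} is finite and antisymmetric on all of $\vec E$, so Theorem~\ref{thm:cycle-criterion-exact} applies to it. Next, \textbf{A\_NULL} states that every affinity coordinate, i.e.\ every cycle integral, vanishes. I will read it as condition (2) of the theorem, or at least condition (3) for a fixed fundamental cycle basis. Either way, the implication (3)$\Rightarrow$(1) (or (2)$\Rightarrow$(1)) yields $\Phi:Z\to\mathbb R$ with $a=d\Phi$, which is exactly the displayed identity $\log(P_{ij}/P_{ji})=\Phi(j)-\Phi(i)$.

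For the detailed-balance clause, I set $\pi(i)=e^{\Phi(i)}/\mathcal Z$ with $\mathcal Z=\sum_k e^{\Phi(k)}$, which is finite and positive because $Z$ is finite. Exponentiating the exactness identity gives $P_{ij}/P_{ji}=e^{\Phi(j)}/e^{\Phi(i)}$ on every $(i,j)\in\vec E$. Rearranging, $e^{\Phi(i)}P_{ij}=e^{\Phi(j)}P_{ji}$, and dividing by $\mathcal Z$ gives $\pi(i)P_{ij}=\pi(j)P_{ji}$. Off support, both sides vanish by \textbf{A\_REV}, so the relation in fact holds for all pairs $i,j$. If desired, summing over $i$ then shows that $\pi$ is stationary.

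The only step needing care is the interpretation of \textbf{A\_NULL}, which is stated in terms of ``basis affinities'' $A_r=0$ under the \textbf{A\_ACC} decomposition rather than literally as vanishing of all cycle integrals. I will argue that the basis affinities are the cycle integrals of $a$ over a cycle basis (the $H^1$ coordinates), so their vanishing is condition (3). The equivalence in Theorem~\ref{thm:cycle-criterion-exact} then makes the choice of basis irrelevant. A second minor point is that when $G$ is disconnected, $\Phi$ is determined only up to an additive constant on each component. This does not affect the argument, since any choice gives a valid $\pi$, merely a non-unique one.
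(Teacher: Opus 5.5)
Your proposal is correct and follows the paper's intended route. You read \textbf{A\_NULL} as condition (2)/(3) of Theorem~\ref{thm:cycle-criterion-exact}, obtain exactness $a=d\Phi$, and exponentiate to get $\pi(i)P_{ij}=\pi(j)P_{ji}$; your additional remarks (vanishing of both sides off support by \textbf{A\_REV}, stationarity by summing over $i$, and per-component non-uniqueness of $\Phi$) are accurate details the paper leaves implicit.
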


\subsection{Accounting as coordinates on cycle space}

\paragraph{Drive Is Coordinate-Free.}
Let $\beta_1(G)$ be the cycle rank of the undirected support graph.
Choosing a cycle basis $(\gamma_1,\dots,\gamma_{\beta_1})$ defines a coordinate vector of
\emph{affinities}
\[
A_r := \mathcal A(\gamma_r), \qquad r=1,\dots,\beta_1(G).
\]
Changing the cycle basis changes coordinates but not the underlying obstruction: $A=0$ in one
basis iff $A=0$ in every basis (Theorem~\ref{thm:cycle-criterion-exact}).

We regard \textbf{A\_ACC} as the decision to work with this decomposition viewpoint:
the antisymmetric log-ratio 1-form $a$ splits into an exact (gauge) part plus a cycle/affinity
part, and the affinity part vanishes exactly in the null regime \textbf{A\_NULL}.

\begin{remark}[Implementation note for Appendix C]
The repository includes a small graph-theoretic routine that computes cycle affinities and tests
exactness on finite Markov kernels (reversible vs biased 3-cycle examples). We relegate all such
numerical evidence to Appendix~C per the paper contract.
\end{remark}

% Scope IDs: see docs/spec/theorem-inventory.md and paper-contract.md
\section{Arrow-of-time as path reversal asymmetry; data processing; protocol trap}\label{sec:aot}

Throughout this section we assume \textbf{A\_FIN}. Let $Z$ be a finite state space and let
$P:Z\times Z\to [0,1]$ be a time-homogeneous Markov kernel.

\begin{definition}[Forward path law and reversal (D-AOT-01)]
Fix a horizon $T\in \mathbb N$ with $T\ge 1$ and an initial distribution $\rho\in\Delta(Z)$.
The (forward) length-$T$ path law $\mathbb P_{\rho,T}$ is the probability measure on $Z^{T+1}$ defined by
\[
\mathbb P_{\rho,T}(z_0,\dots,z_T)
:= \rho(z_0)\prod_{t=0}^{T-1} P(z_t,z_{t+1}).
\]
Let $\mathcal R:Z^{T+1}\to Z^{T+1}$ be the reversal map $\mathcal R(z_0,\dots,z_T)=(z_T,\dots,z_0)$, and write $\mathcal{R}_{*}\mu$ for pushforward of a measure $\mu$.
Define the (finite-horizon) \emph{path reversal asymmetry}
\[
\Sigma_T(\rho)\;:=\; D_{\mathrm{KL}}\!\left(\mathbb P_{\rho,T}\,\middle\|\,\mathcal{R}_{*}\mathbb P_{\rho,T}\right)\in[0,\infty].
\]
We use the standard convention $D_{\mathrm{KL}}(p\|q)=+\infty$ if $p$ is not absolutely continuous with respect to $q$.
\end{definition}

Classical path-space fluctuation formulations of time-reversal asymmetry appear in \cite{LebowitzSpohn1999}.

\begin{remark}[Stationarity is not required (L-AOT-STAT-01)]
The quantity $\Sigma_T(\rho)$ is defined for any initial distribution $\rho$; it can be strictly positive even in a detailed-balance chain when $\rho$ is not stationary (a boundary-term effect).
Stationarity is only needed when we want a \emph{steady-state} interpretation (entropy production rate), not for the definition of $\Sigma_T$ or for the data-processing inequality below.
\end{remark}

\begin{definition}[Steady-state entropy production (D-AOT-EP-01)]
Assume additionally \textbf{A\_STAT} and let $\pi$ be a stationary distribution for $P$.
The (finite-horizon) steady-state entropy production is $\mathrm{EP}_T:=\Sigma_T(\pi)$.
When a per-step rate is desired, we write $\mathrm{ep}:=\tfrac1T\Sigma_T(\pi)$ (and, under standard ergodicity conditions, one can pass to the limit $T\to\infty$).
\end{definition}

\subsection{Data processing: coarse-graining cannot create asymmetry}

\paragraph{No Fake Arrows.}
Now assume \textbf{A\_LENS} and fix a coarse-graining (lens) $f:Z\to X$.
Let $f^{T+1}:Z^{T+1}\to X^{T+1}$ be the coordinatewise map
\[
f^{T+1}(z_0,\dots,z_T)=(f(z_0),\dots,f(z_T)).
\]
Define the observed path law $\mathbb Q_{\rho,T}:=(f^{T+1})_*\mathbb P_{\rho,T}$.
Note that the observed process on $X$ need not be Markov; the DPI statement below is purely about path measures.

\begin{theorem}[Data processing for path reversal asymmetry (T-AOT-01)]\label{thm:dpi_path}
Under \textbf{A\_FIN}+\textbf{A\_LENS}, for every $T\ge 1$ and every initial distribution $\rho$,
\[
D_{\mathrm{KL}}\!\left(\mathbb Q_{\rho,T}\,\middle\|\,\mathcal{R}_{*}\mathbb Q_{\rho,T}\right)
\;\le\;
D_{\mathrm{KL}}\!\left(\mathbb P_{\rho,T}\,\middle\|\,\mathcal{R}_{*}\mathbb P_{\rho,T}\right)
=\Sigma_T(\rho).
\]
\end{theorem}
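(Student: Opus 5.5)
The plan is to reduce the statement to the ordinary data processing inequality for KL divergence under a deterministic measurable map. The one structural fact needed is that coarse-graining commutes with path reversal. Write $F:=f^{T+1}:Z^{T+1}\to X^{T+1}$, and let $\mathcal R$ denote reversal on either $Z^{T+1}$ or $X^{T+1}$. Since $F$ acts coordinatewise, we have
\[
F(\mathcal R(z_0,\dots,z_T)) = (f(z_T),\dots,f(z_0)) = \mathcal R(F(z_0,\dots,z_T)),
\]
that is, $F\circ\mathcal R=\mathcal R\circ F$. Functoriality of pushforward then gives
\[
\mathcal R_*\mathbb Q_{\rho,T}=\mathcal R_*F_*\mathbb P_{\rho,T}=(\mathcal R\circ F)_*\mathbb P_{\rho,T}=(F\circ\mathcal R)_*\mathbb P_{\rho,T}=F_*(\mathcal R_*\mathbb P_{\rho,T}).
\]
So the pair $(\mathbb Q_{\rho,T},\mathcal R_*\mathbb Q_{\rho,T})$ is exactly the image under the single map $F$ of the pair $(\mathbb P_{\rho,T},\mathcal R_*\mathbb P_{\rho,T})$. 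The left-hand side of the theorem is therefore $D_{\mathrm{KL}}(F_*p\|F_*q)$, where $p:=\mathbb P_{\rho,T}$ and $q:=\mathcal R_*\mathbb P_{\rho,T}$.

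It then suffices to prove $D_{\mathrm{KL}}(F_*p\|F_*q)\le D_{\mathrm{KL}}(p\|q)$ for probability measures on a finite set and any map $F$. We handle the infinite case first. If $D_{\mathrm{KL}}(p\|q)=+\infty$, there is nothing to show. Otherwise $p\ll q$, and then $F_*p\ll F_*q$: if $F_*q(y)=0$, then $q$ vanishes on the fiber $F^{-1}(y)$, hence so does $p$. Next we group the sum by fibers. For each $y$ with $F_*p(y)>0$, the log-sum inequality gives
\[
\sum_{w\in F^{-1}(y)} p(w)\log\frac{p(w)}{q(w)} \;\ge\; \Big(\sum_{w\in F^{-1}(y)}p(w)\Big)\log\frac{\sum_{w} p(w)}{\sum_{w} q(w)} = F_*p(y)\log\frac{F_*p(y)}{F_*q(y)}.
\]
This inequality is just Jensen's inequality applied to the convex function $t\mapsto t\log t$ with weights $q(w)/F_*q(y)$. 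Fibers with $F_*p(y)=0$ contribute $0$ on the right and a nonnegative amount on the left, under the stated convention $0\log(0/q)=0$. Summing over $y$ yields the inequality.

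The main point requiring care is the commutation identity $\mathcal R_*\mathbb Q_{\rho,T}=F_*(\mathcal R_*\mathbb P_{\rho,T})$. Without it, the reference measure on the coarse side would not be the pushforward of the fine reference measure, and the DPI would not apply. The identity depends on the lens acting identically at every time index, which is what makes $F$ equivariant for reversal. The remaining subtlety is the bookkeeping of the $+\infty$ and $0\log 0$ conventions, which the fiberwise argument above covers. No Markov property of the observed process and no stationarity of $\rho$ is used, consistent with the remarks preceding the theorem.
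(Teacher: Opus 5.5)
Your proposal is correct and follows the paper's proof. The paper likewise uses the commutation $\mathcal R\circ f^{T+1}=f^{T+1}\circ\mathcal R$ to identify $\mathcal R_*\mathbb Q_{\rho,T}$ with $(f^{T+1})_*(\mathcal R_*\mathbb P_{\rho,T})$, and then invokes contraction of KL under measurable maps; the only difference is that you prove that finite-set data processing step explicitly (fiberwise log-sum inequality plus the $+\infty$ and $0\log 0$ bookkeeping), whereas the paper cites it as standard.
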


\begin{proof}[Proof sketch]
The reversal map $\mathcal R$ commutes with coordinatewise coarse-graining:
\[
\mathcal R\circ f^{T+1}=f^{T+1}\circ \mathcal R.
\]
Hence $\mathcal{R}_{*}\mathbb Q_{\rho,T}=(f^{T+1})_*(\mathcal{R}_{*}\mathbb P_{\rho,T})$.
The claim is then the usual contraction of KL under measurable maps: for any measurable $g$, $D_{\mathrm{KL}}(g_*p\|g_*q)\le D_{\mathrm{KL}}(p\|q)$.
\end{proof}

Importantly, this monotonicity rules out \emph{false positives}: coarse observation can hide irreversibility but cannot create it.

\begin{corollary}[No false positives]
Assume \textbf{A\_FIN}+\textbf{A\_LENS}.
If the lifted system has zero asymmetry at horizon $T$, then every coarse observation has zero asymmetry at the same horizon.
There are no false positives from forgetting variables.
Coarse-graining can hide irreversibility (false negatives), but it cannot create it.
\end{corollary}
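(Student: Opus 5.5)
The corollary follows directly from Theorem~\ref{thm:dpi_path} together with nonnegativity of KL divergence. I would fix $T\ge 1$, an initial distribution $\rho$, and a lens $f:Z\to X$. I read ``zero asymmetry at horizon $T$'' for the lifted system as $\Sigma_T(\rho)=0$. Theorem~\ref{thm:dpi_path} then gives
\[
0\;\le\; D_{\mathrm{KL}}\!\left(\mathbb Q_{\rho,T}\,\middle\|\,\mathcal{R}_{*}\mathbb Q_{\rho,T}\right)\;\le\;\Sigma_T(\rho)=0.
\]
The lower bound is Gibbs' inequality on the finite alphabet $X^{T+1}$. Under the stated conventions it also covers the case where the divergence could be $+\infty$: the upper bound $0$ excludes that case, so the coarse divergence is finite and equal to $0$.

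Next I would record the equality case: $D_{\mathrm{KL}}(p\|q)=0$ iff $p=q$. This shows that $\mathbb Q_{\rho,T}=\mathcal R_*\mathbb Q_{\rho,T}$, so the observed path law is exactly reversal-invariant. This is the concrete meaning of ``no false positive.'' Since $f$ was arbitrary, the statement holds for every coarse observation. It also holds for every composite coarsening $g\circ f$, because $g\circ f$ is itself a lens on $Z$, or alternatively by applying Theorem~\ref{thm:dpi_path} twice. This matches the monotonicity slot in Definition~\ref{def:tk-theory-package}.

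For the informal ``can hide but not create'' clause, I would phrase it as the contrapositive: a positive coarse asymmetry forces $\Sigma_T(\rho)>0$. To show that the converse fails, I would give a one-line example. Take a biased $3$-cycle kernel with stationary $\rho$, which has $\Sigma_T(\rho)>0$, and collapse it with a constant lens $f:Z\to\{\ast\}$. The observed path law is then a point mass on the constant path, which is reversal-invariant, so the coarse asymmetry is $0$.

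There is no real obstacle here. The only care needed is the $+\infty$ convention and stating the equality case of Gibbs' inequality explicitly, both of which are routine on finite sets.
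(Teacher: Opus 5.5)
Your proposal is correct and follows the paper's own route. The paper gives no separate proof: the corollary is read off from Theorem~\ref{thm:dpi_path}, with nonnegativity of KL forcing the coarse asymmetry to zero, exactly as you argue. Your equality-case remark and the constant-lens example on the biased three-cycle are correct supplements for the ``can hide'' clause, which the paper only asserts informally.
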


Related coarse-graining and hidden-drive discussions in stochastic thermodynamics include
\cite{Puglisi2010,DeguntherVanderMeerSeifert2023,BauerSeifertVanderMeer2025,Merhav2010}.

Equivalently: passing to a coarser theory (via a lens pushforward) cannot create time-reversal asymmetry.

\subsection{Protocol trap: apparent stroboscopic arrows and the ``clock audit''}

\paragraph{P3 Loves P6 Law.}
We now pin down a class of autonomous protocol models and the sense in which ``P3 requires P6\_drive''.
Let $X$ be a microstate space and let $\Phi=\{0,1,\dots,m-1\}$ be a finite phase space.
A \emph{protocol family} is a collection of kernels $\{K_\phi\}_{\phi\in\Phi}$ on $X$ together with a phase kernel $\mathsf S$ on $\Phi$.

\begin{definition}[Autonomous $m$-phase protocol]
\textbf{D-PROT-02.}\par
Fix $\alpha\in[0,1]$ and define an autonomous Markov chain on the product space $Z:=X\times \Phi$ by
\[
P\big((x,\phi),(x',\phi')\big)
:=\alpha\,\mathbf{1}_{x'=x}\,\mathsf{S}\,(\phi,\phi')
+(1-\alpha)\,\mathbf{1}_{\phi'=\phi}\,K_\phi(x,x').
\]
We call this the \emph{lifted} (phase-included) protocol model.
\end{definition}

\begin{theorem}[Protocol trap / hidden-drive principle (T-AOT-02)]\label{thm:protocol_trap}
Assume \textbf{A\_FIN}+\textbf{A\_AUT}.
Suppose:
(i) the phase chain $\mathsf S$ is reversible with stationary distribution $s$;
(ii) each $K_\phi$ is reversible with respect to a \emph{common} stationary distribution $\pi$ on $X$.
Then the product measure $\mu:=\pi\times s$ is stationary for the lifted chain on $Z=X\times\Phi$ and the lifted chain is reversible (hence $\Sigma_T(\mu)=0$ for all $T$).
Consequently, by Theorem~\ref{thm:dpi_path} applied to the projection $X\times\Phi\to X$, the observed process on $X$ has zero path reversal asymmetry at every finite horizon.

In particular, any \emph{apparent} arrow-of-time obtained by analysing a stroboscopic composition of kernels on $X$ must be traced to either:
(a) an external schedule (violating \textbf{A\_AUT}), or
(b) a driven/bias phase dynamics (nonzero affinity, cf.\ Section~\ref{sec:acc}).
\end{theorem}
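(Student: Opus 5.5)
The plan is to verify detailed balance for the lifted kernel $P$ with respect to $\mu=\pi\times s$ directly, and then read off the remaining claims. Detailed balance means that for all $(x,\phi),(x',\phi')\in Z$,
\[
\pi(x)s(\phi)\,P\big((x,\phi),(x',\phi')\big)=\pi(x')s(\phi')\,P\big((x',\phi'),(x,\phi)\big).
\]
Since $P$ is a convex combination, I will check the two summands separately; each is symmetric on its own, and that suffices.

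For the phase term $\alpha\,\mathbf 1_{x'=x}\mathsf S(\phi,\phi')$, the indicator is symmetric in $(x,x')$ and forces $\pi(x)=\pi(x')$. Hypothesis (i), $s(\phi)\mathsf S(\phi,\phi')=s(\phi')\mathsf S(\phi',\phi)$, then gives the identity.

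For the micro term $(1-\alpha)\mathbf 1_{\phi'=\phi}K_\phi(x,x')$, the indicator forces $\phi'=\phi$. Two consequences follow: $s(\phi)=s(\phi')$, and the reverse transition uses the \emph{same} kernel $K_\phi$. Hypothesis (ii), $\pi(x)K_\phi(x,x')=\pi(x')K_\phi(x',x)$, closes the check. This is where the common $\pi$ matters. If each $K_\phi$ were reversible only with respect to its own $\pi_\phi$, the product ansatz would fail, and that is exactly the channel through which a hidden drive could enter.

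Summing detailed balance over the initial state gives stationarity of $\mu$. Detailed balance also yields path reversibility at stationarity. Writing $w=(z_0,\dots,z_T)$, telescoping gives
\[
\mu(z_0)\prod_t P(z_t,z_{t+1})=\mu(z_T)\prod_t P(z_{t+1},z_t),
\]
which says exactly $\mathbb P_{\mu,T}(w)=\mathbb P_{\mu,T}(\mathcal R w)$. Hence $\mathcal R_*\mathbb P_{\mu,T}=\mathbb P_{\mu,T}$ and $\Sigma_T(\mu)=0$ for every $T\ge1$.

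Next, apply Theorem~\ref{thm:dpi_path} with the lens $f(x,\phi)=x$. This shows that the observed path law $\mathbb Q_{\mu,T}$ satisfies $D_{\mathrm{KL}}(\mathbb Q_{\mu,T}\|\mathcal R_*\mathbb Q_{\mu,T})\le 0$, so the observed asymmetry is zero. I will state this explicitly for the stationary initial law $\mu$ (so $\rho=\pi$ on $X$), since that is the sense in which the theorem's "zero asymmetry" holds. A non-stationary $\rho$ can have a boundary term, as in the earlier remark.

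The "in particular" clause I will treat as the contrapositive. Within \textbf{A\_AUT}, a nonzero observed asymmetry at stationarity forces failure of (i) or (ii) by the above. Failure of (i) means $\mathsf S$ has nonzero cycle affinity, by Corollary~\ref{cor:null-regime} and Theorem~\ref{thm:cycle-criterion-exact} applied to the phase graph. Otherwise, the apparent arrow comes from a stroboscopic composition $K_{\phi_{m-1}}\cdots K_{\phi_0}$ under a fixed external schedule, which lies outside \textbf{A\_AUT}.

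The main obstacle is this last interpretive step, not the computation. The clean mathematical content is "(i)+(ii) $\Rightarrow$ reversible lift". To keep the "consequently" part honest, I would phrase (a)/(b) as a classification of which hypothesis must fail, noting that case (ii) failing without a common $\pi$ is itself a nonzero affinity in the lifted graph.
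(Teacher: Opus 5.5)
Your proposal is correct and is essentially the paper's proof: you check detailed balance with respect to $\pi\times s$ separately for the phase-only and state-only summands, use that detailed balance survives convex combination to get $\mathcal R_*\mathbb P_{\mu,T}=\mathbb P_{\mu,T}$, and then apply Theorem~\ref{thm:dpi_path} along the projection $X\times\Phi\to X$; your telescoping step and your restriction of the observed-zero claim to the lifted initial law $\mu$ make explicit what the paper's sketch leaves implicit. One minor caveat on the interpretive clause: reading a failure of (i) as a nonzero cycle affinity via Corollary~\ref{cor:null-regime} requires bidirected support for $\mathsf S$ (\textbf{A\_REV}), which is not among the theorem's hypotheses, and without it a one-way phase transition gives an undefined log-ratio rather than a finite nonzero affinity.
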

\begin{proof}[Proof sketch]
Let $P_{\mathrm{phase}}$ and $P_{\mathrm{state}}$ be the phase-update-only and state-update-only kernels defined by
\[
\begin{aligned}
P_{\mathrm{phase}}((x,\phi),(x',\phi'))&=\mathbf{1}_{x'=x}\mathsf{S}(\phi,\phi'),\\
P_{\mathrm{state}}((x,\phi),(x',\phi'))&=\mathbf{1}_{\phi'=\phi}K_\phi(x,x').
\end{aligned}
\]
By reversibility of $\mathsf S$ with respect to $s$ and of each $K_\phi$ with respect to the common $\pi$,
both $P_{\mathrm{phase}}$ and $P_{\mathrm{state}}$ are reversible with respect to $\mu=\pi\times s$.
The lifted kernel is their convex combination
$P=\alpha P_{\mathrm{phase}}+(1-\alpha)P_{\mathrm{state}}$, hence is also reversible with respect to $\mu$.
Therefore $\Sigma_T(\mu)=0$ for all $T$, and the DPI argument yields zero asymmetry on $X$.
\end{proof}

\begin{corollary}[``P3 needs P6\_drive'' under autonomy (C-ACC-01)]
Under \textbf{A\_FIN}+\textbf{A\_REV}+\textbf{A\_AUT}+\textbf{A\_ACC}+\textbf{A\_NULL} (no affinities anywhere in the lifted state space),
protocol holonomy alone does not yield sustained arrow-of-time in the autonomous accounted model.
For any stationary distribution of the lifted autonomous chain, nonzero steady-state entropy production requires a nontrivial affinity component (P6\_drive)
in the lifted dynamics (e.g.\ a biased phase cycle) or an external schedule.
\end{corollary}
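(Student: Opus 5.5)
The plan is to route the corollary through the cycle criterion and the protocol-trap mechanism, arguing by contrapositive.

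\textbf{Step 1: detailed balance from \textbf{A\_NULL}.} Under \textbf{A\_FIN}+\textbf{A\_REV}+\textbf{A\_ACC}+\textbf{A\_NULL}, every basis affinity of the lifted kernel $P$ on $Z=X\times\Phi$ vanishes. By Theorem~\ref{thm:cycle-criterion-exact}, the log-ratio 1-form $a$ is then exact, $a=d\Phi$. Corollary~\ref{cor:null-regime} gives detailed balance $\pi(i)P_{ij}=\pi(j)P_{ji}$ on support, with $\pi\propto e^{\Phi}$ taken componentwise on each connected component of the support graph.

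\textbf{Step 2: every stationary law satisfies detailed balance.} The corollary quantifies over \emph{any} stationary distribution, so we need more than one reversible $\pi$. On each irreducible (communicating) class, the stationary law is unique. Under \textbf{A\_REV} the support graph is bidirected, so communicating classes coincide with the connected components, and every class is closed. Hence any stationary $\pi'$ is a convex mixture of the per-component stationary laws. Each of these is the normalized restriction of $e^{\Phi}$, which satisfies detailed balance. Mixtures of component-supported detailed-balance measures still satisfy detailed balance, because no edge crosses components. So every stationary $\pi'$ satisfies $\pi'(i)P_{ij}=\pi'(j)P_{ji}$.

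\textbf{Step 3: zero path asymmetry, lifted and observed.} Detailed balance gives, by the standard telescoping computation,
\[
\pi'(z_0)\prod_{t<T}P(z_t,z_{t+1})=\pi'(z_T)\prod_{t<T}P(z_{t+1},z_t),
\]
so $\mathbb P_{\pi',T}=\mathcal R_*\mathbb P_{\pi',T}$. Therefore $\mathrm{EP}_T=\Sigma_T(\pi')=0$ for all $T$. Theorem~\ref{thm:dpi_path}, applied to the projection $X\times\Phi\to X$, then gives zero asymmetry for the observed process as well. Any protocol holonomy on $X$, such as route dependence of a stroboscopic composition of the $K_\phi$, therefore yields no sustained arrow.

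\textbf{Step 4: the "requires" clause.} Contrapositively, nonzero steady-state entropy production forces either failure of \textbf{A\_NULL} (a nonzero affinity in the lifted dynamics, i.e.\ P6\_drive) or failure of \textbf{A\_AUT} (an external schedule). This mirrors cases (a) and (b) of Theorem~\ref{thm:protocol_trap}.

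\textbf{Main obstacle.} I expect Step 2 to be the delicate part, since the statement covers all stationary distributions. Reducible chains with several components must be handled, and so must transient states. Transient states cannot occur under bidirected support, because from any state reached one can always return along the reverse edge. If that argument feels thin, I would fall back to restricting attention to full-support stationary laws via \textbf{A\_STAT}.
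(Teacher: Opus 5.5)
Your argument is correct and is essentially the route the paper intends. The paper gives no separate proof: it treats the corollary as an immediate consequence of Corollary~\ref{cor:null-regime} (\textbf{A\_NULL} $\Rightarrow$ exactness $\Rightarrow$ detailed balance), combined with the ``reversible implies $\Sigma_T=0$'' and data-processing steps used in Theorem~\ref{thm:protocol_trap}. Your Step~2 goes further: bidirected support makes every communicating class a closed connected component, so every stationary law is a mixture of the detailed-balanced per-component laws. This rigorously covers the ``for any stationary distribution'' quantifier that the paper leaves implicit, and it is sound as written, so the fallback to \textbf{A\_STAT} is unnecessary.
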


\begin{remark}[What the ``protocol trap'' actually warns about]
The theorem above is about \emph{true} path reversal asymmetry.
In practice, one can still obtain a spurious ``arrow'' by fitting a one-step Markov model to a non-Markov observed process (for example, by replacing the hidden-phase model by a stroboscopic kernel on $X$).
The cure is the clock audit: include the phase variable in the state, or work directly with path-space quantities that do not assume Markovianity at the observed level.
\end{remark}

% Scope IDs: see docs/spec/theorem-inventory.md and paper-contract.md
\section{Generic extension and the finite forcing lemma}
\label{sec:forcing}
\subsection{Theories as partitions and definability}

Assume \textbf{A\_FIN} and fix a finite microstate set $Z$.
A \emph{theory} (or \emph{lens}) is a finite-valued map $f:Z\to X$ (assumption \textbf{A\_LENS}), inducing the partition
\[
\Pi_f := \{B_x\}_{x\in X}, \qquad B_x := f^{-1}(x).
\]
We interpret $\Pi_f$ as the collection of distinctions expressible in the current theory.

\begin{definition}[Predicates and definability]
A (Boolean) \emph{predicate} is a function $h:Z\to\{0,1\}$.
We say that $h$ is \emph{definable from the theory $f$} (equivalently: $\Pi_f$-measurable) if it is constant on each block:
\[
\forall x\in X,\ \forall z,z'\in B_x,\quad h(z)=h(z').
\]
Equivalently, there exists $\tilde h:X\to\{0,1\}$ with $h=\tilde h\circ f$.
\end{definition}

Finite-model-theoretic context for definability in finite structures and counting viewpoints
appears in, e.g., \cite{MacphersonSteinhorn2011,Hyttinen2015}.

\begin{remark}
In a fixed finite $Z$ there is a hard cap on how many strict refinements are possible; open-endedness is
modeled by repeated extensions across growing problems/domains (each step finite).
The ``finite forcing lemma'' is a combinatorial proxy and does not claim set-theoretic independence.
\end{remark}

Adjoining a new predicate $h$ refines the theory by considering the joint lens
\[
(f,h): Z \to X\times\{0,1\}.
\]
This extension is \emph{trivial} precisely when $h$ was already definable from $f$.

\subsection{Counting lemma: definable predicates are rare}

Let $N:=|Z|$ and $K:=|X|$.

\begin{lemma}[Counting definable predicates]\label{lem:count-definable}
Assume \textbf{A\_FIN}+\textbf{A\_LENS}.
The number of predicates definable from $f:Z\to X$ is exactly $2^K$.
Consequently, if $h$ is sampled uniformly from $\{0,1\}^Z$, then
\[
\mathbb P(h \text{ is definable from } f)=\frac{2^K}{2^N}=2^{-(N-K)}.
\]
\end{lemma}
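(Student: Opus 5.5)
The plan is to exhibit an explicit bijection between the set of predicates definable from $f$ and the set $\{0,1\}^X$ of all functions $\tilde h:X\to\{0,1\}$, and then read off the probability as a ratio of cardinalities.

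Following the remark in the conventions, I will first replace $X$ by the image $f(Z)$, so that $f$ is surjective and every block $B_x$ is nonempty. This changes neither the partition $\Pi_f$ nor which predicates are definable, and it is what makes $K=|X|$ count exactly the nonempty blocks.

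I will then consider the map $\Psi:\{0,1\}^X\to\{0,1\}^Z$ given by $\Psi(\tilde h)=\tilde h\circ f$. By the equivalent formulation in the definition of definability, the image of $\Psi$ is exactly the set of predicates definable from $f$. For surjectivity onto that set: if $h$ is constant on each block, define $\tilde h(x)$ to be the common value of $h$ on $B_x$. This is well defined because $B_x\neq\emptyset$ and $h$ is constant there, and it satisfies $h=\tilde h\circ f$. For injectivity: if $\tilde h_1\circ f=\tilde h_2\circ f$, then for every $x$ we can pick some $z\in B_x$, which exists by surjectivity of $f$, and get $\tilde h_1(x)=\tilde h_1(f(z))=\tilde h_2(f(z))=\tilde h_2(x)$. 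Hence the number of definable predicates is $|\{0,1\}^X|=2^K$.

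Finally, under the uniform law on $\{0,1\}^Z$, a set of $2^K$ predicates has probability $2^K/2^N=2^{-(N-K)}$, since $|\{0,1\}^Z|=2^N$.

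The only delicate point is the nonemptiness of the fibers. Without surjectivity of $f$, empty blocks would make $\Psi$ fail to be injective and the count would be $2^{|f(Z)|}$ rather than $2^{|X|}$. The WLOG reduction to $X=f(Z)$ removes this issue, so I expect no genuine obstacle.
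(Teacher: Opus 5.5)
Your proof is correct and is essentially the paper's argument: a definable predicate is fixed by one free bit per block, giving $2^K$ of the $2^N$ predicates. You additionally write out the bijection $\tilde h\mapsto\tilde h\circ f$ and the nonempty-fiber requirement, which the paper covers only through its standing convention that $f$ is surjective.
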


\begin{proof}
A definable predicate is determined by choosing one bit per block $B_x$ (the common value on that block), hence $2^K$ choices.
There are $2^N$ total predicates on $Z$, so the stated probability follows.
\end{proof}

\subsection{Finite forcing: generic extensions are non-definable}

The preceding lemma is the finite analogue of the forcing slogan: \emph{generic extensions add non-definable predicates}.
To emphasize the “genericity” aspect, we also record a simple sufficient condition for a predicate to be \emph{strongly novel}.

\paragraph{Almost Nothing Is Definable.}
\begin{theorem}[Finite forcing lemma / physical forcing lemma]\label{thm:finite-forcing}
Assume \textbf{A\_FIN}+\textbf{A\_LENS}.
Let $f:Z\to X$ be a theory with blocks $\Pi_f=\{B_x\}_{x\in X}$, and sample $h:Z\to\{0,1\}$ uniformly from $\{0,1\}^Z$ (i.i.d.\ fair bits across microstates).
Then \textbf{non-definability is generic:}
\[
\mathbb P(h \text{ is not definable from } f)=1-2^{-(N-K)}.
\]
\end{theorem}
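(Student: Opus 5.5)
The plan is to reduce the statement directly to the counting Lemma~\ref{lem:count-definable}, taking complements. Under the uniform law on $\{0,1\}^Z$, every predicate has probability $2^{-N}$. The event ``$h$ is not definable from $f$'' is the complement of the event ``$h$ is definable from $f$''. So it suffices to compute $\mathbb P(h\text{ definable})$ and subtract it from $1$.

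The steps, in order, are these.

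First, I would note that sampling $N$ i.i.d.\ fair bits indexed by $Z$ is the same as sampling uniformly from $\{0,1\}^Z$. The joint law is the product of $N$ Bernoulli$(1/2)$ laws, which assigns mass $2^{-N}$ to each $h$.

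Second, I would invoke Lemma~\ref{lem:count-definable}. By the WLOG remark on surjective lenses, every block $B_x$ is nonempty. The definable predicates are then in bijection with $\{0,1\}^X$ via $\tilde h\mapsto \tilde h\circ f$. This gives exactly $2^K$ of them and $\mathbb P(h\text{ definable})=2^{K-N}$.

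Third, I would conclude $\mathbb P(h\text{ not definable})=1-2^{-(N-K)}$.

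There is also an equivalent block-wise route, which I would record as a sanity check. $h$ is definable iff it is constant on each block, and the restrictions $h|_{B_x}$ are independent. A uniform word on $B_x$ is constant with probability $2\cdot 2^{-|B_x|}$. Multiplying over blocks gives $\prod_x 2^{1-|B_x|}=2^{K-N}$, using $\sum_x|B_x|=N$. This version makes the genericity transparent: only blocks with $|B_x|\ge 2$ (hidden volume) contribute to non-definability.

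The only real point of care is the bijection in the second step. The map $\tilde h\mapsto\tilde h\circ f$ must be injective, which needs each fiber to be nonempty, i.e.\ surjectivity of $f$. Without this, the count of definable predicates would be $2^{|f(Z)|}$ rather than $2^{|X|}$. I would cite the standing WLOG remark so that $K=|X|=|f(Z)|$. I would also remark that the bound is trivially consistent when $N=K$: then $f$ is a bijection onto $X$, every predicate is definable, and the probability is $0$.
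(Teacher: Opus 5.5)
Your proposal is correct and takes the same approach as the paper, whose proof just cites Lemma~\ref{lem:count-definable} (exactly $2^K$ definable predicates, so probability $2^{-(N-K)}$) and takes the complement. Your appeal to the WLOG surjectivity remark, which makes $\tilde h\mapsto \tilde h\circ f$ injective, and your block-wise product check $\prod_x 2^{1-|B_x|}=2^{K-N}$ are both correct and spell out points the paper leaves implicit.
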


\begin{proof}
This is Lemma~\ref{lem:count-definable}.
\end{proof}

\paragraph{The ``Nothing Stays Constant'' Lemma.}
\begin{lemma}
Assume \textbf{A\_FIN}+\textbf{A\_LENS} and let $f:Z\to X$ have blocks $\Pi_f=\{B_x\}_{x\in X}$.
For each block $B_x$,
\[
\mathbb P(h \text{ is constant on } B_x)=2^{1-|B_x|},
\]
and by a union bound,
\[
\mathbb P(\exists x\in X\text{ such that }h\text{ is constant on }B_x)\ \le\ \sum_{x\in X} 2^{1-|B_x|}.
\]
In particular, if every block has size at least $m\ge 2$, then
\[
\mathbb P(h \text{ splits every block}) \ \ge\ 1 - K\,2^{1-m}.
\]
\end{lemma}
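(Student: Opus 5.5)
The plan is a direct counting argument, using only that $h$ is uniform on $\{0,1\}^Z$, so the bits $(h(z))_{z\in Z}$ are i.i.d.\ fair. The argument has three steps.

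\emph{Step 1 (one block).} Fix $x\in X$ and write $b:=|B_x|\ge 1$; fibers are nonempty by the surjectivity convention. The restriction $h|_{B_x}$ is uniform on $\{0,1\}^{B_x}$, which has $2^{b}$ elements. Exactly two of these are constant, the all-zeros and all-ones patterns. Hence $\mathbb P(h\text{ constant on }B_x)=2\cdot 2^{-b}=2^{1-b}$. The same computation is the one-block version of Lemma~\ref{lem:count-definable}. When $b=1$ the probability is $1$, as it should be, since a singleton block is always constant.

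\emph{Step 2 (union bound).} The event that some block is constant is $\bigcup_{x\in X}\{h\text{ constant on }B_x\}$. Subadditivity of probability gives
\[
\mathbb P(\exists x\in X\text{ such that }h\text{ is constant on }B_x)\le\sum_{x\in X}2^{1-|B_x|}.
\]
No independence across blocks is needed here, although the blocks are in fact independent because they are disjoint. Independence would even give the exact value $1-\prod_x(1-2^{1-|B_x|})$; we do not need it.

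\emph{Step 3 (uniform block size).} ``$h$ splits every block'' means $h$ is non-constant on each $B_x$. This event is the complement of the event in Step 2. If $|B_x|\ge m$ for all $x$, then $2^{1-|B_x|}\le 2^{1-m}$ by monotonicity of $t\mapsto 2^{1-t}$. Summing over the $K=|X|$ blocks bounds the union by $K\,2^{1-m}$, and taking complements gives $\mathbb P(h\text{ splits every block})\ge 1-K\,2^{1-m}$. The hypothesis $m\ge 2$ only ensures that the bound can be nontrivial: for $m=1$ the right-hand side is $1-K\le 0$.

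There is no real obstacle. The only point that needs care is identifying ``splits every block'' with the complement of ``constant on some block'' and keeping the factor $2$ from counting the two constant patterns. I would also note that this gives a strong form of the non-definability in Theorem~\ref{thm:finite-forcing}: whenever the bound is positive, $h$ fails to be definable not just globally but on every single block.
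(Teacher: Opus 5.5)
Your proposal is correct and follows the paper's proof essentially verbatim: you count the two constant labelings of each block to get $2^{1-|B_x|}$, apply the union bound, and then use $|B_x|\ge m$ together with complementation for the final display. Your side remarks (the exact value via independence of disjoint blocks, and that the bound is vacuous for $m=1$) are accurate but not needed.
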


\begin{proof}
On a block of size $|B_x|$, the only constant labelings are ``all zeros'' and ``all ones'', each with probability $2^{-|B_x|}$, hence $2^{1-|B_x|}$ total.
The union bound gives the stated inequality, and the final display uses $|B_x|\ge m$ for all $x$.
\end{proof}

\begin{corollary}[Generic extension is strict]\label{cor:strict-language-extension}
\mbox{}\par
Assume \textbf{A\_FIN}+\textbf{A\_LENS}.
With probability $1-2^{-(N-K)}$, the refined lens $(f,h):Z\to X\times\{0,1\}$ is a \emph{strict} refinement of $f$.
Equivalently, it distinguishes at least one pair of microstates previously indistinguishable under $f$.
\end{corollary}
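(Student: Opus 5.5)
The plan is to reduce the statement to the counting lemma by establishing an exact equivalence between two events. The first is that the refined lens $(f,h)$ is a strict refinement of $f$. The second is that $h$ is not definable from $f$. With that equivalence in hand, the probability follows immediately from Theorem~\ref{thm:finite-forcing}, equivalently Lemma~\ref{lem:count-definable}.

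First, I will make precise what ``strict refinement'' means for partitions. The partition $\Pi_{(f,h)}$ always refines $\Pi_f$, because each block of $(f,h)$ has the form $B_x\cap h^{-1}(b)$ and so lies inside a block of $f$. Refinement is therefore automatic, and strictness means $\Pi_{(f,h)}\neq\Pi_f$. Equivalently, there exist $z,z'$ with $f(z)=f(z')$ but $(f,h)(z)\neq(f,h)(z')$, that is, $h(z)\neq h(z')$. This is exactly the ``distinguishes a previously indistinguishable pair'' reformulation in the statement, so the two phrasings in the corollary agree by unwinding definitions.

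Next, I will show that strictness is equivalent to non-definability. Suppose there is a pair $z,z'$ in a common block $B_x$ with $h(z)\neq h(z')$. By the definition of definability, this says precisely that $h$ is not constant on some block, so $h$ is not definable from $f$. Conversely, if $h$ is not definable, some block $B_x$ contains two points where $h$ differs, and that pair is separated by $(f,h)$. Hence
\[
\{(f,h)\text{ strictly refines } f\} \;=\; \{h \text{ is not definable from } f\}
\]
as events in $\{0,1\}^Z$. Applying Theorem~\ref{thm:finite-forcing} gives the probability $1-2^{-(N-K)}$, with $h$ uniform on $\{0,1\}^Z$ as in that theorem.

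There is no real obstacle here. The only care needed is to fix the notion of strictness as inequality of the induced partitions, or $\sigma$-algebras, rather than of the codomains. Under the standing remark we may replace $X\times\{0,1\}$ by the image of $(f,h)$, so that extra unused labels in the codomain do not count as strictness.
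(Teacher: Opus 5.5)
Your proposal is correct and follows the paper's proof, which also identifies non-strictness of $(f,h)$ with definability of $h$ from $f$ and then invokes Theorem~\ref{thm:finite-forcing}. You prove both directions of that equivalence; the paper states only that non-strict implies definable, and the converse, which you supply, is what makes the probability exactly $1-2^{-(N-K)}$ rather than merely at least that.
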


\begin{proof}
If the refinement were not strict, then $h$ would be constant on each block of $f$, i.e.\ definable from $f$.
The complement event has probability $1-2^{-(N-K)}$ by Theorem~\ref{thm:finite-forcing}(1).
\end{proof}

\begin{remark}[Why this is the anti-saturation move]
Closure (or packaging) at a fixed theory saturates by idempotence: repeated closure steps do not create an infinite strict chain.
Corollary~\ref{cor:strict-language-extension} exhibits a mathematically clean \emph{extension step}: adjoining a generic predicate changes the theory itself in a way that is overwhelmingly not reconstructible from the old distinctions.
This is the finite “forcing-style” mechanism we use to justify strict ladder climbing via theory/completion change.
\end{remark}

% Scope IDs: see docs/spec/theorem-inventory.md and paper-contract.md
\section{Why the primitives are unavoidable}
\label{sec:meta-unavoidable}

\begin{definition}[Process soup (D-META-PROC-01)]\label{def:meta-proc}
A \emph{process soup} is a set $\mathcal P$ equipped with a partially defined associative composition
$\circ$ (a partial semigroup), with identities where relevant.
\end{definition}

\begin{definition}[Interface lens (D-META-LENS-01)]\label{def:meta-lens}
An \emph{interface lens} is a map $f:\mathcal P\to X$ to a set of observables. It induces a kernel
equivalence $x\sim_f y$ iff $f(x)=f(y)$. Limited access means $\exists\,x\neq y$ with $x\sim_f y$.
This abstracts the notion of a coarse-graining lens $f:Z\to X$ used elsewhere: here the lens acts on processes/happenings rather than on state spaces.
\end{definition}

\begin{definition}[Refinement family (D-META-REF-01)]\label{def:meta-ref}
A refinement family is a chain of equivalence relations $(\sim_j)_{j\ge 0}$ with
$x\sim_{j+1} y \Rightarrow x\sim_j y$; equivalently, lenses $f_j$ with
$f_j=g_j\circ f_{j+1}$ for some maps $g_j$.
As $j$ increases, observables become more discriminating (partition refinement), hence the induced
saturation operators on predicates become weaker (descending in the pointwise order).
\end{definition}

\begin{definition}[Bounded interface (D-META-BND-01)]\label{def:meta-bnd}
There exists $C_0\ge 1$ such that the quotient index satisfies
$|\mathcal P/\!\sim_j|\le C_0(j+1)$ for all $j$.
This is an explicit bounded-interface hypothesis on the lens hierarchy (a modeling assumption encoding bounded observable signature); it is not derived from refinement alone.
\end{definition}
For example, refinement alone permits exponential growth (e.g.\ $|X_j|=2^j$ with $X_j=\{0,1\}^j$), so the linear bound must be assumed or verified in an instantiation.

\begin{theorem}[Self-generated primitives]\label{thm:meta-prim}
Assume D-META-PROC-01, D-META-LENS-01, D-META-REF-01, and D-META-BND-01.
Then the six primitives P1--P6 appear canonically as closure mechanics.
Here \textbf{P5}, \textbf{P6}, \textbf{P4}, and \textbf{P2} are by construction from the lens/refinement data, while \textbf{P1} and \textbf{P3} isolate the only nontrivial compatibility requirements as named hypothesis slots (HL-META-1/2/3).
\begin{enumerate}
  \item \textbf{P5 (Packaging).} Each equivalence $\sim_j$ yields a quotient/packaging map
  $\Pi_j:\mathcal P\to \mathcal P/\!\sim_j$ and induces an idempotent saturation on predicates/events via
  $\mathrm{cl}_j(A):=\Pi_j^{-1}(\Pi_j(A))$ (cf.\ Sections~\ref{sec:closure} and \ref{sec:idempotent-endo}).
  \item \textbf{P6 (Accounting).} The refinement order on quotients induces a preorder and monotone ``audit'' quantities (e.g.\ $j\mapsto |X_j|$); this is an information/feasibility order only (not, by itself, a thermodynamic arrow-of-time).
  \item \textbf{P4 (Staging).} The refinement chain generates a depth index $j$, and bounded interfaces ensure $|X_j|\lesssim (j+1)$. Nontrivial staging corresponds to strict refinement at some scale (HL-META-2: $\sim_{j+1}\subsetneq \sim_j$ for some $j$).
  \item \textbf{P2 (Constraints).} Feasible macrostates are those representable as images under $\Pi_j$,
  yielding a constraint set carved by interface compatibility.
  \item \textbf{P1 (Operator rewrite).} Fix a micro-update $F:\mathcal P\to\mathcal P$. The induced macro-update
  $F^\sharp([p]) := [F(p)]$ on $\mathcal P/\!\sim_j$ is well-defined iff $p\sim_j q \Rightarrow F(p)\sim_j F(q)$ (HL-META-1).
  When (HL-META-1) fails, one must either (i) refine the lens so stability holds, (ii) modify/extend $F$ to respect $\sim_j$, or (iii) accept that no closed macro dynamics exists at depth $j$.
  \item \textbf{P3 (Holonomy).} Assuming there exist two admissible closure/evolution routes with the same input/output type (HL-META-3)
  (e.g.\ $\Pi_j\circ F$ versus $F^\sharp\circ \Pi_j$ when $F^\sharp$ is well-defined), their discrepancy defines a holonomy-like diagnostic.
  In normed operator instantiations we measure this by $\mathrm{RM}(j)$ (Appendix~\ref{app:tk-defects}, Eq.~\eqref{eq:tk-rm}); this is a discrepancy diagnostic, not itself a directionality certificate (see Section~\ref{sec:aot}).
 \end{enumerate}
\end{theorem}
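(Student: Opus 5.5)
The plan is to verify each of the six items separately. Most of the work is routine order theory applied to the quotient maps $\Pi_j$. The real content is in showing that $\mathrm{cl}_j$ is an honest closure operator, so that Lemma~\ref{lem:closure-iterate-stabilizes} and Lemma~\ref{lem:closed-antitone} apply, and in making the HL-META hypotheses exactly the conditions that are needed.

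\textbf{P5.} Take $L=2^{\mathcal P}$ ordered by inclusion and define $\mathrm{cl}_j(A)=\Pi_j^{-1}(\Pi_j(A))$, the $\sim_j$-saturation of $A$. I would check the three properties of Definition~\ref{def:closure-operator}:
\begin{itemize}
\item extensivity holds because $a\in\Pi_j^{-1}(\Pi_j(a))$;
\item monotonicity holds because images and preimages preserve inclusion;
\item idempotence follows from $\Pi_j(\Pi_j^{-1}(S))=S$ for $S\subseteq\mathrm{im}\,\Pi_j$, using surjectivity of the quotient.
\end{itemize}
The fixed points are exactly the $\sim_j$-saturated sets, that is, the predicates definable from the lens $f_j$, which matches Section~\ref{sec:forcing}. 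The refinement condition $\sim_{j+1}\subseteq\sim_j$ gives $\mathrm{cl}_{j+1}\le\mathrm{cl}_j$ pointwise. Lemma~\ref{lem:closed-antitone} then shows that the fixed-point sets increase with $j$, which produces the descending chain of saturation operators described in Definition~\ref{def:meta-ref}.

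\textbf{P6, P4 and P2.} For P6, the factorization $f_j=g_j\circ f_{j+1}$ gives surjections $\mathcal P/\!\sim_{j+1}\to\mathcal P/\!\sim_j$. Hence $j\mapsto|X_j|$ is nondecreasing, and refinement defines a preorder on the quotients. For P4, the index $j$ serves as the depth. Definition~\ref{def:meta-bnd} gives $|X_j|\le C_0(j+1)$ directly. The claim that staging is nontrivial is simply the hypothesis HL-META-2 restated, since strict refinement is equivalent to a strict increase in $|X_j|$ when the quotients are finite, and to $\sim_{j+1}\subsetneq\sim_j$ in general. For P2, the feasible macrostates are $\mathrm{im}\,\Pi_j$, and the constraint set is the collection of $\mathrm{cl}_j$-fixed events from P5.

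\textbf{P1.} This is the universal property of the quotient. If $p\sim_j q$ implies $F(p)\sim_j F(q)$, then $[p]\mapsto[F(p)]$ does not depend on the representative, so $F^\sharp$ is well defined. Conversely, if some $F^\sharp$ satisfies $F^\sharp\circ\Pi_j=\Pi_j\circ F$, then $p\sim_j q$ forces $\Pi_jF(p)=F^\sharp\Pi_j(p)=F^\sharp\Pi_j(q)=\Pi_jF(q)$. The trichotomy in the statement is then just the negation of HL-META-1, listed by which datum one changes: the lens, the map $F$, or neither.

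\textbf{P3.} Under HL-META-3 there are two routes with the same type. Their discrepancy, measured by $\mathrm{RM}(j)$ from the appendix, is by definition a diagnostic, so the only thing to argue is the disclaimer that it does not certify directionality. I would support this by pointing to Theorem~\ref{thm:protocol_trap}: a nonzero route mismatch can coexist with $\Sigma_T=0$.

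The main obstacle I expect is not technical. It is making precise in what sense these structures appear ``canonically'' and that the list is exhaustive. I would handle this by reading ``canonical'' as ``determined by the data $(\mathcal P,\circ,(\sim_j))$ without further choices''. For P5, P6, P4 and P2 this is clear from the constructions above. For P1 and P3, the only additional datum is $F$, so the statement reduces to the equivalences recorded by HL-META-1/2/3. I would not try to prove any stronger uniqueness claim.
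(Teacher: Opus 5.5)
Your proposal is correct and follows the same item-by-item route as the paper's proof sketch: P5 as the $\sim_j$-saturation closure induced by the quotient, P6 and P4 from the refinement preorder and the bounded-interface bound, P2 as the image of $\Pi_j$, P1 as descent of $F$ through the quotient exactly under HL-META-1, and P3 as the $\mathrm{RM}(j)$ discrepancy under HL-META-3, with the directionality disclaimer deferred to Section~\ref{sec:aot}. You fill in details the sketch leaves implicit: you check the closure axioms and the antitone behaviour via Lemma~\ref{lem:closed-antitone}, prove both directions of the HL-META-1 equivalence, and cite Theorem~\ref{thm:protocol_trap} as a concrete case where route dependence coexists with zero path asymmetry.
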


\begin{proof}[Proof sketch]
P5 is the quotient/packaging map induced by the equivalence, which is idempotent by construction.
P6 follows from the refinement preorder and yields information/feasibility monotones only (not a thermodynamic arrow-of-time).
P4 is the stage index implicit in the refinement chain (nontrivial if refinements are strict, HL-META-2), which also connects to closure ladders (Section~\ref{sec:closure}).
P2 is the feasible image under $\Pi_j$, i.e.\ representability at the interface.
For P1, given a micro-update $F$, the expression $F^\sharp([p])=[F(p)]$ descends to a map on $\mathcal P/\!\sim_j$ exactly when (HL-META-1) holds; when it fails, closed macro dynamics at depth $j$ requires rewrite/extension, refinement, or must be abandoned.
For P3, whenever two admissible routes exist (HL-META-3), their noncommutativity produces a route-mismatch/holonomy defect, formalized by $\mathrm{RM}(j)$ in Appendix~\ref{app:tk-defects}; this defect is agnostic to directionality without additional structure (Section~\ref{sec:aot}).
\end{proof}

\section{Primitives P1--P6 as closure-changing operations}
\label{sec:primitives}
The primitives are presented below as a minimal operator vocabulary, but they are not axioms:
Theorem~\ref{thm:meta-prim} (Section~\ref{sec:meta-unavoidable}) shows they are structurally forced once
composability, limited access, and bounded interfaces are assumed. This framing remains compatible with
the corrected stance on P3: protocol/phase is part of the autonomous state (no external schedule), and
route dependence alone does not certify arrow-of-time.
Primitives P1--P6 form a vocabulary of operations that change (i) the admissible transition
structure (support graph and cycle space), (ii) the available idempotent endomaps (packaging and
fixed points), and (iii) the affinity data encoded by the graph 1-form. These are not ratchets by
themselves; directionality is certified by path reversal asymmetry and/or non-exactness of the
log-ratio 1-form under the \textbf{A\_AUT}+\textbf{A\_REV}+\textbf{A\_ACC} regime.

\subsection{Definitions of P1--P6}

\begin{definition}[P1: operator rewrite]
P1 is a rewrite of the substrate operator: replace a Markov kernel $P$ by a new kernel $P'$ (or a
finite family $\{P^{(m)}\}$), thereby changing the endomap $\mu\mapsto \mu P^\tau$ and the induced
empirical endomap $E_{\tau,f}$. This is a change in the operator itself, not an external schedule.
\end{definition}

\begin{definition}[P2: gating / constraints]
P2 restricts the support graph by deleting edges (setting selected $P_{ij}=0$) and renormalizing
rows, or equivalently restricting to a subgraph on which the kernel lives. Cycle-rank claims are
interpreted on the undirected support graph justified by \textbf{A\_REV}.
\end{definition}

\begin{definition}[P3: autonomous protocol holonomy]
P3 is modeled in the autonomous lifted form on $Z:=X\times\Phi$. The phase $\phi\in\Phi$ evolves
by an internal kernel $\mathsf S$, and conditioned on $\phi$ the microstate updates by $K_\phi$.
No external schedule is assumed in the main text; externally scheduled stroboscopic protocols are
non-autonomous and fall outside \textbf{A\_AUT}.
P3 (route mismatch/holonomy) is a \emph{protocol-geometry diagnostic}: by itself it does not certify directionality, and any arrow-of-time claim must be supported by an audit functional (e.g.\ affinity or path reversal asymmetry) in a model where the protocol state is included.
\end{definition}

\begin{definition}[P4: sectors / invariants]
P4 is a conserved sector label: the support decomposes into disconnected components (block
structure), or equivalently $P$ is block diagonal up to permutation, so evolution preserves a
sector coordinate.
\end{definition}

\begin{definition}[P5: packaging]
P5 is an idempotent endomap $e$ whose fixed points $\mathrm{Fix}(e)$ are the packaged objects of a
given theory. This is an endomap notion (Section~\ref{sec:idempotent-endo}).
It is not an order-closure.
\end{definition}

\begin{definition}[P6: accounting / audit]
P6 is an accounting/audit structure: a certificate or functional that is monotone under
coarse maps or packaging. Canonical instantiations in this manuscript include:
\begin{enumerate}
  \item the information/feasibility order induced by limited access (META),
  \item path-space KL asymmetry with data processing\\
  (Section~\ref{sec:aot}),
  \item the ACC graph 1-form/cycle-integral audit\\
  (Section~\ref{sec:acc}).
\end{enumerate}
\end{definition}

\begin{remark}
We write \emph{P6\_drive} for the ACC specialization: a non-exact log-ratio 1-form
(equivalently, a nonzero cycle integral). This is the condition used in the slogan
``P3 needs P6\_drive'' under autonomy.
\end{remark}

\subsection{How the primitives compose to generate theory growth}\label{sec:six-birds-loop}

In this manuscript a \emph{theory} is the package (lens/definability, completion, audit)
formalized in Section~\ref{sec:tk-theory-package}. Emergence means stable fixed points of the
completion rule inside a theory, while open-endedness means strict theory extension rather than
iteration of a fixed closure (Section~\ref{sec:forcing}).

\begin{enumerate}
  \item \textbf{Limited access $\Rightarrow$ P5 packaging.} A lens collapses distinctions, so a completion
  endomap packages microstates into fixed-point objects (Sections~\ref{sec:closure} and \ref{sec:idempotent-endo}).
  \item \textbf{Lossy packaging $\Rightarrow$ P6 accounting.} Audits/monotones track what survives under lenses
  (Sections~\ref{sec:acc} and \ref{sec:aot}); coarse-graining cannot create directionality by DPI.
  \item \textbf{Saturation $\Rightarrow$ extension.} Iterating a fixed completion saturates
  (Section~\ref{sec:closure}), so strict growth requires theory extension (Section~\ref{sec:forcing}).
  \item \textbf{P4 staging.} A refinement family supplies a theory index (Section~\ref{sec:meta-unavoidable});
  bounded interfaces/quantized indices make repeated refinement coherent.
  \item \textbf{P2 gating.} Feasibility carves representable macrostates and restricts support/cycle space
  (see the P2 theorem below and Section~\ref{sec:acc}).
  \item \textbf{P1 rewrite.} When induced macro-dynamics fails to descend, an operator rewrite/refinement is
  forced to restore closure on packaged objects (Section~\ref{sec:idempotent-endo}).
  \item \textbf{P3 route mismatch.} Noncommuting reductions yield route mismatch $\mathrm{RM}(j)$
  (Appendix~\ref{app:tk-defects}, Eq.~\eqref{eq:tk-rm}); this is route dependence, not a directionality certificate
  (Section~\ref{sec:aot}).
  \item \textbf{Iteration.} The updated (lens, completion, audit) defines the next theory; objects and audits
  change and the loop repeats.
\end{enumerate}

\paragraph{Mapping to the spine.}
\begin{itemize}
  \item P5 $\leftrightarrow$ completion/idempotents (Sections~\ref{sec:closure}, \ref{sec:idempotent-endo}).
  \item P6 $\leftrightarrow$ audits (Sections~\ref{sec:acc}, \ref{sec:aot}).
  \item P1/P2/P4 $\leftrightarrow$ closure-changing structure (this section and the P1/P2 theorems below).
  \item P3 $\leftrightarrow$ route mismatch (Appendix~\ref{app:tk-defects}).
  \item Open-endedness $\leftrightarrow$ strict theory extension (Section~\ref{sec:forcing}).
\end{itemize}

We do not claim every system self-organizes; the toolkit identifies the minimal operations and
certificates a system must instantiate if it exhibits autonomous theory growth and stable emergent objects.

\subsection{Downward influence across theories}\label{sec:downward-influence}

We make explicit how the emergence calculus represents \emph{downward influence} (macro-to-micro constraint) within a coupled theory package.
In this paper, ``downward influence'' is not a seventh primitive and not a metaphysical claim: it is a \emph{structural} phenomenon that appears whenever a coarse description is fed back into a finer one through the existing primitives (Section~\ref{sec:tk-theory-package}).

\paragraph{Three canonical downward mechanisms already in P1--P6.}
At the abstraction level of this paper, every downward influence path factors through one of the following constructions (the rest are compositions/specializations):

\begin{enumerate}
  \item \textbf{Representative selection via lift/completion (P5).}
  A lens $f:Z\to X$ induces a pushforward $Q_f$ and, once prototypes are fixed, a canonical lift $U_f:\Delta(X)\to\Delta(Z)$ (Definition~\ref{def:D-IC-01}).
  The map $x\mapsto u_x := U_f(\delta_x)$ is an explicit macro-to-micro channel: a coarse description selects a representative micro-description.
  This is exactly the ``completion'' move used to define dynamics-induced packaging maps such as $E_{\tau,f}$.

  \item \textbf{Feasibility gating and budgets (P2/P6).}
  A coarse constraint $C\subseteq X$ restricts the finer theory to the preimage $f^{-1}(C)\subseteq Z$ (and similarly for path constraints).
  Equivalently: feasibility/accounting at the coarse theory can \emph{exclude} micro directions by shrinking the admissible set of states/histories, without modifying the underlying micro update rule.

  \item \textbf{Internal protocol/phase as state (P3 with the autonomy guardrail).}
  A coarse context variable (``phase'') can be included as part of the state and can index a family of micro updates.
  This yields genuine context-dependent micro dynamics \emph{without} smuggling in an external schedule; see the protocol-trap discussion in Section~\ref{sec:aot}.
  Route mismatch/holonomy (P3) then measures noncommutativity of routes, but remains \emph{agnostic} about directionality unless coupled to a driven/asymmetric accounting structure (P6).
\end{enumerate}

\paragraph{Non-claim.}
This subsection does not assert that coarse descriptions ``override'' micro laws.
It only records that once a coarse theory is fed back into the finer one via (i) completion/lift, (ii) feasibility gating, or (iii) an internalized protocol variable, the resulting coupled theory package has bona fide macro-to-micro influence \emph{in the sense of constrained admissibility and context-indexed operators}.

\subsection{Two load-bearing propositions}

\paragraph{Constraints Kill Engines.}
\begin{theorem}[P2 gating shrinks cycle space]
\textbf{T-P2-01.}\par
Assume \textbf{A\_FIN}+\textbf{A\_REV}.\par
Let $G=(V,E)$ be an undirected support graph and let $G'=(V,E')$ be obtained by deleting edges.
Then
\[
\begin{aligned}
\beta_1(G')&\le \beta_1(G),\\
\beta_1(G)&:=|E|-|V|+c(G),
\end{aligned}
\]
where $c(G)$ is the number of connected components.
\end{theorem}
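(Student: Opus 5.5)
The plan is to reduce to deleting one edge at a time and show that each single deletion does not increase $\beta_1$. Since $G'$ is obtained from $G$ by removing the finite set $E\setminus E'$, we can order these edges $e_1,\dots,e_k$. This gives a chain
\[
G=G_0\supseteq G_1\supseteq\cdots\supseteq G_k=G',
\]
where each $G_{i+1}=G_i-e_{i+1}$ has the same vertex set $V$. By transitivity it then suffices to prove $\beta_1(G-e)\le\beta_1(G)$ for a single edge $e=\{u,v\}\in E$. No Markov structure is needed: \textbf{A\_REV} only guarantees that the undirected support graph, and hence $\beta_1$, is well defined.

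For a single deletion, $|E|$ drops by exactly one and $|V|$ is unchanged, so
\[
\beta_1(G-e)-\beta_1(G)=-1+\bigl(c(G-e)-c(G)\bigr).
\]
The key step is therefore the bound $c(G-e)\le c(G)+1$. To see it, note that deleting $e$ only affects the component $C$ of $G$ containing $u$ and $v$. Every other component is still a connected subgraph of $G-e$ and still has no edges to the rest of the graph. Inside $C$, every vertex $w$ is joined by a path in $G$ to $u$. If that path uses $e$, truncate it at the first endpoint of $e$ it reaches. The result is a path in $G-e$ from $w$ to $u$ or to $v$. Hence $C$ splits into at most two components, those of $u$ and of $v$, which gives $c(G-e)\le c(G)+1$ and so $\beta_1(G-e)\le\beta_1(G)$. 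More precisely, $\beta_1$ drops by one if $e$ lies on a cycle and stays the same if $e$ is a bridge.

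The only point that needs care is this component-counting step, specifically the path-truncation argument and the observation that components not containing $e$ are untouched. Everything else is bookkeeping. As an alternative cross-check, one can use the fact that $\beta_1(G)=\dim H_1(G)$ is the dimension of the cycle space $\ker\partial\subseteq\mathbb R^{E}$. Deleting edges restricts this to the subspace of cycles supported on $E'$, which is a linear subspace of $\ker\partial$ and so has no larger dimension. This matches the fundamental-cycle-basis viewpoint of Theorem~\ref{thm:cycle-criterion-exact}: fewer chords relative to a spanning forest means fewer independent affinities $A_r$.
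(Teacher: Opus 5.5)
Your proof is correct and takes the same route as the paper: delete the edges one at a time, observe that $|E|$ drops by exactly one while $c$ increases by at most one, and iterate. You also prove $c(G-e)\le c(G)+1$ via path truncation, which the paper only asserts, and your cycle-space dimension argument is a valid cross-check that the proof does not need.
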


\begin{proof}
Deleting a single edge reduces $|E|$ by one and increases $c(G)$ by at most one. Hence
$|E|-|V|+c(G)$ cannot increase. Iterating over all deleted edges yields the claim.
\end{proof}

\begin{theorem}[P1 can change cycle rank and spectral gap (T-P1-01)]
Assume \textbf{A\_FIN}+\textbf{A\_REV}.
P1 rewrites can increase or decrease cycle rank and can increase or decrease a metastability proxy
(the spectral gap of a reversible lazy random walk).
\end{theorem}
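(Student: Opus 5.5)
The statement is existential (``can increase or decrease''), so I would prove it by exhibiting explicit small reversible kernels on a fixed finite vertex set. Each example keeps \textbf{A\_REV}, and I would compute $\beta_1(G)=|E|-|V|+c(G)$ and the spectral gap directly. For definiteness I take the metastability proxy to be the gap $1-\lambda_2$ of the lazy simple random walk $P=\tfrac12(I+D^{-1}A)$ on the undirected support graph. This walk is reversible with respect to $\pi(v)\propto\deg(v)$, so its spectrum is real and equals that of the symmetric matrix $\tfrac12(I+D^{-1/2}AD^{-1/2})$.

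\emph{Cycle rank.} I would take $V=\{1,2,3\}$ and let $P$ be the lazy walk on the path $1-2-3$, which has $\beta_1=2-3+1=0$. The P1 rewrite $P'$ is the lazy walk on the triangle $K_3$, which has $\beta_1=3-3+1=1$. So rewriting $P\to P'$ increases cycle rank, and rewriting $P'\to P$ decreases it. Both kernels have bidirected support, so \textbf{A\_REV} holds throughout. Because P1 is an arbitrary kernel replacement, the reverse direction needs no separate construction.

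\emph{Spectral gap.} I would reuse the same pair of kernels. For the lazy walk on $K_3$, $D^{-1}A=\tfrac12(J-I)$ has eigenvalues $1,-\tfrac12,-\tfrac12$. The lazy eigenvalues are therefore $1,\tfrac14,\tfrac14$, and the gap is $\tfrac34$. For the path $1-2-3$, $D^{-1}A$ has eigenvalues $1,0,-1$, so the lazy eigenvalues are $1,\tfrac12,0$ and the gap is $\tfrac12$. The rewrite path$\to$triangle increases the gap, and the reverse rewrite decreases it. If one wants the two effects decoupled, I would instead compare two path kernels on a bottlenecked graph: a ``dumbbell'' of two triangles joined by a single edge, with the bridge weight $w$ varied. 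The weighted lazy walk is reversible, and lowering $w$ drives the gap to $0$ by a Cheeger/conductance bound, so a rewrite of $w$ alone changes the gap in either direction at fixed cycle rank.

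The only point needing care is the spectral computation. The $2\times2$ case $D^{-1}A$ for the path is routine, and the dumbbell can be handled either by the conductance bound $1-\lambda_2\le 2\Phi_*$ or by a test function $\pm1$ on the two lobes in the Rayleigh quotient. I expect the main obstacle to be merely fixing the convention for ``metastability proxy''. Once the lazy-walk gap is fixed as above, the proof is a finite explicit check, and the 3-vertex example already witnesses all four claimed behaviors.
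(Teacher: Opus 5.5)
Your proposal is correct and follows the same route as the paper's sketch: adding an edge inside a component raises $\beta_1$ by one, adding connectivity (or removing a bottleneck) raises the spectral gap, and the reverse rewrites give the decreases. Your version is more complete than the paper's, because the explicit path-versus-triangle computation ($\beta_1$ from $0$ to $1$, lazy gap from $\tfrac12$ to $\tfrac34$) and the Cheeger bound for the weighted dumbbell supply checkable witnesses. The paper instead appeals loosely to changes in $c(G)$ and defers the spectral examples to the numerical evidence in Appendix~C.
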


\begin{proof}[Proof sketch]
(Cycle rank.) Adding an edge within an existing connected component leaves $c(G)$ unchanged and
increases $|E|$ by one, so $\beta_1(G)$ increases by one. More general rewires can change $c(G)$
and thus increase or decrease $\beta_1$.

(Spectral gap.) By rewriting edge weights or topology, one can create or remove bottlenecks.
Highly connected graphs have large gaps, while nearly decomposable graphs have small gaps; thus
P1 can increase or decrease the gap by explicit constructions (see Appendix~C for examples).
\end{proof}

\begin{remark}[Evidence pointers]
Appendix~C records finite graph and kernel witnesses for the P1 and P2 behaviors described above,
including explicit cycle-rank changes and spectral-gap increases/decreases.
\end{remark}

% Scope IDs: see docs/spec/theorem-inventory.md and paper-contract.md
\section{Examples}
\label{sec:examples}

This section gives a few finite examples that instantiate the preceding definitions and results.
They introduce no new machinery and are included purely for intuition and for checking that the
assumption tags match the intended use-cases.

\subsection{A biased three-cycle: a non-exact graph 1-form}
\label{subsec:ex:3cycle}

Let $Z=\{0,1,2\}$ and consider the Markov kernel $P$ defined by the parameters $(p,q,s)$ with
$p,q,s>0$ and $p+q+s=1$, where from each $i\in Z$,
\[
P(i,i)=s,\qquad P(i,i+1\bmod 3)=p,\qquad P(i,i-1\bmod 3)=q.
\]
This satisfies \textbf{A\_FIN}+\textbf{A\_REV}. The antisymmetric edge 1-form
$a_{ij}:=\log\!\big(P_{ij}/P_{ji}\big)$ is therefore defined on each bidirected edge.

Along the directed cycle $\gamma=(0\to 1\to 2\to 0)$, the cycle integral (affinity) is
\[
\oint_{\gamma} a
= \log\frac{P_{01}}{P_{10}}+\log\frac{P_{12}}{P_{21}}+\log\frac{P_{20}}{P_{02}}
= 3\log\frac{p}{q}.
\]
For instance with $(p,q,s)=(0.7,0.2,0.1)$ we get $|\oint_\gamma a|\approx 3.7583\neq 0$, hence the 1-form
is not exact and no global potential $\Phi$ can satisfy $a_{ij}=\Phi_j-\Phi_i$ on all edges (cf.\ Section~\ref{sec:acc}).

\subsection{Protocol trap: external schedule vs autonomous lifted model}
\label{subsec:ex:protocol-trap}

We illustrate the difference between an externally scheduled protocol and an autonomous lifted model.
Let $Z=\{0,1,2\}$ and let $K_0,K_1$ be two kernels that are each reversible w.r.t.\ the uniform distribution
(on $Z$), but do not commute (so the composition depends on order). One concrete instance used in our
reproducibility artifacts is:
\[
\begin{aligned}
K_0&=
\begin{pmatrix}
0.573333 & 0.393333 & 0.033333\\
0.393333 & 0.573333 & 0.033333\\
0.033333 & 0.033333 & 0.933333
\end{pmatrix},\\
K_1&=
\begin{pmatrix}
0.933333 & 0.033333 & 0.033333\\
0.033333 & 0.573333 & 0.393333\\
0.033333 & 0.393333 & 0.573333
\end{pmatrix}.
\end{aligned}
\]
for which $\max_{i,j}|(K_0K_1-K_1K_0)_{ij}|\approx 1.296\times 10^{-1}$.

\emph{Externally scheduled (not autonomous):}
if one applies $K_0$ then $K_1$ in a fixed periodic order but hides the schedule variable, the
stroboscopic one-step kernel on $Z$ is $K:=K_1K_0$, which can exhibit positive steady-state entropy
production (positive time-reversal asymmetry in the one-step Markov description).

\emph{Autonomous lifted model (clock included in state):}
define an augmented state space $Z\times \Phi$ where $\Phi=\{0,1\}$ is a phase variable, and define a
time-homogeneous random-scan Markov chain that, at each step, (i) updates $\phi$ using $\mathsf S$ while keeping
$z$ fixed with probability $\alpha$, and (ii) updates $z$ using $K_\phi$ while keeping $\phi$ fixed with
probability $1-\alpha$. If $\mathsf S$ is unbiased/reversible (no phase affinity) and both $K_0,K_1$ are reversible
w.r.t.\ a common $\pi$, then the lifted model exhibits
vanishing steady-state entropy production (and hence vanishing path reversal asymmetry at stationarity),
whereas introducing a phase bias (nontrivial affinity in the $\Phi$-subsystem) yields strictly positive
entropy production. This is the clean mathematical form of the slogan “P3 needs P6\_drive” in autonomous models:
noncommuting actions alone do not create sustained directionality unless the protocol degree of freedom
itself is driven.

\subsection{Finite forcing count: definability is exponentially rare}
\label{subsec:ex:forcing-count}

Let $Z$ be a finite set with $|Z|=N$, and let a “theory” be represented by a partition $\Pi$ of $Z$
into $K$ blocks (equivalently a lens $f:Z\to X$ with $|X|=K$). A predicate $h:Z\to\{0,1\}$ is definable
from $\Pi$ iff it is constant on each block of $\Pi$, so the number of definable predicates is exactly $2^K$,
while the total number of predicates is $2^N$. Therefore, if $h$ is sampled uniformly from $\{0,1\}^Z$,
\[
\mathbb P(h\text{ is definable from }\Pi)=\frac{2^K}{2^N}=2^{-(N-K)}.
\]
For example, when $(N,K)=(16,4)$, this gives $2^{-(16-4)}=2^{-12}\approx 2.44\times 10^{-4}$, i.e.\
a uniformly random new predicate is overwhelmingly likely to be non-definable from the old theory.
This is the finite counting core behind the “generic extension” step in Section~\ref{sec:forcing}.

\section{Discussion and scope boundary}
\subsection{What the theory does and does not claim}\label{sec:discussion-claims}

The organizing loop in Section~\ref{sec:big-picture} separates three roles that are often conflated:
packaging (idempotence and fixed points) and extension (non-definability), versus directionality (path-space KL and affinities).
The paper’s contribution is to make these roles explicit and composable, with certificates that are stable under coarse observation (DPI) and robust to hidden-clock artifacts (protocol audit).

In this paper, ``theory'' is used in the minimal mathematical sense of a definability structure
(lens/partition) together with its completion/packaging map. Later interpretations are intentionally out of scope here.
Viewed through Theorem~\ref{thm:meta-prim}, the primitives are not hypotheses about the world but structural consequences of description under bounded interfaces.
Instantiating the framework amounts to specifying a lens/refinement family, a completion rule (packaging), and an audit functional.
One then verifies that the domain’s interfaces realize the standing hypotheses.
None of these domain choices are fixed by the abstract theorem.

Equally important are the non-claims.
We do not claim that closure ladders arise automatically in arbitrary dynamics, nor that any particular choice of lens $f$ or timescale $\tau$ is preferred.
We also do not claim that protocol holonomy by itself yields sustained directionality under autonomy; on the contrary, the lifted-model statements formalize that sustained asymmetry requires a genuine affinity (a non-exact 1-form component) or an externally imposed schedule (which lies outside \textbf{A\_AUT}).
Finally, the finite forcing lemma is used only as a \emph{finite proxy} for generic extension: it justifies that strict theory extension is cheap and typical when there is hidden volume, not that all forms of novelty reduce to random predicates.

This paper keeps its scope finite, structural, and diagnostic: it does not advance claims about
continuous-time stochastic thermodynamics, empirical estimation guarantees, or domain-specific
applications. Appendix~D frames the Zeno decision frontier as a research outlook for when Zeno can or
cannot be ruled out in the abstract bridge setting.

\subsection{Outlook: forthcoming instantiations}\label{sec:outlook}
This paper isolates a math-only emergence calculus.
It uses a \emph{theory package} (lens/definability, completion, audit).
It is paired with the six primitive moves that change what theories can exist and what objects they recognize.
A series of companion papers will instantiate this calculus in multiple domains (with domain-specific choices of lenses, bridges, and audits) while keeping the present framework fixed:
\begin{itemize}
  \item mathematics theories,
  \item physics theories as families of effective closures across scales,
  \item life and cognition as theory stacks,
  \item conscious entities as theories,
  \item societies and civilizations,
  \item the universe as a “theorist”.
\end{itemize}
We intentionally do not include details here; the purpose of this paragraph is only to indicate scope and to delineate the boundary between the general calculus proved in this paper and its forthcoming instantiations.

\appendix

\section{Appendix A: Reproducibility checklist}
This repository is intended to be mechanically checkable at three levels:
(i) a claim/dependency registry, (ii) a small Lean formalization core, and
(iii) a deterministic Python evidence harness.
\par\noindent Repository: \url{https://github.com/ioannist/six-birds-theory}.\\
\noindent Arweave ID: 1kWCQsvq071tmOBx-uOVQfnsfsJZeHXhK7AWx-P-pII

\subsection*{A.1 Repository integrity checks (from repo root)}
Run the following commands from the repository root:
\begin{verbatim}
python3 scripts/check_deps_dag.py
python3 scripts/check_kb_pointers.py
python3 scripts/check_paper_contract.py
./check_lean.sh
\end{verbatim}

\subsection*{A.2 Python evidence harness (deterministic tests)}
The Python code lives under \texttt{python/} and is run in an isolated virtual environment.
From the repository root:
\begin{verbatim}
python3 -m venv python/.venv
python/.venv/bin/pip install -q pytest numpy
python/.venv/bin/pip install -e python
cd python && .venv/bin/python -m pytest -q
\end{verbatim}

\subsection*{A.3 Notes on tool availability}
A \LaTeX{} toolchain is not required to run the checks above.
If you wish to compile the PDF locally, install one of \texttt{latexmk}, \texttt{pdflatex}, or \texttt{tectonic} and run:
\begin{verbatim}
./scripts/build_paper.sh
\end{verbatim}

\section{Appendix B: Lean formalization map}
The Lean development provides a minimal mechanized backbone for the order-theoretic and
``thin packaging'' parts of the theory. It is intentionally small: no probability theory
or Markov chain results are mechanized in Lean for this paper.

\subsection*{B.1 What is mechanized}
The following items are implemented in Lean under \texttt{formal/}:

\begin{itemize}
  \item \textbf{Order-closure and closure ladders} (order-theoretic closure operators, fixed points, strict ladder relation, and iterate stabilization).
  \item \textbf{Thin packaging / reflection} for closure operators, expressed via the Galois insertion associated to a closure operator.
  \item \textbf{Idempotent endomaps} and fixed-point subtypes.\\
  Bridge lemma from order-closures to idempotent endomaps.
\end{itemize}

\subsection*{B.2 File map and key declarations}
\begin{itemize}
  \item \path{formal/ClosureLadder/Basic.lean}:\\
    closure operators, closed subtype, strict ladder relation, and iterate stabilization.\\
    Key lemmas used in the manuscript include \texttt{ladder\allowbreak *mono},
    \texttt{closure\allowbreak *iterate\allowbreak *succ}, and
    \texttt{closure\allowbreak *iterate\allowbreak *ge\allowbreak *one}.
  \item \path{formal/ClosureLadder/Packaging.lean}:\\
    reflection-style lemmas for the closed subtype (thin-category packaging anchor).
  \item \texttt{formal/ClosureLadder/}\\
    \texttt{IdempotentEndo.lean}:\\
    the structure \texttt{IdempotentEndo} and its fixed-point subtype.\\
    Bridge lemma: \texttt{ClosureOperator.}\\
    \texttt{toIdempotentEndo}.
  \item \path{formal/ClosureLadder.lean}:\\
    umbrella import module.
\end{itemize}

\subsection*{B.3 How to build}
From the repository root:
\begin{verbatim}
./check_lean.sh
\end{verbatim}
This script performs a cache fetch (when available) and then runs \texttt{lake build}.

\section{Appendix C: Python evidence map}
The repository contains a small deterministic Python harness (finite-state Markov kernels,
path laws, KL divergence, coarse-graining pushforwards, graph cycle utilities, and related
diagnostics). These computations are \emph{evidence and sanity checks} for the mathematical
definitions and theorem statements; they are not used as premises in any proof.

\subsection*{C.1 How to run}
See Appendix~A.2 for a one-shot test run via \texttt{pytest}. All tests are deterministic
(fixed seeds or no randomness).

\subsection*{C.2 Evidence by theme (tests and scripts)}
The following tests and scripts witness the main diagnostic phenomena discussed in the paper:

\paragraph{Path reversal asymmetry and data processing (DPI).}
Tests include randomized DPI checks, strict hiding examples, and nonstationary initial distributions.
Representative files:
\begin{verbatim}
python/tests/test_paths_kl_dpi.py
python/tests/test_paths_dpi_randomized.py
python/tests/test_paths_dpi_nonstationary_init.py
python/tests/test_paths_strict_hiding.py
python/scripts/dpi_sweep.py
\end{verbatim}

\paragraph{Protocol trap and the ``P3 needs P6\_drive'' correction under autonomy.}
Tests separate externally scheduled noncommuting kernels from autonomous phase-included models,
including m-phase protocols and the ``protocol clock audit''.
Representative files:
\begin{verbatim}
python/tests/test_protocol_trap_external_vs_autonomous.py
python/tests/test_protocol_clock_audit_phi_included.py
python/tests/test_protocol_mphase_random_scan.py
python/tests/test_protocol_stroboscopic.py
python/scripts/protocol_trap_demo.py
\end{verbatim}

\paragraph{Graph topology effects of P2 (edge deletion) and P1 (rewrites).}
Tests verify cycle-rank monotonicity under edge deletion, show cycle-rank increases under edge addition,
and demonstrate spectral-gap changes (including a decisive gap-decrease rewrite).
Representative files:
\begin{verbatim}
python/tests/test_graph_p2_cycle_rank_monotone.py
python/tests/test_graph_p1_edge_add_beta1_delta.py
python/tests/test_p1_spectral_gap_decrease.py
python/scripts/graph_p1_p2_demo.py
python/scripts/p1_gap_decrease_demo.py
\end{verbatim}

\paragraph{Dynamics-induced empirical endomap and idempotence defect.}
Tests implement the induced empirical endomap, the total-variation idempotence defect, and show that
refinement can help or hurt depending on leakage and timescale.
Representative files:
\begin{itemize}
  \item \texttt{python/tests/}\\
  \texttt{test\_empirical\_closure\_}\\
  \texttt{idempotence\_defect\_tv.py}
  \item \texttt{python/tests/}\\
  \texttt{test\_empirical\_closure\_}\\
  \texttt{refinement\_reveals\_objects.py}
  \item \texttt{python/tests/}\\
  \texttt{test\_empirical\_closure\_}\\
  \texttt{refinement\_can\_hurt.py}
  \item \path{python/scripts/empirical_closure_demo.py}
  \item \path{python/scripts/refinement_help_hurt_sweep.py}
\end{itemize}

\paragraph{Finite forcing / definability rarity.}
Tests include exact enumeration for small cases, Monte Carlo checks, and a contrast with fixed-weight sampling.
Representative files:
\begin{verbatim}
python/tests/test_forcing_exact_enumeration.py
python/tests/test_forcing_monte_carlo_matches_theory.py
python/tests/test_forcing_fixed_weight_differs.py
python/scripts/forcing_sweep.py
\end{verbatim}

\paragraph{Finite-memory ICAP witnesses.}
Tests implement the discrete-time finite-memory mass bound for truncated inputs.
Representative files:
\begin{verbatim}
python/src/clt/bridge_kernels.py
python/tests/test_bridge_kernels_fmem_icap.py
\end{verbatim}

\paragraph{Balanced-atom route (definitions + kernel-mass hinge).}
These files provide the finite-memory ICAP and toy compression witnesses used by the balanced-atom route.
Representative files:
\begin{verbatim}
python/src/clt/bridge_kernels.py
python/tests/test_bridge_kernels_fmem_icap.py
python/src/clt/balanced_atoms.py
python/tests/test_balanced_atoms_icap.py
python/scripts/balanced_atoms_witness.py
python/src/clt/ect_toy.py
python/tests/test_ect_compression_witness.py
python/scripts/ect_compression_witness.py
\end{verbatim}

\paragraph{ECT compression and capacity witnesses.}
Tests and scripts witness linear mode count and linear ICAP capacity growth in a toy family.
Representative files:
\begin{verbatim}
python/src/clt/ect_toy.py
python/tests/test_ect_compression_witness.py
python/scripts/ect_compression_witness.py
\end{verbatim}

\paragraph{ACC as graph 1-form exactness and cycle affinities.}
Tests compute the antisymmetric log-ratio 1-form, cycle integrals, and potential reconstruction when exact.
Representative files:
\begin{verbatim}
python/tests/test_acc_graph_one_form.py
\end{verbatim}

\section{Appendix D: Zeno cascades and depth}\label{app:zeno}
\subsection{Setup: frontier and Zeno criterion}
Fix a threshold $\theta>0$ and a scale index $j\in\mathbb N$.
Let $E_j:[0,\infty)\to[0,\infty)$ be an activity observable at level $j$, and define the frontier
\[
j_*(t):=\max\{j:\ E_j(t)\ge \theta\}.
\]
Define crossing times and step durations by
\[
t_j := \inf\{t:\ j_*(t)\ge j\},\qquad \Delta t_j := t_{j+1}-t_j.
\]
A \emph{Zeno cascade} is the event that $(t_j)$ remains bounded as $j\to\infty$, equivalently
$\sum_{j\ge j_0}\Delta t_j<\infty$.

\subsection{Bridges, ports, and accounting}
For each boundary $j$ introduce a real Hilbert space $U_j$ of port variables.
A port input is a signal $u_j:[0,\infty)\to U_j$ (measurable), and the eliminated sector is packaged
as a causal bridge operator $Z_j$ producing a port response $y_j=Z_j[u_j]$.
Define instantaneous interface power and positive supply by
\[
p_j(t):=\langle u_j(t),y_j(t)\rangle_{U_j},\qquad p_j^+(t):=\max\{p_j(t),0\},
\]
and the supplied work over an interval by
\[
W_j^+[s,t]:=\int_s^t p_j^+(\tau)\,d\tau.
\]
Using $W^+$ rather than net work avoids cancellations from backscatter and matches the use of
``forward supply'' in the latency argument.

\subsection{Storage-based activity and the WORK quantum (Option B)}
A bridge $Z_j$ is \emph{passive with storage} if there exists a nonnegative storage functional
$S_j(\cdot)\ge 0$ on an internal state $x_j(t)$ such that for all $s<t$,
\[
S_j(x_j(t)) - S_j(x_j(s))\ \le\ \int_s^t p_j(\tau)\,d\tau\ \le\ W_j^+[s,t].
\tag{$\ast$}\label{eq:passive-storage}
\]
Define the (storage-based) activity at level $j+1$ relative to the activation time $t_j$ by
\[
E_{j+1}(t):=
\begin{cases}
0, & t<t_j,\\[2mm]
\displaystyle \sup_{s\in[t_j,t]}\big(S_j(x_j(s))-S_j(x_j(t_j))\big), & t\ge t_j.
\end{cases}
\]
Define the next crossing time by
\[
t_{j+1}:=\inf\{t\ge t_j:\ E_{j+1}(t)\ge \theta\}.
\]
This definition is robust to ``pre-loading'' before $t_j$ because the baseline resets at $t_j$.

\begin{lemma}[WORK from storage growth]\label{lem:work-quantum}
Assume that the passive-storage inequality \eqref{eq:passive-storage} holds at boundary $j$,
and that $t\mapsto E_{j+1}(t)$ is right-continuous.
If $t_{j+1}<\infty$, then the supplied work over the crossing interval obeys
\[
W_j^+[t_j,t_{j+1}] \ \ge\ \theta.
\]
\end{lemma}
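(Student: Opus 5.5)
The plan is to compare the storage gained over the crossing interval with the work supplied over the same interval. Write $t_j$ and $t_{j+1}$ for the activation and crossing times, and assume $t_{j+1}<\infty$.

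The first step is to show that the threshold is actually reached at the crossing time, i.e.\ $E_{j+1}(t_{j+1})\ge\theta$. By definition of the infimum there is a sequence $t_n\downarrow t_{j+1}$ (possibly eventually constant) with $t_n\ge t_j$ and $E_{j+1}(t_n)\ge\theta$. Right-continuity of $t\mapsto E_{j+1}(t)$ then gives $E_{j+1}(t_{j+1})=\lim_n E_{j+1}(t_n)\ge\theta$. This is the only place right-continuity is used. Since $\theta>0$ and $E_{j+1}$ vanishes on $t<t_j$, the case $t_{j+1}<t_j$ cannot occur, and in any case $t_{j+1}\ge t_j$ by definition. So we are in the branch $t_{j+1}\ge t_j$ of the definition, and
\[
\sup_{s\in[t_j,t_{j+1}]}\bigl(S_j(x_j(s))-S_j(x_j(t_j))\bigr)\ \ge\ \theta .
\]

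The second step converts this supremum bound into a work bound. Fix any $s\in[t_j,t_{j+1}]$. The passive-storage inequality \eqref{eq:passive-storage} applied on $[t_j,s]$ gives
\[
S_j(x_j(s))-S_j(x_j(t_j))\ \le\ W_j^+[t_j,s].
\]
For $s=t_j$ the left side is $0$, so the bound holds trivially there as well. Because $p_j^+\ge 0$, the map $s\mapsto W_j^+[t_j,s]$ is nondecreasing, so $W_j^+[t_j,s]\le W_j^+[t_j,t_{j+1}]$. Taking the supremum over $s\in[t_j,t_{j+1}]$ and combining with the first step yields
\[
\theta\ \le\ E_{j+1}(t_{j+1})\ \le\ W_j^+[t_j,t_{j+1}],
\]
which is the claim.

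The main subtlety is the attainment step. Without right-continuity, the infimum defining $t_{j+1}$ might not achieve the threshold, and the bound would only hold on intervals $[t_j,t_{j+1}+\epsilon]$. Even then, letting $\epsilon\downarrow0$ and using continuity of the integral $W_j^+$ in its upper endpoint, which holds assuming $p_j^+$ is locally integrable, would still recover the result. I would mention this as a fallback remark.
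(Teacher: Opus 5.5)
Your proof is correct and follows the paper's argument: right-continuity gives $E_{j+1}(t_{j+1})\ge\theta$, and the passive-storage inequality on $[t_j,s]$, combined with monotonicity of $s\mapsto W_j^+[t_j,s]$, bounds the storage gain by $W_j^+[t_j,t_{j+1}]$. The only difference is cosmetic: you bound the supremum termwise, whereas the paper picks an $\varepsilon$-near-maximizer $s_\varepsilon$ and lets $\varepsilon\to0$; your fallback remark is also correct in noting that right-continuity could be dropped when $p_j^+$ is locally integrable.
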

\begin{proof}
By right-continuity and the definition of $t_{j+1}$, $E_{j+1}(t_{j+1})\ge \theta$.
For any $\varepsilon>0$, choose $s_\varepsilon\in[t_j,t_{j+1}]$ with
$S_j(x_j(s_\varepsilon))-S_j(x_j(t_j))\ge \theta-\varepsilon$.
Apply \eqref{eq:passive-storage} on $[t_j,s_\varepsilon]$ to get
$W_j^+[t_j,s_\varepsilon]\ge \theta-\varepsilon$, and by monotonicity of $W_j^+$ in the interval,
$W_j^+[t_j,t_{j+1}]\ge \theta-\varepsilon$.
Letting $\varepsilon\to 0$ yields $W_j^+[t_j,t_{j+1}]\ge \theta$.
\end{proof}

\subsection{Integrated throughput (ICAP) and feasibility \texorpdfstring{$\Rightarrow$}{=>} latency bound}
The correct throughput notion for bridges with memory is an \emph{integrated} cap.
For all $s<t$, with truncated input $u_j^{[s,t]}:=u_j\,\mathbf 1_{[s,t]}$ and truncated output $y_j^{[s,t]}:=Z_j[u_j^{[s,t]}]$, define the windowed positive work
\[
W_{j,\mathrm{win}}^+[s,t]\ :=\ \int_s^t \max\{\langle u_j(\tau),y_j^{[s,t]}(\tau)\rangle,0\}\,d\tau.
\]
We say boundary $j$ satisfies an integrated capacity inequality (ICAP) if there exists $\Lambda(j)\in(0,\infty)$ such that
\[
W_{j,\mathrm{win}}^+[s,t]\ \le\ \Lambda(j)\int_s^t \|u_j(\tau)\|_{U_j}^2\,d\tau.
\tag{ICAP}\label{eq:icap}
\]
To convert this into a time lower bound we also isolate a primitive feasibility constraint:
there exists $\bar B(j)\in(0,\infty)$ such that on each crossing interval,
\[
\int_{t_j}^{t_{j+1}}\|u_j(\tau)\|_{U_j}^2\,d\tau\ \le\ \bar B(j)\,\Delta t_j.
\tag{FEAS}\label{eq:feas}
\]
This is an $L^2$ \emph{energy-density cap} (average squared port amplitude), which is compatible with memory and allows short spikes.

\begin{lemma}[Latency lower bound from WORK+ICAP+FEAS]\label{lem:latency-lb}
If Lemma~\ref{lem:work-quantum} holds at boundary $j$, \eqref{eq:icap}+\eqref{eq:feas} hold on $[t_j,t_{j+1}]$, and the following fresh-start alignment holds on the crossing interval,
\[
W_j^+[t_j,t_{j+1}] \ \le\ W_{j,\mathrm{win}}^+[t_j,t_{j+1}],
\tag{ALIGN}\label{eq:align}
\]
then
\[
\Delta t_j\ \ge\ \frac{\theta}{\Lambda(j)\,\bar B(j)}.
\]
\end{lemma}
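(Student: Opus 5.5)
The plan is to chain the three inequalities directly: Lemma~\ref{lem:work-quantum} gives a lower bound on supplied work, \eqref{eq:align} and \eqref{eq:icap} turn supplied work into port energy, and \eqref{eq:feas} turns port energy into elapsed time. The only bookkeeping is the finite-crossing case.

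\textbf{Step 0 (trivial case).} If $t_{j+1}=\infty$, then $\Delta t_j=\infty$ and the bound holds trivially. So assume $t_{j+1}<\infty$, which is the hypothesis Lemma~\ref{lem:work-quantum} needs.

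\textbf{Step 1 (work quantum).} By Lemma~\ref{lem:work-quantum},
\[
\theta\le W_j^+[t_j,t_{j+1}].
\]

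\textbf{Step 2 (alignment and capacity).} Apply \eqref{eq:align}, then \eqref{eq:icap} with $s=t_j$ and $t=t_{j+1}$:
\[
\theta\le W_{j,\mathrm{win}}^+[t_j,t_{j+1}]\le \Lambda(j)\int_{t_j}^{t_{j+1}}\|u_j(\tau)\|_{U_j}^2\,d\tau.
\]
The alignment step matters here. Without \eqref{eq:align}, the actual response $y_j$ could contain memory contributions from inputs before $t_j$, which the windowed ICAP does not control.

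\textbf{Step 3 (feasibility).} Apply \eqref{eq:feas} to get
\[
\theta\le \Lambda(j)\,\bar B(j)\,\Delta t_j.
\]

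\textbf{Step 4 (conclude).} Since $\Lambda(j),\bar B(j)\in(0,\infty)$, divide by $\Lambda(j)\,\bar B(j)$ to obtain $\Delta t_j\ge \theta/(\Lambda(j)\bar B(j))$.

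The main point requiring care, which is mild, is the degenerate case $t_{j+1}=t_j$, that is $\Delta t_j=0$. Then \eqref{eq:feas} forces $\theta\le 0$, contradicting $\theta>0$. So this case is excluded automatically, and the chain of inequalities handles it without a separate argument. Beyond this, I will only check that every quantity is evaluated on the same interval $[t_j,t_{j+1}]$, so that the three hypotheses can be composed directly.
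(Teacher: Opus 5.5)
Your proposal is correct and matches the paper's proof: Lemma~\ref{lem:work-quantum} gives $W_j^+[t_j,t_{j+1}]\ge\theta$, \eqref{eq:align} carries this to the windowed work, and \eqref{eq:icap} followed by \eqref{eq:feas} yields $\theta\le\Lambda(j)\bar B(j)\,\Delta t_j$, which you then rearrange. Your separate handling of the cases $t_{j+1}=\infty$ and $\Delta t_j=0$ is harmless extra bookkeeping that the paper leaves implicit.
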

\begin{proof}
By Lemma~\ref{lem:work-quantum}, $W_j^+[t_j,t_{j+1}]\ge \theta$.
By \eqref{eq:align}, $W_{j,\mathrm{win}}^+[t_j,t_{j+1}] \ge W_j^+[t_j,t_{j+1}] \ge \theta$.
By ICAP and FEAS,
\[
W_{j,\mathrm{win}}^+[t_j,t_{j+1}] \le \Lambda(j)\int_{t_j}^{t_{j+1}}\|u_j\|^2
\le \Lambda(j)\bar B(j)\Delta t_j.
\]
Rearrange.
\end{proof}
\begin{remark}[When the alignment hypothesis holds]
Condition \eqref{eq:align} is automatic for memoryless bridges and holds for memoryful bridges when the boundary is "fresh" at activation (e.g.\ the port is inactive before $t_j$ or the bridge state is reset at $t_j$).
\end{remark}

\subsection{No-Zeno criterion via divergence}
Define the effective per-boundary capacity $\mathrm{Cap}(j):=\Lambda(j)\bar B(j)$.
Combining Lemma~\ref{lem:latency-lb} with the elementary summability criterion yields a clean decision rule.

\begin{theorem}[No-Zeno from divergent reciprocal capacity]\label{thm:no-zeno}
Assume that for all sufficiently large $j$, Lemma~\ref{lem:latency-lb} applies with the same threshold $\theta>0$.
If
\[
\sum_{j\ge j_0}\frac{1}{\mathrm{Cap}(j)}\;=\;\infty,
\]
then $\sum_{j\ge j_0}\Delta t_j=\infty$ and hence no Zeno cascade occurs.
\end{theorem}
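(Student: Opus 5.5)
The argument combines Lemma~\ref{lem:latency-lb} with a comparison test for series, so it needs only a short chain of inequalities.

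First, fix $j_1\ge j_0$ such that for every $j\ge j_1$ all hypotheses of Lemma~\ref{lem:latency-lb} hold with the common threshold $\theta>0$. Concretely, this means the passive-storage inequality \eqref{eq:passive-storage} (and hence Lemma~\ref{lem:work-quantum}), \eqref{eq:icap}, \eqref{eq:feas} and \eqref{eq:align}. I will also take care of the case in which some crossing time is infinite. If $t_{j+1}=\infty$ for some $j\ge j_0$, then $t_j$ is unbounded by definition, or equivalently $\Delta t_j=\infty$. In that case there is trivially no Zeno cascade and $\sum\Delta t_j=\infty$ under the convention that a sum with an infinite term is infinite. So I may assume every $t_j$ is finite. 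This is exactly the situation in which Lemma~\ref{lem:work-quantum}, and therefore Lemma~\ref{lem:latency-lb}, applies.

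Under that assumption, Lemma~\ref{lem:latency-lb} gives, for every $j\ge j_1$,
\[
\Delta t_j\ \ge\ \frac{\theta}{\Lambda(j)\bar B(j)}=\frac{\theta}{\mathrm{Cap}(j)}>0,
\]
where $\mathrm{Cap}(j)\in(0,\infty)$ because $\Lambda(j),\bar B(j)\in(0,\infty)$. Each $\Delta t_j\ge 0$, since $t_{j+1}\ge t_j$ by the definition of $t_{j+1}$ as an infimum over $t\ge t_j$. Hence the tail sum satisfies
\[
\sum_{j\ge j_0}\Delta t_j\ \ge\ \sum_{j\ge j_1}\Delta t_j\ \ge\ \theta\sum_{j\ge j_1}\frac{1}{\mathrm{Cap}(j)}.
\]
The right-hand side is infinite. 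The reason is that $\sum_{j\ge j_0}1/\mathrm{Cap}(j)=\infty$ and the finitely many discarded terms with $j_0\le j<j_1$ are finite positive numbers, so removing them does not affect divergence.

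Finally, telescoping gives $t_J=t_{j_0}+\sum_{j=j_0}^{J-1}\Delta t_j$, so $t_J\to\infty$ as $J\to\infty$. Thus $(t_j)$ is unbounded, which by definition means no Zeno cascade occurs. The only delicate point is the bookkeeping around "for all sufficiently large $j$" and possibly infinite crossing times, both handled above. The analytic content is entirely contained in Lemma~\ref{lem:latency-lb}.
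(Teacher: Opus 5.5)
Your proof is correct and takes the same route as the paper: apply Lemma~\ref{lem:latency-lb} to get $\Delta t_j\ge\theta/\mathrm{Cap}(j)$ for all large $j$, then conclude by comparison with the divergent series $\sum_j 1/\mathrm{Cap}(j)$. Your extra bookkeeping (the tail index $j_1$, nonnegativity of $\Delta t_j$, and the infinite-crossing-time case where Lemma~\ref{lem:work-quantum} would not apply) only makes explicit what the paper's one-line proof leaves implicit.
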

\begin{proof}
From Lemma~\ref{lem:latency-lb}, $\Delta t_j\ge \theta/\mathrm{Cap}(j)$, and the sum diverges by assumption.
\end{proof}

Appendix~E (Section~\ref{sec:ect-template}) provides one concrete route to certify DIV via mode compression and per-atom ICAP,
summarized in Theorem~\ref{thm:ect-summary}. This is optional evidence and does not change the formal criterion above.

\subsection{Hard lemma slots (WORK/CAP/route)}
The No-Zeno criterion above is elementary once three ingredients are available:
(i) a storage-based work quantum (Lemma~\ref{lem:work-quantum}),
(ii) an integrated throughput cap (ICAP), and
(iii) a feasibility/energy-density cap (FEAS).
In general systems, (ii)--(iii) are the true load-bearing frontier, and divergence is tied to how these
capabilities scale with depth; we isolate them as hard lemma slots.

\paragraph{HL-WORK (storage existence).}
For each $j$, there exists a storage functional $S_j\ge 0$ and internal state $x_j(t)$ such that \eqref{eq:passive-storage} holds.
\emph{Failure mode:} the packaged bridge is effectively active (storage can grow without supplied work).

\paragraph{HL-CAP-X1 (bounded dissipation density).}
Assuming a linear time-invariant (LTI) bridge, the positive-real constraint from passivity is supplemented by a uniform upper bound on the dissipative part, yielding ICAP \eqref{eq:icap}.
\emph{Failure mode:} arbitrarily sharp resonant dissipation (unbounded dissipative density) destroys any uniform $\Lambda(j)$.

\paragraph{HL-CAP-X2 (finite memory / kernel mass).}
Assuming a causal convolution form, a bound $\int_0^\infty\|k_j(t)\|\,dt<\infty$ yields an $L^2$ operator bound and hence ICAP.
\emph{Failure mode:} long memory tails (nonintegrable kernel mass) or singular instantaneous response.

\paragraph{HL-CAP-X3 (finite stable passive realization).}
Assuming a finite-dimensional passive state realization with exponential stability and uniformly bounded parameters, ICAP follows with an explicit $\Lambda(j)$ (up to an additive term depending on initial storage).
\emph{Failure mode:} marginally stable modes or scale-dependent complexity producing near-resonances.

\paragraph{HL-ROUTE (route mismatch controls gain growth).}
Let $\mathrm{RM}(j)$ quantify the mismatch between one-shot and two-step packaging across two levels.
A representative target inequality is
\[
\mathrm{Gain}(j\leftarrow j{+}2)\ \le\ \mathrm{Gain}(j\leftarrow j{+}1)\,\mathrm{Gain}(j{+}1\leftarrow j{+}2)\ +\ \mathrm{RM}(j),
\]
so small (e.g.\ summable) route mismatch prevents hidden super-gain from shortcut reductions.
\emph{Failure mode:} large route mismatch allows a one-shot packaged bridge to have gain far larger than sequential composition predicts.

\subsection{Checkable divergence criteria}
Write $\mathrm{Cap}(j)=\Lambda(j)\bar B(j)$.
The No-Zeno condition in Theorem~\ref{thm:no-zeno} is checkable from asymptotics.
For example, if there exist $C,c>0$ and exponents $\alpha,\beta\in\mathbb R$ such that for all large $j$,
\[
\Lambda(j)\le C(j+1)^\alpha,\qquad \bar B(j)\le C(j+1)^\beta,
\]
then $\mathrm{Cap}(j)\le C'(j+1)^{\alpha+\beta}$.
In particular,
\[
\sum_j \frac{1}{\mathrm{Cap}(j)}=\infty \qquad \text{whenever } \alpha+\beta\le 1,
\]
by comparison with the $p$-series.
More refined tests can be stated in terms of dyadic subsequences when $1/\mathrm{Cap}(j)$ is eventually nonincreasing.

\subsection{Toy model families (necessity witnesses)}
The criterion above is sharp in the following minimal sense.
If
\[
\sum_j \frac{1}{\mathrm{Cap}(j)}<\infty,
\]
then Zeno is compatible with WORK+ICAP+FEAS.

\paragraph{Zeno by fast capacity growth (DIV fails).}
Fix $\theta>0$ and define $\mathrm{Cap}(j)=2^j$.
Set $\Delta t_j:=\theta/\mathrm{Cap}(j)=\theta\,2^{-j}$ and $t_{j+1}:=t_j+\Delta t_j$.
Then $\sum_j\Delta t_j<\infty$.
Define $W_j^+[t_j,t_{j+1}]:=\theta$ and choose any $u_j$ with
$\int_{t_j}^{t_{j+1}}\|u_j\|^2=\Delta t_j$ so that FEAS holds with $\bar B(j)=1$.
ICAP holds with $\Lambda(j)=\mathrm{Cap}(j)$ by construction:
$W_j^+=\Lambda(j)\int\|u_j\|^2$.
This realizes Zeno while keeping the WORK quantum fixed.

\paragraph{Zeno by vanishing work quantum (WORK fails).}
Alternatively, even with a uniform capacity $\mathrm{Cap}(j)\equiv 1$, if the required work per level decays
(e.g.\ $\theta_j:=2^{-j}$), then taking $\Delta t_j=\theta_j$ yields $\sum_j\Delta t_j<\infty$.
This shows that some form of uniform work quantization is genuinely load-bearing for No-Zeno.

\subsection{Decision tree (settlement frontier)}
The No-Zeno engine decomposes into three independent settlement points:
\begin{itemize}
  \item \textbf{WORK (storage-based):} does the packaged bridge admit a passive storage inequality \eqref{eq:passive-storage}?
  \item \textbf{CAP (integrated):} can one derive an ICAP bound \eqref{eq:icap} (e.g.\ from passivity plus an additional structural condition)?
  \item \textbf{Divergence:} does $\sum_j 1/\mathrm{Cap}(j)$ diverge?
\end{itemize}
If all three hold (together with FEAS), Theorem~\ref{thm:no-zeno} rules out Zeno.
If any one fails, Zeno is not excluded in this abstract setting, and the failure mode indicates what structural assumption must break
(e.g.\ active bridges, uncontrolled resonance/memory, or rapidly growing effective capacity across scales).

\section{Appendix E: Toolkit theory---defects}\label{app:tk-defects}
\subsection{Defects as quantitative relaxations of exact laws}\label{sec:tk-defect-calculus}
% IDs: D-TK-DEF-01, D-TK-DEF-02, D-TK-ROUTE-01

A recurring pattern in this manuscript is that exact algebraic laws (idempotence, commutativity of reductions, audit monotonicity)
hold only approximately in realistic multiscale settings. We therefore standardize a \emph{defect calculus}:
each desired law is paired with a nonnegative scalar measuring its violation, and simple norm inequalities propagate these defects.

\paragraph{Idempotence defect (TV/L1).}
Let $Z$ be finite and let $\Delta(Z)$ denote the simplex of distributions on $Z$.
For a linear map $E:\Delta(Z)\to\Delta(Z)$ represented by a matrix acting on row distributions (so $\mu\mapsto \mu E$),
define the total-variation idempotence defect by
\[
\delta_{\mathrm{TV}}(E)\ :=\ \sup_{\mu\in\Delta(Z)}\big\|\mu(E^2-E)\big\|_{\mathrm{TV}}.
\tag{D-TK-DEF-02}\label{eq:tk-idem-def}
\]
For finite $Z$, $\|\nu\|_{\mathrm{TV}}=\tfrac12\|\nu\|_{1}$ for signed measures $\nu$, hence $\delta_{\mathrm{TV}}(E)$ is computable.

\begin{remark}[Extreme points attain the TV idempotence defect]\label{rem:tk-extreme-points}
Because $\mu\mapsto \|\mu(E^2-E)\|_{\mathrm{TV}}$ is convex in $\mu$, the supremum in \eqref{eq:tk-idem-def} is attained at extreme points:
\[
\delta_{\mathrm{TV}}(E)\ =\ \max_{z\in Z}\big\|\delta_z(E^2-E)\big\|_{\mathrm{TV}}
\ =\ \frac12\max_{i}\sum_{j}\big|(E^2-E)_{ij}\big|.
\]
This is the norm used by the reproducible diagnostics in the accompanying code (row-vector convention).
\end{remark}

\paragraph{Route mismatch defect (abstract).}
Many constructions in multiscale reduction depend on an elimination \emph{route}.
Let $\mathsf{Pack}_{j\leftarrow k}$ denote the effective packaging map from depth $k$ to depth $j$ (with $k>j$),
and let $d(\cdot,\cdot)$ be a metric on the space of packaged objects (e.g.\ an operator norm in linear settings).
Define the route mismatch defect across two steps by
\[
\mathrm{RM}(j)\ :=\ d\!\left(\mathsf{Pack}_{j\leftarrow j+2},\ \mathsf{Pack}_{j\leftarrow j+1}\circ \mathsf{Pack}_{j+1\leftarrow j+2}\right).
\tag{D-TK-ROUTE-01}\label{eq:tk-rm}
\]
This quantity is purely definitional here; in a later toolkit step we relate it to controlled growth of effective capacities.

\subsubsection*{Route dependence, capacity, and a one-step control inequality}\label{sec:tk-route-capacity}
% ID: L-TK-ROUTE-01

To make \eqref{eq:tk-rm} operational we isolate a minimal setting in which packaged objects are linear operators
between normed port spaces and the metric $d$ is induced by an operator norm. This covers, for example,
linearized bridges and bounded-gain summaries of more complex components.

\paragraph{Abstract packaging maps.}
For integers $k>j$, let $\mathsf{Pack}_{j\leftarrow k}$ denote the packaged object obtained by eliminating
intermediate structure between depths $k$ and $j$. We do not assume functoriality or strict associativity:
route dependence is precisely the possibility that
$\mathsf{Pack}_{j\leftarrow j+2}$ differs from $\mathsf{Pack}_{j\leftarrow j+1}\circ \mathsf{Pack}_{j+1\leftarrow j+2}$.

\paragraph{Linear-operator specialization.}\mbox{}\\
Assume each $\mathsf{Pack}_{j\leftarrow k}$ is a bounded linear operator between normed spaces,
with operator norm $\|\cdot\|$.\par
We take the metric in \eqref{eq:tk-rm} to be
\[
d(A,B):=\|A-B\|.
\]
Define the induced \emph{gain} of a packaged object by
\[
\mathrm{Gain}(j\leftarrow k)\ :=\ \|\mathsf{Pack}_{j\leftarrow k}\|.
\]

\begin{lemma}[Route mismatch controls gain growth]\label{lem:tk-route-capacity}
In the default linear-operator setting above, the two-step route mismatch
\[
\begin{aligned}
\mathrm{RM}(j)
&= \bigl\|\mathsf{Pack}_{j\leftarrow j+2}\\
&\quad - \mathsf{Pack}_{j\leftarrow j+1}\circ\mathsf{Pack}_{j+1\leftarrow j+2}\bigr\|
\end{aligned}
\]
implies the gain bound
\[
\begin{aligned}
\mathrm{Gain}(j\leftarrow j+2)
&\le \mathrm{Gain}(j\leftarrow j+1)\,\mathrm{Gain}(j+1\leftarrow j+2)\\
&\quad + \mathrm{RM}(j).
\end{aligned}
\tag{L-TK-ROUTE-01}\label{eq:tk-route-capacity}
\]
\end{lemma}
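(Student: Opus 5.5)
The plan is to add and subtract the composite route, then apply the triangle inequality and submultiplicativity of the operator norm. Abbreviate
\[
A:=\mathsf{Pack}_{j\leftarrow j+2},\qquad B:=\mathsf{Pack}_{j\leftarrow j+1},\qquad C:=\mathsf{Pack}_{j+1\leftarrow j+2}.
\]
These are bounded linear operators between the appropriate normed port spaces, so $B\circ C$ is defined and has the same domain and codomain as $A$. Hence $A-B\circ C$ is a bounded operator, and by definition $\mathrm{RM}(j)=\|A-B\circ C\|$.

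First write the decomposition $A=B\circ C+(A-B\circ C)$. The triangle inequality for the operator norm then gives
\[
\|A\|\le \|B\circ C\|+\|A-B\circ C\|=\|B\circ C\|+\mathrm{RM}(j).
\]

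Second, use submultiplicativity of induced operator norms: $\|B\circ C\|\le \|B\|\,\|C\|$. This follows because $\|BCx\|\le\|B\|\,\|Cx\|\le\|B\|\,\|C\|\,\|x\|$ for every $x$. Substituting the definitions $\mathrm{Gain}(j\leftarrow j+2)=\|A\|$, $\mathrm{Gain}(j\leftarrow j+1)=\|B\|$ and $\mathrm{Gain}(j+1\leftarrow j+2)=\|C\|$ yields \eqref{eq:tk-route-capacity}.

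There is no genuine obstacle here. The only point needing care is a typing check: the norms on the intermediate port space at depth $j+1$ must be the same in the definitions of $\|B\|$ and $\|C\|$, so that submultiplicativity applies. This holds in the default linear-operator setting, where each depth carries a single fixed normed space.
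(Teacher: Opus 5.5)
Your proof is correct and is the same as the paper's: write $A = B\circ C + (A - B\circ C)$, apply the triangle inequality, then use submultiplicativity $\|B\circ C\|\le\|B\|\,\|C\|$. Your check that $\|B\|$ and $\|C\|$ use the same norm on the intermediate depth-$(j+1)$ space is a reasonable point that the paper leaves implicit.
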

\begin{proof}
By the triangle inequality,
\[
\begin{aligned}
\|\mathsf{Pack}_{j\leftarrow j+2}\|
&\le \|\mathsf{Pack}_{j\leftarrow j+1}\circ\mathsf{Pack}_{j+1\leftarrow j+2}\|\\
&\quad + \|\mathsf{Pack}_{j\leftarrow j+2}-\mathsf{Pack}_{j\leftarrow j+1}\circ\mathsf{Pack}_{j+1\leftarrow j+2}\|.
\end{aligned}
\]
Submultiplicativity of the operator norm gives
\[
\|\mathsf{Pack}_{j\leftarrow j+1}\circ\mathsf{Pack}_{j+1\leftarrow j+2}\|
\le
\|\mathsf{Pack}_{j\leftarrow j+1}\|\,\|\mathsf{Pack}_{j+1\leftarrow j+2}\|.
\]
The final term is $\mathrm{RM}(j)$ by definition.
\end{proof}

\begin{remark}[Holonomy without teleology]\label{rem:tk-holonomy-no-teleology}
The purpose of $\mathrm{RM}(j)$ is to quantify \emph{route dependence} (a holonomy-like effect) of reduction/packaging.
This is logically distinct from directionality: a system may exhibit nontrivial route mismatch while having zero audit
(e.g.\ zero affinities or vanishing path reversal asymmetry under the stated assumptions).
The arrow-of-time results in Section~\ref{sec:aot} remain certified by the audit functional and its monotonicity,
not by route dependence alone.
\end{remark}

\subsection{Bridge objects: ports, passivity, and integrated throughput}\label{sec:tk-bridge-toolkit}
% IDs: D-TK-BRG-01, D-TK-BRG-02, D-TK-BRG-03

We isolate a minimal, representation-agnostic interface for \emph{bridge objects}:
a packaged eliminated sector acting causally on port variables of the retained sector.
The purpose is not to specialize to any particular physics, but to standardize what must be provided
to make WORK/CAP-style arguments and accounting statements well-posed.

\begin{definition}[Ports, causal bridge, and accounting]\label{def:tk-bridge-ports}
A \emph{port space} is a finite-dimensional real Hilbert space $U$ with inner product
$\langle\cdot,\cdot\rangle_U$ and norm $\|\cdot\|_U$.
Fix a finite time horizon $T>0$ and write $\mathcal U := L^2([0,T];U)$ for square-integrable $U$-valued signals.
A (deterministic) \emph{bridge operator} is a map $\mathsf Z:\mathcal U\to\mathcal U$.
It is \emph{causal} if for all $t\in[0,T]$ and all $u,v\in\mathcal U$,
\[
u(\tau)=v(\tau)\ \text{for a.e.\ }\tau\in[0,t]\quad\Longrightarrow\quad (\mathsf Z u)(\tau)=(\mathsf Z v)(\tau)\ \text{for a.e.\ }\tau\in[0,t].
\]
Given an input $u\in\mathcal U$ and output $y=\mathsf Z u$, define the instantaneous supply rate
\[
p(t):=\langle u(t),y(t)\rangle_U,\qquad p^+(t):=\max\{p(t),0\},
\]
and the positive supplied work over $[s,t]\subseteq[0,T]$ by
\[
W^+[s,t]:=\int_s^t p^+(\tau)\,d\tau.
\]
\end{definition}

\begin{definition}[Passivity with storage]\label{def:tk-passive-storage}
A causal bridge $\mathsf Z:\mathcal U\to\mathcal U$ is \emph{passive with storage} (cf.\ \cite{Willems1972}) if there exist:
(i) a state space $X$, (ii) a nonnegative storage functional $S:X\to[0,\infty)$,
and (iii) for each input $u\in\mathcal U$ a state trajectory $x_u:[0,T]\to X$,
such that for all $0\le s\le t\le T$,
\[
S(x_u(t)) - S(x_u(s))\ \le\ \int_s^t \langle u(\tau),(\mathsf Z u)(\tau)\rangle_U\,d\tau.
\tag{D-TK-BRG-02}\label{eq:tk-passivity}
\]
\end{definition}

\begin{definition}[Integrated capacity inequality (ICAP)]\label{def:tk-icap}
A causal bridge $\mathsf Z:\mathcal U\to\mathcal U$ satisfies an \emph{integrated capacity inequality} (ICAP)
if there exists $\Lambda\in(0,\infty)$ such that for all $0\le s\le t\le T$ and all inputs $u\in\mathcal U$,
with truncated input $u^{[s,t]}:=u\,\mathbf 1_{[s,t]}$ and output $y^{[s,t]}:=\mathsf Z(u^{[s,t]})$,
\[
W^+[s,t]\ \le\ \Lambda\int_s^t \|u(\tau)\|_U^2\,d\tau.
\tag{D-TK-BRG-03}\label{eq:tk-icap}
\]
ICAP is the natural throughput notion for bridges with memory: it allows short spikes in $p(t)$ while bounding
cumulative positive supply by an $L^2$ budget.
\end{definition}

\begin{remark}
The incremental (activated) form above avoids prehistory contamination in memory systems; a full-output per-subinterval ICAP
can fail even for simple delays.
\end{remark}

\begin{lemma}[Causality is closed under composition]\label{lem:tk-causal-compose}
Assume the setup of Definition~\ref{def:tk-bridge-ports}.
If $\mathsf Z_1:\mathcal U\to\mathcal U$ and $\mathsf Z_2:\mathcal U\to\mathcal U$ are causal, then
$\mathsf Z_2\circ \mathsf Z_1$ is causal.
\end{lemma}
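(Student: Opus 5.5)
We argue directly from the causality definition in Definition~\ref{def:tk-bridge-ports}, applying it twice in sequence. Fix $t\in[0,T]$ and inputs $u,v\in\mathcal U$ with $u(\tau)=v(\tau)$ for a.e.\ $\tau\in[0,t]$. We want to show that $(\mathsf Z_2\mathsf Z_1 u)(\tau)=(\mathsf Z_2\mathsf Z_1 v)(\tau)$ for a.e.\ $\tau\in[0,t]$.

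First, causality of $\mathsf Z_1$, applied at the same time $t$, gives $(\mathsf Z_1 u)(\tau)=(\mathsf Z_1 v)(\tau)$ for a.e.\ $\tau\in[0,t]$. Set $u':=\mathsf Z_1 u$ and $v':=\mathsf Z_1 v$. Since $\mathsf Z_1$ maps $\mathcal U$ to $\mathcal U$, both are elements of $\mathcal U$, and they agree a.e.\ on $[0,t]$. That is exactly the hypothesis of the causality condition for $\mathsf Z_2$ at time $t$. Causality of $\mathsf Z_2$ therefore yields $(\mathsf Z_2 u')(\tau)=(\mathsf Z_2 v')(\tau)$ for a.e.\ $\tau\in[0,t]$. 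This is the required conclusion for $\mathsf Z_2\circ\mathsf Z_1$, and since $t$, $u$, $v$ were arbitrary, the composite is causal.

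The only point needing care is a bookkeeping one: the intermediate signals must lie in $\mathcal U$, and the ``a.e.\ on $[0,t]$'' qualifier must be carried through without change. Both are handled by the same observation. The codomain of $\mathsf Z_1$ is $\mathcal U$ by definition. Also, the causality condition is stated for equivalence classes in $L^2$ up to null sets, so no issue arises from choosing representatives. No other obstacle is expected; the argument is a two-step chaining of the defining implication.
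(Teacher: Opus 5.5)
Your proof is correct and is the same two-step chaining the paper uses: apply causality of $\mathsf Z_1$ at time $t$ to get $\mathsf Z_1u=\mathsf Z_1v$ a.e.\ on $[0,t]$, then apply causality of $\mathsf Z_2$ to these intermediate signals. Your bookkeeping remarks (that the intermediate signals lie in $\mathcal U$ and that the a.e.\ qualifier is preserved) are accurate and make explicit what the paper leaves implicit.
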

\begin{proof}
If $u$ and $v$ agree a.e.\ on $[0,t]$, then by causality of $\mathsf Z_1$ the signals $\mathsf Z_1u$ and $\mathsf Z_1v$
agree a.e.\ on $[0,t]$. Applying causality of $\mathsf Z_2$ yields $(\mathsf Z_2\circ \mathsf Z_1)u=(\mathsf Z_2\circ \mathsf Z_1)v$
a.e.\ on $[0,t]$.
\end{proof}

\begin{lemma}[Parallel sum: passivity and ICAP constants add]\label{lem:tk-parallel-add}
Assume the setup of Definition~\ref{def:tk-bridge-ports}.
Let $\mathsf Z_1,\mathsf Z_2:\mathcal U\to\mathcal U$ be causal bridges on the same port space $U$, and define
$\mathsf Z:=\mathsf Z_1+\mathsf Z_2$ by $(\mathsf Zu)(t):=(\mathsf Z_1u)(t)+(\mathsf Z_2u)(t)$.
\begin{enumerate}
  \item If $\mathsf Z_1$ and $\mathsf Z_2$ are passive with storage (Definition~\ref{def:tk-passive-storage}),
  then $\mathsf Z$ is passive with storage (with storage equal to the sum of storages on the product state space).
  \item If $\mathsf Z_1$ and $\mathsf Z_2$ satisfy ICAP with constants $\Lambda_1$ and $\Lambda_2$, then
  $\mathsf Z$ satisfies ICAP with constant $\Lambda_1+\Lambda_2$.
\end{enumerate}
\end{lemma}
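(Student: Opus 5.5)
The plan is to prove the two items separately. Both reduce to the fact that the supply rate of $\mathsf Z=\mathsf Z_1+\mathsf Z_2$ splits pointwise, by bilinearity of $\langle\cdot,\cdot\rangle_U$, as
\[
\langle u(\tau),(\mathsf Z u)(\tau)\rangle_U=\langle u(\tau),(\mathsf Z_1 u)(\tau)\rangle_U+\langle u(\tau),(\mathsf Z_2 u)(\tau)\rangle_U .
\]
First I will note that $\mathsf Z$ is again a causal bridge. If $u=v$ a.e.\ on $[0,t]$, then $\mathsf Z_1u=\mathsf Z_1v$ and $\mathsf Z_2u=\mathsf Z_2v$ a.e.\ on $[0,t]$, hence so do their sums. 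Also $\mathsf Z u\in\mathcal U$ because $L^2([0,T];U)$ is a vector space. All supply integrals are finite by Cauchy--Schwarz, since $u$ and the outputs lie in $L^2$; this makes every integral below well defined.

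For item (1), let $(X_i,S_i,x^{(i)}_u)$ be the passivity data for $\mathsf Z_i$. Set $X:=X_1\times X_2$, $S(x_1,x_2):=S_1(x_1)+S_2(x_2)\ge 0$, and $x_u(t):=(x^{(1)}_u(t),x^{(2)}_u(t))$. For $0\le s\le t\le T$, I add the two inequalities \eqref{eq:tk-passivity} for $\mathsf Z_1$ and $\mathsf Z_2$ (same $u$, same interval). By linearity of the integral and the pointwise splitting above, the sum of the right-hand sides is exactly $\int_s^t\langle u,\mathsf Z u\rangle_U\,d\tau$. The sum of the left-hand sides is $S(x_u(t))-S(x_u(s))$, which gives \eqref{eq:tk-passivity} for $\mathsf Z$.

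For item (2), fix $s\le t$ and $u$, and form the truncated input $u^{[s,t]}$. By definition of the sum, the truncated output is $y^{[s,t]}=\mathsf Z_1(u^{[s,t]})+\mathsf Z_2(u^{[s,t]})=:y_1^{[s,t]}+y_2^{[s,t]}$. The supply rate splits as $p=p_1+p_2$ pointwise. The only non-linear ingredient is the positive part, so the key step is the elementary subadditivity
\[
\max\{a+b,0\}\le\max\{a,0\}+\max\{b,0\},\qquad a,b\in\mathbb R,
\]
which gives $p^+\le p_1^++p_2^+$ pointwise. Integrating over $[s,t]$ yields $W^+[s,t]\le W_1^+[s,t]+W_2^+[s,t]$. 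Applying \eqref{eq:tk-icap} to each summand then gives the bound $(\Lambda_1+\Lambda_2)\int_s^t\|u\|_U^2\,d\tau$, as required.

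The main (mild) obstacle is bookkeeping rather than analysis. The positive part destroys linearity, so one cannot simply add the two ICAP statements; the subadditivity inequality is what repairs this, and it only goes in the needed direction, which is why the constants add rather than anything sharper. I also need to check that $W^+$ in \eqref{eq:tk-icap} is computed from the truncated output of the \emph{same} truncated input for both summands, so that the splitting $y^{[s,t]}=y_1^{[s,t]}+y_2^{[s,t]}$ is legitimate. The argument is identical if $W^+$ is read with the untruncated output, since only pointwise linearity and subadditivity are used.
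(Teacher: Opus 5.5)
Your proposal is correct and follows the paper's proof essentially step for step. For (1) you use the product state space with summed storage and add the two passivity inequalities; for (2) you split $p=p_1+p_2$ pointwise and apply the subadditivity $(a+b)^+\le a^++b^+$ before invoking ICAP for each summand. Your extra checks (causality of the sum, finiteness of the supply integrals, and the two possible readings of $W^+$) are harmless additions that the paper leaves implicit.
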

\begin{proof}
For (2), fix $u$ and $[s,t]$, and set $u^{[s,t]}:=u\,\mathbf 1_{[s,t]}$.
Let $y_i=\mathsf Z_i u^{[s,t]}$. Then $p=p_1+p_2$ pointwise.
Since $(a+b)^+\le a^+ + b^+$ for real $a,b$,
we have $W^+\le W_1^+ + W_2^+$, and applying ICAP to each term yields
\[
W^+[s,t]\le (\Lambda_1+\Lambda_2)\int_s^t \|u(\tau)\|_U^2\,d\tau.
\]
For (1), add the two passivity inequalities \eqref{eq:tk-passivity} on a product state space.
Use linearity of $\mathsf Z$ in the supply rate.
\end{proof}

\begin{remark}[Serial composition: gain submultiplicativity is model-free; passivity needs port-compatibility]\label{rem:tk-serial-note}
For bounded linear bridges on $\mathcal U$ one can define an operator-norm gain $\mathrm{Gain}(\mathsf Z):=\|\mathsf Z\|_{\mathcal U\to\mathcal U}$,
which satisfies $\mathrm{Gain}(\mathsf Z_2\circ\mathsf Z_1)\le \mathrm{Gain}(\mathsf Z_2)\mathrm{Gain}(\mathsf Z_1)$.
Such a bound yields an ICAP constant via Cauchy--Schwarz (so $\mathrm{Cap}(\mathsf Z)\le \mathrm{Gain}(\mathsf Z)$),
hence the resulting ICAP constants are submultiplicative under serial composition in this default linear setting.
By contrast, passivity under serial/feedback interconnection depends on how port variables are paired and how the external
supply decomposes into internal supplies; we therefore record the parallel-sum lemma above as the universally valid closure property
in the present one-port notation.
\end{remark}

\begin{remark}[Relation to primitives P5 and P6]\label{rem:tk-bridge-to-primitives}
In the main text, \textbf{P5} (packaging) is expressed as a completion endomap $E$ with fixed points,
while the present appendix expresses packaging at a boundary as a \emph{causal bridge} $\mathsf Z$ acting on port signals.
Both viewpoints represent ``the eliminated sector as an effective operator''; one acts on descriptions (e.g.\ distributions),
the other acts on boundary interactions.
The audit components used in this manuscript---path reversal asymmetry (Section~\ref{sec:aot}) and graph 1-form affinities (Section~\ref{sec:acc})---are
instances of \textbf{P6}-style accounting: they certify nontrivial drive/directionality and obey monotonicity principles under coarse observation.
\end{remark}

\subsection{Emergent coercivity template via sector compression}\label{sec:ect-template}

\subsubsection{Summary: slots and divergence consequence}\label{sec:ect-summary}

\begin{theorem}[ECT summary: linear ICAP capacity and DIV from three slots]\label{thm:ect-summary}
Fix a depth-indexed family of packaged bridges $(\mathsf Z_j)_j$ with port spaces $U_j$ and a horizon $T>0$.
Assume the following slot conditions hold for all sufficiently large $j$:

\begin{enumerate}
  \item[\textbf{HL-ECT-1}] (\emph{Mode compression}) $\mathsf Z_j$ admits an atomic/sector decomposition with mode count
  $m_j \le C_0 (j+1)$.
  A sufficient route in this appendix is: sectorization + P5-min + quantized index, which yields $m_j \lesssim (j+1)$
  (see Corollary~\ref{cor:ect-hl1}).
  \item[\textbf{HL-ECT-3}] (\emph{Uniform per-atom ICAP}) each atom $\mathsf Z_{j,r}$ in the decomposition satisfies ICAP with a
  uniform constant $\Lambda_0$ (independent of $j,r$). A sufficient condition is bounded kernel mass
  (Lemma~\ref{lem:ect-fmem}). In particular, dissipative atoms with semigroup decay (Definition~\ref{def:ect-atom-ss})
  and balanced coupling (Definition~\ref{def:ect-bal}) satisfy uniform ICAP via Lemma~\ref{lem:ect-bal-kmass}
  and Corollary~\ref{cor:ect-bal-icap}.
  \item[\textbf{HL-ECT-2}] (\emph{Feasibility removes lossless directions, if strictness is needed}) feasibility gating excludes
  lossless directions so that a coercive norm is induced on feasible inputs (see Remark~\ref{rem:hl-p2-core},
  Lemma~\ref{lem:ect-coer-feas}, and the depth-scaling lemma Lemma~\ref{lem:ect-shrink}).
\end{enumerate}

Then the ICAP capacity constant obeys a linear bound
\[
\mathrm{Cap}(\mathsf Z_j)\ \le\ \Lambda_0\,C_0\,(j+1),
\]
and consequently
\[
\sum_{j\ge 0}\frac{1}{\mathrm{Cap}(\mathsf Z_j)}=\infty.
\]
In particular, this verifies the \emph{DIV} ingredient used in the No-Zeno criterion (Theorem~\ref{thm:no-zeno} in Appendix~D).
WORK quantization and a feasibility/energy-density cap are supplied there.
\end{theorem}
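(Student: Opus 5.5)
The plan is to decompose each packaged bridge into its atoms, add up the atom ICAP constants using the parallel-sum lemma, and then compare the resulting linear bound with the harmonic series.

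\textbf{Step 1 (decomposition).} Fix $j$ large enough that all three slots hold. By HL-ECT-1, write $\mathsf Z_j=\sum_{r=1}^{m_j}\mathsf Z_{j,r}$, a parallel sum of causal atoms on the common port space $U_j$, with $m_j\le C_0(j+1)$. I read ``atomic/sector decomposition'' as meaning exactly this parallel sum; that reading is what the argument needs. If the decomposition is instead a direct sum over orthogonal sectors of $U_j$, I would first embed each atom into $U_j$ by composing with the orthogonal projection onto its sector. Since projections are contractions, this does not increase the per-atom ICAP constant. The case $m_j=0$ is degenerate ($\mathsf Z_j=0$, so the capacity vanishes) and would be excluded by convention.

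\textbf{Step 2 (additivity of ICAP).} Apply Lemma~\ref{lem:tk-parallel-add}(2) inductively over $r$. At each step the ICAP constants add, so $\mathsf Z_j$ satisfies ICAP with constant $\sum_r\Lambda_{j,r}$. By HL-ECT-3, each $\Lambda_{j,r}\le\Lambda_0$, so the total is at most $m_j\Lambda_0\le\Lambda_0C_0(j+1)$.

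Here $\mathrm{Cap}(\mathsf Z_j)$ is taken to be the best ICAP constant, that is, the infimum of admissible $\Lambda$. The bound then follows directly. In the Appendix~D usage, $\mathrm{Cap}(j)=\Lambda(j)\bar B(j)$; I would check that FEAS is normalized per unit so that $\bar B$ is absorbed into $C_0$. This normalization is the main point where the statement's conventions must be made explicit.

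HL-ECT-2 is not needed for the upper bound. It only ensures a coercive (strictly positive) capacity when strictness is required, so that $1/\mathrm{Cap}(\mathsf Z_j)$ is finite. I would note this and set it aside.

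\textbf{Step 3 (divergence).} For $j\ge j_0$ the bound gives $1/\mathrm{Cap}(\mathsf Z_j)\ge 1/(\Lambda_0C_0(j+1))$. The sum of the right-hand side is a constant multiple of a harmonic tail, so it diverges. The finitely many terms with $j<j_0$ are nonnegative, or $+\infty$ if some capacity is zero, and therefore do not affect divergence. This is exactly the case $\alpha+\beta\le1$ of the checkable criteria in Appendix~D, with $\alpha=1$ and $\beta=0$, so Theorem~\ref{thm:no-zeno} applies.

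\textbf{Main obstacle.} The main obstacle is Step 1: justifying that the decomposition really is a parallel sum on the same port, so that Lemma~\ref{lem:tk-parallel-add} applies and serial or feedback effects do not enter. Remark~\ref{rem:tk-serial-note} warns that such effects would break additivity. I would handle this by stating it explicitly as part of the content of HL-ECT-1.
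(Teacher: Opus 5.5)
Your proposal is correct and follows the paper's proof essentially verbatim. The paper invokes Lemma~\ref{lem:ect-mech}, which is exactly your iterated application of Lemma~\ref{lem:tk-parallel-add}(2) to the parallel-sum decomposition of Definition~\ref{def:ect-atom}; this gives $\mathrm{Cap}(\mathsf Z_j)\le m_j\Lambda_0\le\Lambda_0C_0(j+1)$, and the paper then compares with the harmonic series, likewise never using HL-ECT-2. Your remark that the ``in particular'' link to Theorem~\ref{thm:no-zeno} tacitly needs $\bar B(j)$ to be uniformly bounded (the $\beta=0$ case) correctly identifies a convention the paper leaves implicit.
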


\begin{proof}
Under \textbf{HL-ECT-1} and \textbf{HL-ECT-3}, Lemma~\ref{lem:ect-mech} gives $\mathrm{Cap}(\mathsf Z_j)\le m_j\Lambda_0 \le \Lambda_0 C_0 (j+1)$,
and the divergence of $\sum_j 1/\mathrm{Cap}(\mathsf Z_j)$ follows by comparison with the harmonic series.
\end{proof}

\begin{remark}[Non-implication: why ECT is an add-on, not automatic]\label{rem:ect-nonimplication}
The ECT conclusion is \emph{not} a consequence of passivity alone.
Passivity (existence of a storage functional with $S(t)-S(s)\le \int_s^t \langle u, \mathsf Z u\rangle$) does not by itself
control ICAP constants, does not force mode compression, and does not eliminate lossless feasible directions.
Even within convolution/finite-memory classes, passivity alone allows arbitrarily large ICAP constants via high-\emph{Q}
slowly decaying internal directions; the balanced coupling condition (Definition~\ref{def:ect-bal}) is one explicit way to exclude this.
The template therefore isolates explicit structural sources of coercivity/capacity control:
\textbf{P4}-style sector compression (bounded sector complexity), \textbf{P5} packaging outcomes (e.g.\ atomization with finite memory),
and \textbf{P2} feasibility that removes null directions. These conditions couple packaging+accounting to a throughput certificate,
but they do not by themselves imply novelty (Section~\ref{sec:forcing}) or directionality (Sections~\ref{sec:acc}--\ref{sec:aot}).
\end{remark}

% IDs: D-ECT-ATOM-01, D-ECT-CAP-01, L-ECT-MECH-01

\begin{definition}[Atomic/sector decompositions of a packaged bridge {\normalfont\textsf{(ID: D-ECT-ATOM-01)}}]\label{def:ect-atom}
Fix a port space $U$ and a finite horizon $T>0$, and write $\mathcal U:=L^2([0,T];U)$.
Let $\mathsf Z:\mathcal U\to\mathcal U$ be a causal bridge (Definition~\ref{def:tk-bridge-ports}).
An \emph{atomic (sector) decomposition} of $\mathsf Z$ is a representation
\[
\mathsf Z \;=\; \sum_{r=1}^m \mathsf Z_r
\]
as a parallel sum of causal bridges $\mathsf Z_r:\mathcal U\to\mathcal U$ on the same port space (so $(\mathsf Zu)(t)=\sum_r(\mathsf Z_ru)(t)$).
We call $m$ the \emph{mode count} of the chosen decomposition.
In depth-indexed settings we write $\mathsf Z=\mathsf Z_j$ and denote the mode count by $m_j$.
\end{definition}

\subsubsection{Dissipative atoms and semigroup decay}
\begin{definition}[Dissipative atom representation {\normalfont\textsf{(ID: D-ECT-ATOM-SS-01)}}]\label{def:ect-atom-ss}
Let $\mathsf Z_r$ be an atom in an atomic decomposition (Definition~\ref{def:ect-atom}) acting on
$\mathcal U:=L^2([0,T];U)$. We say $\mathsf Z_r$ admits a \emph{dissipative atom representation} if it has a kernel
\[
K_r(\tau)\ =\ C_r e^{A_r\tau} B_r,\qquad \tau\ge 0,
\]
with an internal Hilbert space $H_r$, bounded linear maps $A_r:H_r\to H_r$, $B_r:U\to H_r$, $C_r:H_r\to U$, and decay
\[
\|e^{A_r\tau}\|_{\mathrm{op}}\ \le\ e^{-\lambda_r \tau}\quad\text{for some }\lambda_r>0.
\]
All windowed/interval inequalities in this appendix are evaluated using truncated inputs
$u^{[s,t]}:=u\,\mathbf 1_{[s,t]}$, as in Definition~\ref{def:ect-cap}.
\end{definition}

\begin{definition}[Balanced coupling {\normalfont\textsf{(ID: D-ECT-BAL-01)}}]\label{def:ect-bal}
\mbox{}\par
In the setting of Definition~\ref{def:ect-atom-ss}, fix a constant $\Lambda_0>0$.
We say the atom $\mathsf Z_r$ satisfies the \emph{balanced coupling condition} with constant $\Lambda_0$ if
\[
\|C_r\|_{\mathrm{op}}\ \|B_r\|_{\mathrm{op}}\ \le\ \Lambda_0\,\lambda_r,
\]
where $\lambda_r>0$ is the semigroup decay rate from Definition~\ref{def:ect-atom-ss}.
\end{definition}

\begin{definition}[ICAP capacity functional {\normalfont\textsf{(ID: D-ECT-CAP-01)}}]\label{def:ect-cap}
For a causal bridge $\mathsf Z:\mathcal U\to\mathcal U$, define its \emph{ICAP capacity constant}
\[
\mathrm{Cap}(\mathsf Z)\ :=\ \inf\Big\{\Lambda\in(0,\infty):\
\begin{aligned}
W^+[s,t] &\le \Lambda\int_s^t\|u(\tau)\|_U^2\,d\tau\\
&\text{for all }0\le s\le t\le T \text{ and } u\in\mathcal U
\end{aligned}
\Big\},
\]
where $u^{[s,t]}:=u\,\mathbf 1_{[s,t]}$, $y^{[s,t]}:=\mathsf Z(u^{[s,t]})$, and $W^+[s,t]$ is computed from
$u^{[s,t]},y^{[s,t]}$ as in Definition~\ref{def:tk-bridge-ports}.
(Equivalently, $\mathrm{Cap}(\mathsf Z)$ is the best constant in \eqref{eq:tk-icap} when it holds.)
\end{definition}

\begin{lemma}[Balanced decay implies a uniform kernel-mass bound {\normalfont\textsf{(ID: L-ECT-BAL-KMASS-01)}}]\label{lem:ect-bal-kmass}
Assume $\mathsf Z_r$ admits a dissipative atom representation (Definition~\ref{def:ect-atom-ss})
and satisfies the balanced coupling condition with constant $\Lambda_0$ (Definition~\ref{def:ect-bal}).
Then the operator-valued kernel $K_r(\tau)=C_r e^{A_r\tau}B_r$ satisfies
\[
\int_0^\infty \|K_r(\tau)\|_{\mathrm{op}}\,d\tau\ \le\ \Lambda_0.
\]
\end{lemma}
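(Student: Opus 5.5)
The plan is a pointwise operator-norm bound on the kernel followed by integration of the resulting exponential. For each $\tau\ge 0$, submultiplicativity of the operator norm applied to the factorization $K_r(\tau)=C_r e^{A_r\tau}B_r$ (through the internal Hilbert space $H_r$) gives
\[
\|K_r(\tau)\|_{\mathrm{op}}\ \le\ \|C_r\|_{\mathrm{op}}\,\|e^{A_r\tau}\|_{\mathrm{op}}\,\|B_r\|_{\mathrm{op}}.
\]
The semigroup decay hypothesis of Definition~\ref{def:ect-atom-ss} then bounds the middle factor, yielding
\[
\|K_r(\tau)\|_{\mathrm{op}}\ \le\ \|C_r\|_{\mathrm{op}}\,\|B_r\|_{\mathrm{op}}\,e^{-\lambda_r\tau}.
\]

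Next I would integrate this pointwise bound over $[0,\infty)$. Since $\lambda_r>0$, we have $\int_0^\infty e^{-\lambda_r\tau}\,d\tau=1/\lambda_r$, so
\[
\int_0^\infty\|K_r(\tau)\|_{\mathrm{op}}\,d\tau\ \le\ \frac{\|C_r\|_{\mathrm{op}}\,\|B_r\|_{\mathrm{op}}}{\lambda_r}.
\]
The balanced coupling condition of Definition~\ref{def:ect-bal} states exactly that this ratio is at most $\Lambda_0$, which closes the argument.

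The only point requiring care is measurability of $\tau\mapsto\|K_r(\tau)\|_{\mathrm{op}}$, so that the integral is meaningful. Because $A_r$ is bounded, $\tau\mapsto e^{A_r\tau}$ is norm-continuous. It follows that $K_r$ is norm-continuous and its norm is a continuous nonnegative function. Even without this, the inequality could be read with an upper integral, with the same conclusion.

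I do not expect a genuine obstacle here: the lemma is essentially the observation that the balanced coupling condition is calibrated so that the $1/\lambda_r$ produced by integrating the exponential cancels against $\lambda_r$. I also note that the bound is uniform in $r$ (and in the depth $j$). This uniformity is what makes it usable in HL-ECT-3, where the kernel-mass bound is fed into Lemma~\ref{lem:ect-fmem} to obtain a per-atom ICAP constant independent of $r$ and $j$.
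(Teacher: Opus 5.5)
Your proof is correct and matches the paper's argument: submultiplicativity and the semigroup decay bound give $\|K_r(\tau)\|_{\mathrm{op}}\le \|C_r\|_{\mathrm{op}}\|B_r\|_{\mathrm{op}}e^{-\lambda_r\tau}$, integrating yields $\|C_r\|_{\mathrm{op}}\|B_r\|_{\mathrm{op}}/\lambda_r$, and balanced coupling bounds this by $\Lambda_0$. Your remark on norm-continuity of $\tau\mapsto K_r(\tau)$, which makes the integral well-defined, is a harmless addition that the paper leaves implicit.
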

\begin{proof}
By submultiplicativity and decay,
\[
\|K_r(\tau)\|_{\mathrm{op}}
\le \|C_r\|_{\mathrm{op}}\|e^{A_r\tau}\|_{\mathrm{op}}\|B_r\|_{\mathrm{op}}
\le \|C_r\|_{\mathrm{op}}\|B_r\|_{\mathrm{op}}e^{-\lambda_r\tau}.
\]
Integrating gives
\[
\int_0^\infty \|K_r(\tau)\|_{\mathrm{op}}\,d\tau
\le \frac{\|C_r\|_{\mathrm{op}}\|B_r\|_{\mathrm{op}}}{\lambda_r}
\le \Lambda_0,
\]
by Definition~\ref{def:ect-bal}.
\end{proof}

\begin{lemma}[Finite kernel-mass convolution bridges satisfy ICAP for truncated inputs {\normalfont\textsf{(ID: L-ECT-FMEM-01)}}]\label{lem:ect-fmem}
Let $U$ be a finite-dimensional port space and $\mathcal U:=L^2([0,T];U)$.
Consider a causal convolution bridge $\mathsf Z:\mathcal U\to\mathcal U$ of the form
\[
(\mathsf Zu)(t)\ :=\ \int_0^t K(t-\tau)\,u(\tau)\,d\tau,
\]
where $K:[0,\infty)\to \mathrm{End}(U)$ is measurable and $k(\tau):=\|K(\tau)\|_{\mathrm{op}}$ satisfies
$M:=\int_0^\infty k(\tau)\,d\tau<\infty$.
For $0\le s\le t\le T$ define the truncated input $u^{[s,t]}:=u\,\mathbf 1_{[s,t]}$ and output $y^{[s,t]}:=\mathsf Z(u^{[s,t]})$.
Then the positive supplied work computed from $y^{[s,t]}$ satisfies
\[
W^+[s,t]\ \le\ M\int_s^t \|u(\tau)\|_U^2\,d\tau.
\]
\end{lemma}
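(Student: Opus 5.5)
The approach is to bound the windowed supply pointwise, then integrate and apply Young's convolution inequality to the truncated input. Fix $0\le s\le t\le T$ and write $v:=u^{[s,t]}=u\,\mathbf 1_{[s,t]}$, $y:=y^{[s,t]}=\mathsf Z v$. Since $x^+\le|x|$, we have
\[
W^+[s,t]=\int_s^t\max\{\langle u(\sigma),y(\sigma)\rangle,0\}\,d\sigma\le\int_s^t|\langle v(\sigma),y(\sigma)\rangle|\,d\sigma .
\]
Here we used that $u=v$ on $[s,t]$. Cauchy--Schwarz in $U$ followed by Cauchy--Schwarz in time gives
\[
W^+[s,t]\le\|v\|_{L^2([0,T];U)}\,\|y\|_{L^2([0,T];U)},
\]
where we have enlarged the integration interval to $[0,T]$ for $y$; this only increases the bound. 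Note also that $\|v\|_{L^2([0,T])}^2=\int_s^t\|u\|^2$.

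The key step is to bound $\|y\|_{L^2}$ by $M\|v\|_{L^2}$. Pointwise,
\[
\|y(\sigma)\|_U\le\int_0^\sigma k(\sigma-r)\,\|v(r)\|_U\,dr ,
\]
so $\|y(\cdot)\|_U$ is dominated by the scalar convolution $k*\|v(\cdot)\|_U$, with both functions extended by zero outside their natural domains. Young's inequality $\|k*g\|_{L^2}\le\|k\|_{L^1}\|g\|_{L^2}$ then yields $\|y\|_{L^2([0,T])}\le M\|v\|_{L^2}$. Equivalently, one can apply Cauchy--Schwarz with weight $k$, namely $\big(\int k\,g\big)^2\le\big(\int k\big)\big(\int k\,g^2\big)$, and then Fubini; this version keeps the argument self-contained.

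Combining the two bounds gives $W^+[s,t]\le M\|v\|_{L^2}^2=M\int_s^t\|u\|_U^2$, which is the claim.

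The main care point is the measurability and finiteness bookkeeping. We need $y\in L^2$, which is exactly what Young's inequality provides. We also need Fubini to apply to the nonnegative integrands, which holds by Tonelli since all integrands involved are nonnegative. Truncation enters essentially only to identify $\|v\|_{L^2}^2$ with the windowed energy $\int_s^t\|u\|^2$ and to make $y$ depend solely on the input supported on $[s,t]$; this is what prevents prehistory contamination.
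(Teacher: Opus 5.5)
Your proposal is correct and follows essentially the same route as the paper: bound $p^+$ by $\|u\|_U\,\|y^{[s,t]}\|_U$, dominate $\|y^{[s,t]}(\cdot)\|_U$ pointwise by the scalar convolution of $k$ with $\|u\|_U$ on the window, apply Young's inequality $\|k*g\|_{L^2}\le \|k\|_{L^1}\|g\|_{L^2}$, and finish with Cauchy--Schwarz in time. The differences are only cosmetic: you measure $y^{[s,t]}$ in $L^2([0,T])$ rather than $L^2([s,t])$, which is harmless since it only enlarges the bound and Young's inequality still controls it. You also sketch the weighted Cauchy--Schwarz/Tonelli proof of Young's inequality, which the paper simply cites.
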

\begin{proof}
For $\tau\in[s,t]$ we have
\[
\|y^{[s,t]}(\tau)\|_U\ \le\ \int_s^\tau k(\tau-\sigma)\,\|u(\sigma)\|_U\,d\sigma,
\]
so by Young's inequality, $\|y^{[s,t]}\|_{L^2([s,t])}\le M\,\|u\|_{L^2([s,t])}$.
Then
\[
\begin{aligned}
W^+[s,t]
&\le \int_s^t \|u(\tau)\|_U\,\|y^{[s,t]}(\tau)\|_U\,d\tau\\
&\le \|u\|_{L^2([s,t])}\,\|y^{[s,t]}\|_{L^2([s,t])}\\
&\le M\int_s^t \|u(\tau)\|_U^2\,d\tau.
\end{aligned}
\]
\end{proof}

\begin{corollary}[Balanced atoms have uniform ICAP]\label{cor:ect-bal-icap}
\mbox{}\par
\textbf{C-ECT-BAL-ICAP-01.}\par
Assume $\mathsf Z_r$ admits a dissipative atom representation (Definition~\ref{def:ect-atom-ss})
and satisfies the balanced coupling condition with constant $\Lambda_0$ (Definition~\ref{def:ect-bal}).
Then $\mathrm{Cap}(\mathsf Z_r)\le \Lambda_0$ in the sense of Definition~\ref{def:ect-cap}.\par
That is, the ICAP inequality holds with constant $\Lambda_0$ for truncated inputs $u^{[s,t]}$
and positive work $W^+[s,t]$.
\end{corollary}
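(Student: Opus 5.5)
The plan is to chain Lemma~\ref{lem:ect-bal-kmass} into Lemma~\ref{lem:ect-fmem}, and then read off the capacity bound from Definition~\ref{def:ect-cap}. The atom $\mathsf Z_r$ has a dissipative atom representation, so it acts as a causal convolution bridge
\[
(\mathsf Z_r u)(t)=\int_0^t K_r(t-\tau)\,u(\tau)\,d\tau,\qquad K_r(\tau)=C_r e^{A_r\tau}B_r .
\]
This is exactly the form required by Lemma~\ref{lem:ect-fmem}.

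\textbf{Step 1: measurability of the kernel.} The map $\tau\mapsto e^{A_r\tau}$ is norm-continuous, since $A_r$ is bounded. Hence $K_r$ is continuous as a map into $\mathrm{End}(U)$, and so $k(\tau)=\|K_r(\tau)\|_{\mathrm{op}}$ is measurable.

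\textbf{Step 2: kernel mass.} Lemma~\ref{lem:ect-bal-kmass} gives
\[
M:=\int_0^\infty k(\tau)\,d\tau\le\Lambda_0<\infty .
\]

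\textbf{Step 3: the ICAP inequality.} Lemma~\ref{lem:ect-fmem} now applies. For every $0\le s\le t\le T$ and every $u\in\mathcal U$, with truncated input $u^{[s,t]}$ and output $y^{[s,t]}=\mathsf Z_r(u^{[s,t]})$, it yields
\[
W^+[s,t]\le M\int_s^t\|u\|_U^2\le\Lambda_0\int_s^t\|u\|_U^2 .
\]
So $\Lambda_0$ belongs to the admissible set in Definition~\ref{def:ect-cap}, and taking the infimum gives $\mathrm{Cap}(\mathsf Z_r)\le\Lambda_0$. The admissible set is nonempty because $\Lambda_0>0$ is in it, so the infimum is well defined.

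\textbf{Main obstacle.} The only delicate point is matching the bridge structure to the hypotheses of Lemma~\ref{lem:ect-fmem}. Definition~\ref{def:ect-atom-ss} says only that $\mathsf Z_r$ \emph{has} kernel $K_r$. I will read this as the causal convolution form above, acting on $\mathcal U=L^2([0,T];U)$ with the integral starting at time $0$. That reading matches the causality convention of Definition~\ref{def:tk-bridge-ports}.

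I will also check that the internal space $H_r$ causes no trouble. It is only required to be Hilbert and may be infinite-dimensional. This is harmless, because $K_r(\tau)$ still maps the finite-dimensional $U$ into itself, and only its operator norm enters the estimate.

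In Lemma~\ref{lem:ect-fmem}, Young's inequality is applied with the kernel restricted to $[0,t-s]$, whose mass is at most $M$. Once this identification is in place, the remaining bounds are routine.
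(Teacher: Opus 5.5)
Your proposal is correct and matches the paper's proof. The paper likewise gets $M\le\Lambda_0$ from Lemma~\ref{lem:ect-bal-kmass}, feeds this into Lemma~\ref{lem:ect-fmem} to obtain the windowed ICAP inequality with constant $\Lambda_0$, and reads off $\mathrm{Cap}(\mathsf Z_r)\le\Lambda_0$ from Definition~\ref{def:ect-cap}; your extra checks (norm-continuity of $\tau\mapsto e^{A_r\tau}$ for measurability, and reading ``has a kernel'' as the causal convolution form on $L^2([0,T];U)$) are details the paper leaves implicit, and you handle them correctly.
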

\begin{proof}
Lemma~\ref{lem:ect-bal-kmass} gives a kernel-mass bound $M\le \Lambda_0$.
Applying Lemma~\ref{lem:ect-fmem} with this $M$ yields
\[
W^+[s,t]\le \Lambda_0\int_s^t\|u(\tau)\|_U^2\,d\tau
\]
for all windows. Hence $\mathrm{Cap}(\mathsf Z_r)\le \Lambda_0$ by Definition~\ref{def:ect-cap}.
\end{proof}

\begin{remark}[Finite-memory bridges as a P5 packaging target]\label{rem:ect-p5-fmem}
Lemma~\ref{lem:ect-fmem} provides a concrete sufficient condition for the uniform per-atom ICAP hypothesis used in Lemma~\ref{lem:ect-mech}.
Accordingly, one natural packaging target for \textbf{P5} is to return atom bridges that admit a convolution representation with bounded kernel mass $M$,
so that $\mathrm{Cap}(\mathsf Z_r)\le M$ is explicit and checkable.
This is only a model class and not a universal consequence of packaging.
\end{remark}

\begin{lemma}[Mechanical ECT: sector ICAP + mode count $\Rightarrow$ linear capacity {\normalfont\textsf{(ID: L-ECT-MECH-01)}}]\label{lem:ect-mech}
Let $\mathsf Z=\sum_{r=1}^m \mathsf Z_r$ be an atomic decomposition in the sense of Definition~\ref{def:ect-atom}.
Assume each atom satisfies ICAP with a uniform constant $\Lambda_0$, i.e.\ for all $r$ and all $0\le s\le t\le T$,
\[
W_r^+[s,t]\le \Lambda_0\int_s^t\|u(\tau)\|_U^2\,d\tau,
\]
where $u^{[s,t]}:=u\,\mathbf 1_{[s,t]}$ and $W_r^+[s,t]$ is computed from $y_r=\mathsf Z_r u^{[s,t]}$.
Then $\mathsf Z$ satisfies ICAP with constant $m\Lambda_0$, hence
\[
\mathrm{Cap}(\mathsf Z)\ \le\ m\Lambda_0.
\]
In particular, in a depth-indexed family $(\mathsf Z_j)_j$ with mode counts $m_j\le C_0(j+1)$ we obtain
\[
\mathrm{Cap}(\mathsf Z_j)\ \le\ \Lambda_0\,C_0\,(j+1),
\qquad\text{and consequently}\qquad
\sum_{j\ge 0}\frac{1}{\mathrm{Cap}(\mathsf Z_j)}=\infty.
\]
\end{lemma}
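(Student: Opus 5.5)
The plan is to reduce the statement to the parallel-sum lemma (Lemma~\ref{lem:tk-parallel-add}, part 2), iterated over the $m$ atoms, followed by a harmonic-series comparison. I would fix $u\in\mathcal U$ and a window $0\le s\le t\le T$, and form the truncated input $u^{[s,t]}=u\,\mathbf 1_{[s,t]}$. Write $y=\mathsf Z u^{[s,t]}$ and $y_r=\mathsf Z_r u^{[s,t]}$. By Definition~\ref{def:ect-atom}, the decomposition is a pointwise parallel sum, so $y(\tau)=\sum_r y_r(\tau)$ for a.e.\ $\tau$. Bilinearity of the inner product then gives the pointwise identity
\[
p(\tau)=\langle u(\tau),y(\tau)\rangle_U=\sum_{r=1}^m \langle u(\tau),y_r(\tau)\rangle_U=\sum_{r=1}^m p_r(\tau),
\qquad \tau\in[s,t].
\]
On $[s,t]$ the truncated input coincides with $u$, so the supply rates computed from $u^{[s,t]}$ and from $u$ agree there.

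The key step is subadditivity of the positive part, $(a_1+\dots+a_m)^+\le \sum_r a_r^+$. This follows by induction from $(a+b)^+\le a^++b^+$, exactly as in the proof of Lemma~\ref{lem:tk-parallel-add}. Applying it pointwise and integrating over $[s,t]$ gives $W^+[s,t]\le\sum_r W_r^+[s,t]$. The uniform per-atom ICAP hypothesis bounds each term by $\Lambda_0\int_s^t\|u\|_U^2$, so
\[
W^+[s,t]\le m\Lambda_0\int_s^t\|u(\tau)\|_U^2\,d\tau .
\]
Since $u$ and the window were arbitrary, $m\Lambda_0$ is an admissible constant in Definition~\ref{def:ect-cap}, and taking the infimum yields $\mathrm{Cap}(\mathsf Z)\le m\Lambda_0$.

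For the depth-indexed claim, I would apply this bound to each $\mathsf Z_j$ with its mode count $m_j\le C_0(j+1)$, which gives $\mathrm{Cap}(\mathsf Z_j)\le \Lambda_0C_0(j+1)$. Taking reciprocals gives $1/\mathrm{Cap}(\mathsf Z_j)\ge 1/(\Lambda_0C_0(j+1))$, and the sum of the right-hand side over $j$ is a positive multiple of the harmonic series, hence diverges. There is one degenerate case to note. If $\mathrm{Cap}(\mathsf Z_j)$ happens to be $0$, the infimum is taken over $\Lambda\in(0,\infty)$, so the reciprocal is interpreted as $+\infty$, and divergence holds trivially.

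The step I expect to need the most care is a bookkeeping one rather than a computation. I must make sure each atom's $W_r^+$ in the hypothesis is computed from the \emph{same} truncated input $u^{[s,t]}$ used for $\mathsf Z$, so that the pointwise decomposition of $p$ is legitimate on the window. Definition~\ref{def:ect-cap} and the lemma's hypothesis both specify this truncation, so the argument should go through directly.
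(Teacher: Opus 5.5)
Your proposal is correct and follows the paper's proof. The paper likewise applies the parallel-sum lemma (Lemma~\ref{lem:tk-parallel-add}(2)) iteratively to obtain $W^+[s,t]\le\sum_r W_r^+[s,t]\le m\Lambda_0\int_s^t\|u(\tau)\|_U^2\,d\tau$, reads off $\mathrm{Cap}(\mathsf Z)\le m\Lambda_0$ from Definition~\ref{def:ect-cap}, and gets divergence by comparison with the harmonic series. Your direct $m$-fold subadditivity of the positive part and your handling of the degenerate case $\mathrm{Cap}(\mathsf Z_j)=0$ are small refinements of the same argument.
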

\begin{proof}
Apply Lemma~\ref{lem:tk-parallel-add}(2) iteratively to the parallel sum
$\mathsf Z=\mathsf Z_1+\cdots+\mathsf Z_m$ to conclude that the ICAP constants add.
Thus $W^+[s,t]\le \sum_{r=1}^m W_r^+[s,t]\le m\Lambda_0\int_s^t\|u(\tau)\|_U^2\,d\tau$.
The bound on $\mathrm{Cap}(\mathsf Z)$ is immediate from Definition~\ref{def:ect-cap}.
If $\mathrm{Cap}(\mathsf Z_j)\le \Lambda_0C_0(j+1)$, then $\sum_{j\ge 0}1/\mathrm{Cap}(\mathsf Z_j)$ diverges by comparison with the harmonic series.
\end{proof}

\begin{remark}[DIV-ready chain]\label{rem:ect-div}
Fix a depth $j$ and interval $[t_j,t_{j+1}]$ with input $u_j$ and output $y_j=\mathsf Z_j u_j$.
If the positive supplied work satisfies $W_j^+[t_j,t_{j+1}]\ge \theta$ for some $\theta>0$, then by ICAP,
\[
\int_{t_j}^{t_{j+1}} \|u_j(\tau)\|_{U_j}^2\,d\tau \ \ge\ \frac{\theta}{\mathrm{Cap}(\mathsf Z_j)}.
\]
If, in addition, there is a bound
\[
\int_{t_j}^{t_{j+1}} \|u_j(\tau)\|_{U_j}^2\,d\tau \le B\,\Delta t_j,
\qquad \Delta t_j:=t_{j+1}-t_j,
\]
then $\Delta t_j \ge \theta/(B\,\mathrm{Cap}(\mathsf Z_j))$.
Under a linear capacity bound $\mathrm{Cap}(\mathsf Z_j)\le C_1(j+1)$, the sum $\sum_j \Delta t_j$ diverges by comparison with the harmonic series.
\end{remark}

\begin{remark}[Route mismatch as an additive gain error]\label{rem:ect-route}
In the default linear-operator specialization of Appendix~E, Lemma~\ref{lem:tk-route-capacity} and \eqref{eq:tk-route-capacity} show that
route mismatch contributes an additive term to two-step gain growth.
Thus, when estimating gains along successive reductions, $\mathrm{RM}(j)$ appears as an additive perturbation term.
When the mismatch operator itself satisfies ICAP with a comparable constant, this additive term can be absorbed into the same $\mathrm{Cap}(\cdot)$ calculus;
otherwise it remains an explicit robustness correction.
\end{remark}

\subsubsection{Coercivity from feasibility gating (P2)}\label{sec:ect-gating}
% IDs: D-ECT-GATE-01, L-ECT-COER-01

\begin{definition}[Feasibility gate and induced coercive norm {\normalfont\textsf{(ID: D-ECT-GATE-01)}}]\label{def:ect-gate}
Fix a finite horizon $T>0$, a port Hilbert space $U_j$, and write $\mathcal U_j:=L^2([0,T];U_j)$.
Let $G_j:U_j\to U_j$ be a self-adjoint positive semidefinite operator.
Define the feasible input class
\[
\mathcal U_j^{\mathrm{feas}}:=\{u\in\mathcal U_j:\ u(t)\in \mathrm{Ran}(G_j)\ \text{for a.e.\ }t\},
\]
the pointwise seminorm on $U_j$,
\[
|v|_{X_j}^2\ :=\ \langle v,G_j v\rangle\quad (v\in U_j),
\]
and the associated integrated energy on a window $[s,t]$,
\[
|u|_{L^2_X(s,t)}^2\ :=\ \int_s^t |u(\tau)|_{X_j}^2\,d\tau.
\]
On $\mathcal U_j^{\mathrm{feas}}$ this is a genuine norm, since the feasibility constraint removes the kernel of $G_j$.
\end{definition}

\begin{lemma}[Coercivity on the feasible set {\normalfont\textsf{(ID: L-ECT-COER-01)}}]\label{lem:ect-coer-feas}
Let $\mathsf Z_j:\mathcal U_j\to\mathcal U_j$ be a causal bridge.
Assume that for every feasible input $u\in\mathcal U_j^{\mathrm{feas}}$ and every $0\le s\le t\le T$, with output $y=\mathsf Z_j u$,
\[
\int_s^t \langle u(\tau),y(\tau)\rangle\,d\tau\ \ge\ \int_s^t \langle u(\tau),G_j u(\tau)\rangle\,d\tau.
\]
Then, on $\mathcal U_j^{\mathrm{feas}}$,
\[
\int_s^t \langle u(\tau),(\mathsf Z_j u)(\tau)\rangle\,d\tau\ \ge\ |u|_{L^2_X(s,t)}^2.
\]
\end{lemma}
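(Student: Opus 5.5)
The claim follows directly from the hypothesis together with the definition of the seminorm $|\cdot|_{X_j}$ in Definition~\ref{def:ect-gate}. I will not use any dynamical property of $\mathsf Z_j$ beyond the assumed lower bound. Causality only guarantees that the output $y=\mathsf Z_j u$ is a well-defined element of $\mathcal U_j$, so that the integrals make sense.

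Fix a feasible input $u\in\mathcal U_j^{\mathrm{feas}}$ and a window $0\le s\le t\le T$, and set $y:=\mathsf Z_j u$. The hypothesis applied to this $u$ and window gives
\[
\int_s^t \langle u(\tau),(\mathsf Z_j u)(\tau)\rangle\,d\tau\ \ge\ \int_s^t \langle u(\tau),G_j u(\tau)\rangle\,d\tau .
\]
Next I identify the right-hand side pointwise. For a.e.\ $\tau$, the definition $|v|_{X_j}^2:=\langle v,G_jv\rangle$ with $v=u(\tau)$ gives $\langle u(\tau),G_ju(\tau)\rangle=|u(\tau)|_{X_j}^2$. This quantity is nonnegative because $G_j$ is positive semidefinite. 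Integrating over $[s,t]$ yields $|u|_{L^2_X(s,t)}^2$ by the definition of the integrated energy, and substituting this into the hypothesis gives the stated inequality.

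The only step needing care is measurability and integrability of $\tau\mapsto\langle u(\tau),G_ju(\tau)\rangle$. Since $U_j$ is finite-dimensional, $G_j$ is bounded, so $|\langle u,G_ju\rangle|\le\|G_j\|\,\|u\|_U^2$, which is integrable for $u\in L^2$. Similarly, Cauchy--Schwarz shows that $\langle u,y\rangle$ is integrable. Thus both sides are finite and the manipulation is legitimate.

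The feasibility restriction $u(t)\in\mathrm{Ran}(G_j)$ plays no role in the inequality itself. It matters only for interpreting $|u|_{L^2_X}$ as a genuine norm, that is, as a coercive lower bound, and that interpretation is already recorded in Definition~\ref{def:ect-gate}. There is no real obstacle: the lemma is essentially a renaming of the hypothesis.
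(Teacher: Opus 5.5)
Your proof is correct and matches the paper's, which simply observes that the right-hand side of the hypothesis is $|u|_{L^2_X(s,t)}^2$ by definition. Your integrability remarks are a harmless addition. One small correction: $y\in\mathcal U_j$ holds because $\mathsf Z_j$ is a map into $\mathcal U_j$, not because of causality.
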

\begin{proof}
The right-hand side is exactly $\int_s^t \langle u(\tau),G_j u(\tau)\rangle\,d\tau$ by definition of $|u|_{L^2_X(s,t)}^2$.
\end{proof}

\begin{remark}[Hard slot HL-P2-CORE (P2 must eliminate lossless feasible directions)]\label{rem:hl-p2-core}
\emph{Slot (HL-P2-CORE).} There exists a self-adjoint $G_j\succeq 0$ whose kernel coincides with the lossless directions of the packaged bridge at depth $j$
(e.g.\ in a linear specialization, $\ker(G_j)=\ker(\mathrm{Sym}\,\mathsf Z_j)$), and \textbf{P2} feasibility implies
$u(t)\in \mathrm{Ran}(G_j)$ for a.e.\ $t$.
Once this slot holds, Lemma~\ref{lem:ect-coer-feas} turns the gating statement into the strict/coercive inequality on feasible inputs.
All difficulty is concentrated in establishing HL-P2-CORE for the particular packaging construction at hand.
\end{remark}

\begin{lemma}[Feasibility shrinkage $\Rightarrow$ depth-scaled coercivity {\normalfont\textsf{(ID: L-ECT-SHRINK-01)}}]\label{lem:ect-shrink}
Assume the setup of Definition~\ref{def:ect-gate} and Lemma~\ref{lem:ect-coer-feas}.
Suppose there exists a constant $a_j>0$ such that for every feasible input $u\in\mathcal U_j^{\mathrm{feas}}$ and every interval $0\le s\le t\le T$,
\[
|u|_{L^2_X(s,t)}^2 \;\ge\; a_j^2\int_s^t \|u(\tau)\|_{U_j}^2\,d\tau.
\]
Then for every feasible $u$ and every $0\le s\le t\le T$ we have the depth-scaled coercive lower bound
\[
\int_s^t \langle u(\tau),(\mathsf Z_j u)(\tau)\rangle\,d\tau
\;\ge\;
a_j^2\int_s^t \|u(\tau)\|_{U_j}^2\,d\tau.
\]
Equivalently, relative to the baseline $L^2([0,T];U_j)$ norm the effective coercivity coefficient is $c_j=a_j^2$.
\end{lemma}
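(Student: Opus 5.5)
The plan is to chain Lemma~\ref{lem:ect-coer-feas} with the assumed shrinkage inequality; nothing beyond monotone comparison of integrals is needed. First, fix a feasible input $u\in\mathcal U_j^{\mathrm{feas}}$ and a window $0\le s\le t\le T$. By Lemma~\ref{lem:ect-coer-feas}, whose hypotheses are assumed, we have
\[
\int_s^t \langle u(\tau),(\mathsf Z_j u)(\tau)\rangle\,d\tau\ \ge\ |u|_{L^2_X(s,t)}^2 .
\]
By Definition~\ref{def:ect-gate}, the right-hand side is exactly $\int_s^t\langle u(\tau),G_ju(\tau)\rangle\,d\tau$. This quantity is finite because $G_j$ is bounded on the finite-dimensional space $U_j$ and $u\in L^2$.

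Second, I will apply the hypothesis $|u|_{L^2_X(s,t)}^2\ge a_j^2\int_s^t\|u(\tau)\|_{U_j}^2\,d\tau$, which is assumed for the same feasible $u$ and the same window. Transitivity of $\ge$ then gives the displayed depth-scaled coercive bound. Because $u$ and $[s,t]$ were arbitrary, the bound holds uniformly. The ``equivalently'' clause is then just a restatement: the constant multiplying the baseline $L^2([0,T];U_j)$ energy on every window is $c_j=a_j^2$.

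I expect no real obstacle here; the only point to check is measurability and integrability. Measurability of $\tau\mapsto\langle u(\tau),(\mathsf Z_ju)(\tau)\rangle$ follows because $u,\mathsf Z_ju\in L^2([0,T];U_j)$, and Cauchy--Schwarz shows the integrand is integrable. So every integral in the chain is well defined and finite, and the inequalities compose without issue. The content of the lemma is therefore entirely in its hypotheses: the existence of $a_j>0$, which is where \textbf{HL-P2-CORE} (Remark~\ref{rem:hl-p2-core}) and the spectral gap of $G_j$ restricted to $\mathrm{Ran}(G_j)$ would enter. If one wanted to supply $a_j$ rather than assume it, I would take $a_j^2$ to be the smallest nonzero eigenvalue of $G_j$. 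That choice works because feasible $u(\tau)$ lie in $\mathrm{Ran}(G_j)$, so $\langle v,G_jv\rangle\ge\lambda_{\min}^+(G_j)\|v\|^2$ pointwise, and integrating this bound over $[s,t]$ gives the hypothesis.
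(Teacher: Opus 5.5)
Your proposal is correct and matches the paper's proof: it applies Lemma~\ref{lem:ect-coer-feas} and then the assumed shrinkage inequality, with the two inequalities chained by transitivity. Your integrability remarks and your choice of $a_j^2$ as the smallest nonzero eigenvalue of $G_j$ are harmless extras. That eigenvalue choice needs $G_j\neq 0$; when $G_j=0$ the feasible class is trivial, so the bound holds vacuously.
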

\begin{proof}
Combine Lemma~\ref{lem:ect-coer-feas} with the stated shrinkage inequality.
\end{proof}

\begin{remark}[Examples of depth scaling]\label{rem:ect-shrink-examples}
If $a_j \gtrsim (j+1)$, then $c_j=a_j^2 \gtrsim (j+1)^2$.
If $a_j \gtrsim 2^{s j}$, then $c_j=a_j^2 \gtrsim 2^{2 s j}$.
These are purely algebraic consequences of the shrinkage hypothesis.
\end{remark}

\subsubsection{Mode compression from sectorization (P4) and minimality (P5)}\label{sec:ect-sectorization}
% IDs: D-ECT-SECT-01, D-ECT-P5MIN-01, L-ECT-MODE-UB-01

\begin{definition}[Sectorization of the port space {\normalfont\textsf{(ID: D-ECT-SECT-01)}}]\label{def:ect-sectorization}
Fix a depth $j$ and a port Hilbert space $U_j$.
A \emph{sectorization} is the choice of a finite index set $Q_j$ and a family of orthogonal projections
$\{\Pi_{j,q}:U_j\to U_j\}_{q\in Q_j}$ such that $\Pi_{j,q}\Pi_{j,q'}=0$ for $q\ne q'$ and $\sum_{q\in Q_j}\Pi_{j,q}=I$.
We extend $\Pi_{j,q}$ pointwise to signals $u\in\mathcal U_j:=L^2([0,T];U_j)$ by $(\Pi_{j,q}u)(t):=\Pi_{j,q}(u(t))$.
A causal bridge $\mathsf Z_j:\mathcal U_j\to \mathcal U_j$ is \emph{sector-respecting} if
\[
\mathsf Z_j \;=\; \sum_{q\in Q_j} \Pi_{j,q}\,\mathsf Z_j\,\Pi_{j,q},
\]
i.e.\ the output on each sector depends only on the input restricted to that sector (no cross-sector coupling).
\end{definition}

\begin{definition}[Sector-respecting minimal decompositions (P5-min) {\normalfont\textsf{(ID: D-ECT-P5MIN-01)}}]\label{def:ect-p5min}
Assume a sectorization as in Definition~\ref{def:ect-sectorization}.
An atomic decomposition $\mathsf Z_j=\sum_{r=1}^m \mathsf Z_{j,r}$ (Definition~\ref{def:ect-atom}) is \emph{sector-respecting} if
for each atom $r$ there exists a sector label $\sigma(r)\in Q_j$ such that
\[
\mathsf Z_{j,r} \;=\; \Pi_{j,\sigma(r)}\,\mathsf Z_{j,r}\,\Pi_{j,\sigma(r)}.
\]
Define the \emph{sector-minimal mode count} $m_j^{\min}$ to be the smallest $m$ for which $\mathsf Z_j$ admits a
sector-respecting decomposition into $m$ \emph{nonzero} atoms.
We say packaging satisfies \textbf{P5-min} at depth $j$ if it returns a sector-respecting decomposition achieving $m_j^{\min}$.
\end{definition}

\begin{lemma}[Sectorization bounds the minimal mode count {\normalfont\textsf{(ID: L-ECT-MODE-UB-01)}}]\label{lem:ect-mode-ub}
Assume a sectorization as in Definition~\ref{def:ect-sectorization}.
Then any sector-respecting atomic decomposition of $\mathsf Z_j$ can be merged within each sector to produce a sector-respecting
decomposition with at most $|Q_j|$ nonzero atoms. In particular,
\[
m_j^{\min}\ \le\ |Q_j|.
\]
Consequently, under \textbf{P5-min} packaging (Definition~\ref{def:ect-p5min}) the returned mode count satisfies $m_j\le |Q_j|$.
\end{lemma}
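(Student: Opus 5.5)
The plan is a merging argument. Start from an arbitrary sector-respecting atomic decomposition $\mathsf Z_j=\sum_{r=1}^m \mathsf Z_{j,r}$ with sector labels $\sigma(r)\in Q_j$, as in Definition~\ref{def:ect-p5min}. For each $q\in Q_j$, group the atoms with label $q$ and define
\[
\mathsf Z_{j,q}^{\mathrm{merged}}:=\sum_{r:\sigma(r)=q}\mathsf Z_{j,r},
\]
a parallel sum in the sense of Definition~\ref{def:ect-atom}. The new decomposition keeps only those $q$ for which $\mathsf Z_{j,q}^{\mathrm{merged}}\neq 0$. Regrouping a finite sum gives $\mathsf Z_j=\sum_q \mathsf Z_{j,q}^{\mathrm{merged}}$ pointwise in $t$. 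Dropping the zero summands leaves the sum unchanged.

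Next I check that each merged atom is admissible. \emph{Causality} holds because a finite parallel sum of causal bridges is causal: if $u=v$ a.e.\ on $[0,t]$, then each summand's outputs agree a.e.\ on $[0,t]$, and so does their sum. This is the same one-line argument as Lemma~\ref{lem:tk-causal-compose}, applied to sums instead of composites. \emph{Sector-respect} holds because $\Pi_{j,q}(\cdot)\Pi_{j,q}$ is additive over atoms: each summand equals $\Pi_{j,q}\mathsf Z_{j,r}\Pi_{j,q}$, so the sum equals $\Pi_{j,q}\mathsf Z^{\mathrm{merged}}_{j,q}\Pi_{j,q}$. Here $\Pi_{j,q}$ acts linearly and pointwise on the output, and precomposition by $\Pi_{j,q}$ distributes over the parallel sum. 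Linearity of $\mathsf Z_{j,r}$ itself is not needed. The result is a sector-respecting decomposition with at most $|Q_j|$ nonzero atoms.

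To get $m_j^{\min}\le |Q_j|$, I need at least one sector-respecting decomposition to exist. I would produce it from Definition~\ref{def:ect-sectorization}: $\mathsf Z_j=\sum_q \Pi_{j,q}\mathsf Z_j\Pi_{j,q}$, where each term is causal because pointwise projections preserve a.e.\ agreement, and each term is sector-respecting because $\Pi_{j,q}^2=\Pi_{j,q}$. Discarding zero terms leaves at most $|Q_j|$ nonzero atoms. This direct route already gives the bound; the merging argument shows that the same bound holds starting from any decomposition. The P5-min consequence follows by definition, since the returned mode count is $m_j=m_j^{\min}$.

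The main obstacle is this existence and degeneracy bookkeeping. The lemma's hypothesis only fixes the sectorization, so I must make sure $\mathsf Z_j$ is sector-respecting, which is needed for $m_j^{\min}$ to be defined. I would read that as implicit in the P5-min setup. In the edge case $\mathsf Z_j=0$, the minimal count is $0$, which is still $\le |Q_j|$.
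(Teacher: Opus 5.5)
Your proposal is correct and follows the paper's proof. Like the paper, you merge the atoms with a common sector label into $\sum_{r:\sigma(r)=q}\mathsf Z_{j,r}$, discard the zero terms, and read off $m_j^{\min}\le|Q_j|$ and the P5-min consequence; your extra checks (causality of the merged atoms, and existence of at least one sector-respecting decomposition via $\Pi_{j,q}\mathsf Z_j\Pi_{j,q}$, which uses that $\mathsf Z_j$ is itself sector-respecting) are sound and make explicit details the paper leaves tacit.
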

\begin{proof}
Let $\mathsf Z_j=\sum_{r=1}^m \mathsf Z_{j,r}$ be sector-respecting with labels $\sigma(r)\in Q_j$.
For each $q\in Q_j$ define the merged atom $\widetilde{\mathsf Z}_{j,q}:=\sum_{\{r:\sigma(r)=q\}} \mathsf Z_{j,r}$.
Each $\widetilde{\mathsf Z}_{j,q}$ is sector-respecting and supported on $q$, and
$\mathsf Z_j=\sum_{q\in Q_j}\widetilde{\mathsf Z}_{j,q}$.
Discarding any zero $\widetilde{\mathsf Z}_{j,q}$ yields a sector-respecting decomposition with at most $|Q_j|$ nonzero atoms, proving
$m_j^{\min}\le |Q_j|$. The final statement follows from the definition of \textbf{P5-min}.
\end{proof}

\begin{definition}[Quantized index hypothesis {\normalfont\textsf{(ID: D-ECT-P4IDX-01)}}]\label{def:ect-p4-idx}
\mbox{}\par
Fix depth $j$ and a sectorization as in Definition~\ref{def:ect-sectorization}.
We say the sectorization has a \emph{quantized index} if there exists a constant $C_0\ge 1$ and an \emph{injective} map
\[
\iota_j: Q_j \hookrightarrow \{0,1,2,\dots,\lceil C_0(j+1)\rceil-1\}.
\]
Equivalently, each sector admits a distinct integer label from a set of size $O(j)$.
This is an explicit modeling/design hypothesis encoding bounded interface complexity; it is not implied by the other primitives and must be verified in each intended instantiation.
\end{definition}

\begin{lemma}[Quantized index implies linear sector count {\normalfont\textsf{(ID: L-ECT-QSIZE-01)}}]\label{lem:ect-qsize}
Under the quantized index hypothesis of Definition~\ref{def:ect-p4-idx}, we have
\[
|Q_j|\ \le\ \lceil C_0(j+1)\rceil.
\]
\end{lemma}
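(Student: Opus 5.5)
The plan is to argue by cardinality of the codomain of the injective index map. By Definition~\ref{def:ect-p4-idx}, the quantized index hypothesis supplies an injective map $\iota_j:Q_j\hookrightarrow S_j$, where
\[
S_j:=\{0,1,\dots,\lceil C_0(j+1)\rceil-1\}.
\]
The first step is to note that $S_j$ is a finite set of integers with exactly $\lceil C_0(j+1)\rceil$ elements. This is well defined because $C_0\ge 1$ and $j\ge 0$, so $C_0(j+1)\ge 1$ and hence $\lceil C_0(j+1)\rceil$ is a positive integer.

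The second step is the elementary fact that an injection between finite sets cannot increase cardinality. Since $\iota_j$ is injective, $Q_j$ is in bijection with its image $\iota_j(Q_j)\subseteq S_j$. Therefore
\[
|Q_j| = |\iota_j(Q_j)| \le |S_j| = \lceil C_0(j+1)\rceil,
\]
which is the stated bound. Finiteness of $Q_j$ is already part of Definition~\ref{def:ect-sectorization}, so no cardinal-arithmetic subtleties arise.

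There is no real obstacle; the only point requiring care is counting $|S_j|$ correctly, namely confirming that the range $\{0,\dots,n-1\}$ with $n=\lceil C_0(j+1)\rceil\ge 1$ has exactly $n$ elements and is nonempty. I would also record the corollary used downstream. Combining this bound with Lemma~\ref{lem:ect-mode-ub} under \textbf{P5-min} gives
\[
m_j\le |Q_j|\le \lceil C_0(j+1)\rceil\le C_0(j+1)+1\le 2C_0(j+1),
\]
so HL-ECT-1 holds with constant $2C_0$.
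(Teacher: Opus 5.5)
Your proposal is correct and is exactly the paper's argument: injectivity of $\iota_j$ bounds $|Q_j|$ by the cardinality $\lceil C_0(j+1)\rceil$ of the target set $\{0,1,\dots,\lceil C_0(j+1)\rceil-1\}$. Your extra remark that $\lceil C_0(j+1)\rceil\le 2C_0(j+1)$ is also correct, since $C_0(j+1)\ge 1$. It neatly absorbs the ceiling that the paper's statement of HL-ECT-1 ($m_j\le C_0(j+1)$) passes over.
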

\begin{proof}
Since $\iota_j$ is injective, $|Q_j|\le \big|\{0,1,\dots,\lceil C_0(j+1)\rceil-1\}\big|=\lceil C_0(j+1)\rceil$.
\end{proof}

\begin{corollary}[Linear mode bound from P4 index {\normalfont\textsf{(ID: C-ECT-HL1-01)}}]\label{cor:ect-hl1}
\mbox{}\par
Assume a sectorization (Definition~\ref{def:ect-sectorization}) and P5-min packaging (Definition~\ref{def:ect-p5min}).
Assume also the quantized index hypothesis (Definition~\ref{def:ect-p4-idx}).
Then the packaged mode count satisfies
\[
m_j \ \le\ |Q_j|\ \le\ \lceil C_0(j+1)\rceil.
\]
\end{corollary}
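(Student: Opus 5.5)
The corollary is a two-link chain, so the plan is to establish each inequality separately and then concatenate them.

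\emph{First inequality, $m_j\le |Q_j|$.} Under the standing sectorization (Definition~\ref{def:ect-sectorization}), P5-min packaging (Definition~\ref{def:ect-p5min}) returns a sector-respecting decomposition with mode count exactly $m_j=m_j^{\min}$. Lemma~\ref{lem:ect-mode-ub} gives $m_j^{\min}\le |Q_j|$. The only thing to check is that $m_j^{\min}$ is well defined, i.e.\ that at least one sector-respecting decomposition of $\mathsf Z_j$ exists. When $\mathsf Z_j$ is itself sector-respecting, the family $\{\Pi_{j,q}\mathsf Z_j\Pi_{j,q}\}_{q\in Q_j}$ is such a decomposition once the zero terms are dropped. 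This existence is in any case implicit in the hypothesis that P5-min packaging returns a decomposition achieving $m_j^{\min}$, so I will simply cite the definition.

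\emph{Second inequality, $|Q_j|\le\lceil C_0(j+1)\rceil$.} This is precisely Lemma~\ref{lem:ect-qsize}. It follows from the injectivity of $\iota_j$ in the quantized index hypothesis (Definition~\ref{def:ect-p4-idx}).

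Chaining the two inequalities yields the stated bound. Since every step is a direct citation, I do not expect a real obstacle. The one place needing care is the bookkeeping of the first inequality: one must make sure that the mode count $m_j$ returned by packaging is the one governed by P5-min, as opposed to an arbitrary atomic decomposition. This is exactly why the P5-min hypothesis is included in the statement.
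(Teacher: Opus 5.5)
Your proposal is correct and follows the paper's proof exactly: the paper also obtains $m_j\le |Q_j|$ from Lemma~\ref{lem:ect-mode-ub} under P5-min packaging, obtains $|Q_j|\le\lceil C_0(j+1)\rceil$ from Lemma~\ref{lem:ect-qsize}, and chains the two. Your side remark that a sector-respecting decomposition exists, so that $m_j^{\min}$ is well defined, is a harmless extra check that the paper leaves implicit.
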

\begin{proof}
The inequality $m_j\le |Q_j|$ follows from Lemma~\ref{lem:ect-mode-ub} under P5-min packaging.
The bound on $|Q_j|$ is Lemma~\ref{lem:ect-qsize}.
\end{proof}

\begin{remark}[Hard lemma slots behind coercivity emergence]\label{rem:ect-slots}
Lemma~\ref{lem:ect-mech} is purely algebraic. In applications, the substantive work is to justify:
(i) a \emph{sector/atom model} produced by packaging (P5) in which a uniform per-atom ICAP bound holds,
(ii) a \emph{mode-count compression} estimate $m_j\le C_0(j+1)$ (often a P4-style sectorization statement; cf.\ Lemma~\ref{lem:ect-mode-ub} and Corollary~\ref{cor:ect-hl1}),
and (iii) a feasibility/gating principle (P2) strong enough to rule out lossless feasible directions (if strict coercivity is required beyond ICAP).
\end{remark}

\paragraph{Audit monotonicity defect (optional).}
Let $\mathcal A$ be an audit functional equipped with a target monotonicity principle under coarse observation
(e.g.\ the data-processing inequality for path reversal asymmetry).
Given a refinement $f:Z\to X$ and a further coarsening $g:X\to Y$, define the monotonicity defect on an argument $(\cdot)$ by
\[
\mathrm{Def}_{\mathcal A}(g\mid f;\,\cdot)\ :=\ \max\!\bigl\{0,\ \mathcal A((g\circ f)_\#(\cdot))-\mathcal A(f_\#(\cdot))\bigr\}.
\tag{D-TK-DEF-01}\label{eq:tk-audit-def}
\]
When DPI holds in the stated setting, this defect vanishes identically.

\subsubsection*{Defect propagation rules (toolkit)}
The following inequalities are repeatedly useful.
They require no system-specific structure beyond norms and triangle inequalities.

\begin{enumerate}
  \item \textbf{Composition perturbation (triangle inequality).}
  For composable maps $A,B,A',B'$ between normed spaces with an operator norm $\|\cdot\|$,
  \[
  \|A\circ B - A'\circ B'\|\ \le\ \|A-A'\|\,\|B\| + \|A'\|\,\|B-B'\|.
  \]
  \item \textbf{Idempotence factorization.}
  For any map $E$ with a submultiplicative operator norm,
  \[
  E^2-E \;=\; E\circ(E-\mathrm{Id})\qquad\Rightarrow\qquad \|E^2-E\|\ \le\ \|E\|\,\|E-\mathrm{Id}\|.
  \]
  \item \textbf{Audit monotonicity as a zero-defect rule.}
  In contexts where $\mathcal A$ is known to contract under pushforward (as in Section~\ref{sec:aot}),
  $\mathrm{Def}_{\mathcal A}(g\mid f;\cdot)=0$ provides a standardized certificate that coarse observation does not create spurious directionality.
\end{enumerate}

\bibliographystyle{alpha}
\bibliography{refs}

\newcommand{\etalchar}[1]{$^{#1}$}
\begin{thebibliography}{DvdMS23}

\bibitem[AGH{\etalchar{+}}12]{AltanerEtAl2012}
Bernhard Altaner, Stefan Grosskinsky, Stephan Herminghaus, Lukas Katth{\"a}n,
  Marc Timme, and J{\"u}rgen Vollmer.
\newblock Network representations of non-equilibrium steady states: Cycle
  decompositions, symmetries and dominant paths.
\newblock {\em Physical Review E}, 85:041133, 2012.

\bibitem[Bor94]{Borceux1994}
Francis Borceux.
\newblock {\em Handbook of Categorical Algebra 1: Basic Category Theory}.
\newblock Cambridge University Press, 1994.

\bibitem[BSvdM25]{BauerSeifertVanderMeer2025}
M.~T. Bauer, U.~Seifert, and J.~van~der Meer.
\newblock Stroboscopic measurements in markov networks: exact generator
  reconstruction vs. thermodynamic inference.
\newblock {\em Journal of Physics A: Mathematical and Theoretical}, 58:125001,
  2025.

\bibitem[CK11]{CsiszarKorner2011}
Imre Csisz{\'a}r and J{\'a}nos K{\"o}rner.
\newblock {\em Information Theory: Coding Theorems for Discrete Memoryless
  Systems}.
\newblock Cambridge University Press, 2 edition, 2011.

\bibitem[Coh63]{Cohen1963}
Paul~J. Cohen.
\newblock The independence of the continuum hypothesis.
\newblock {\em Proceedings of the National Academy of Sciences of the United
  States of America}, 50(6):1143--1148, 1963.

\bibitem[CT06]{CoverThomas2006}
Thomas~M. Cover and Joy~A. Thomas.
\newblock {\em Elements of Information Theory}.
\newblock Wiley, 2 edition, 2006.

\bibitem[DCLP23]{DalCengioLecomtePolettini2023}
S.~Dal~Cengio, V.~Lecomte, and M.~Polettini.
\newblock Geometry of nonequilibrium reaction networks.
\newblock {\em Physical Review X}, 13:021040, 2023.

\bibitem[DP02]{DaveyPriestley2002}
B.~A. Davey and H.~A. Priestley.
\newblock {\em Introduction to Lattices and Order}.
\newblock Cambridge University Press, 2 edition, 2002.

\bibitem[DvdMS23]{DeguntherVanderMeerSeifert2023}
J.~Deg{\"u}nther, J.~van~der Meer, and U.~Seifert.
\newblock Coarse-grained entropy production in markov processes.
\newblock {\em arXiv preprint}, 2023.

\bibitem[Hyt15]{Hyttinen2015}
Tapani Hyttinen.
\newblock Counting measure and forking in finite models.
\newblock In {\AA}sa Hirvonen, Juha Juutinen, Juha Kontinen, Kimmo
  R{\"a}s{\"a}nen, and Jouko V{\"a}{\"a}n{\"a}nen, editors, {\em Logic Without
  Borders: Essays on Set Theory, Model Theory, Philosophical Logic and
  Philosophy of Mathematics}, volume~44 of {\em Trends in Logic}. De Gruyter,
  2015.

\bibitem[KADK19]{KhudaBukhshEtAl2019}
Wasiur~R. KhudaBukhsh, Arnab Auddy, Yann Disser, and Heinz Koeppl.
\newblock Approximate lumpability for markovian agent-based models using local
  symmetries.
\newblock {\em Journal of Applied Probability}, 56(3):647--671, 2019.

\bibitem[Kel79]{Kelly1979}
Frank~P. Kelly.
\newblock {\em Reversibility and Stochastic Networks}.
\newblock Cambridge University Press, 1979.

\bibitem[KS60]{KemenySnell1960}
John~G. Kemeny and J.~Laurie Snell.
\newblock {\em Finite Markov Chains}.
\newblock Van Nostrand, 1960.

\bibitem[Kun11]{Kunen2011}
Kenneth Kunen.
\newblock {\em Set Theory}.
\newblock College Publications, 2011.

\bibitem[Lib04]{Libkin2004}
Leonid Libkin.
\newblock {\em Elements of Finite Model Theory}.
\newblock Springer, 2004.

\bibitem[LS99]{LebowitzSpohn1999}
Joel~L. Lebowitz and Herbert Spohn.
\newblock A gallavotti--cohen type symmetry in the large deviation functional
  for stochastic dynamics.
\newblock {\em Journal of Statistical Physics}, 95:333--365, 1999.

\bibitem[Mer11]{Merhav2010}
Neri Merhav.
\newblock Data processing theorems and the second law of thermodynamics.
\newblock {\em IEEE Transactions on Information Theory}, 57(8):4926--4939,
  2011.

\bibitem[ML98]{MacLane1998}
Saunders Mac~Lane.
\newblock {\em Categories for the Working Mathematician}, volume~5 of {\em
  Graduate Texts in Mathematics}.
\newblock Springer, 2 edition, 1998.

\bibitem[MS11]{MacphersonSteinhorn2011}
Dugald Macpherson and Charles Steinhorn.
\newblock Definability in classes of finite structures.
\newblock In {\em Finite and Algorithmic Model Theory}, pages 140--176.
  Cambridge University Press, 2011.

\bibitem[NJ10]{NilssonJacobi2010}
Martin Nilsson~Jacobi.
\newblock A robust spectral method for finding lumpings and metastable states
  of non-reversible markov chains.
\newblock {\em Electronic Transactions on Numerical Analysis}, 37:296--306,
  2010.

\bibitem[Nor97]{Norris1997}
J.~R. Norris.
\newblock {\em Markov Chains}.
\newblock Cambridge University Press, 1997.

\bibitem[Pol12]{Polettini2012}
Matteo Polettini.
\newblock {\em Geometric and Combinatorial Aspects of Nonequilibrium
  Statistical Mechanics}.
\newblock Phd thesis, Universit{\`a} di Bologna, 2012.

\bibitem[PPRV10]{Puglisi2010}
A.~Puglisi, S.~Pigolotti, L.~Rondoni, and A.~Vulpiani.
\newblock Entropy production and coarse-graining in markov processes.
\newblock {\em Journal of Statistical Mechanics: Theory and Experiment},
  2010(05):P05015, 2010.

\bibitem[RP81]{RogersPitman1981}
L.~C.~G. Rogers and J.~W. Pitman.
\newblock Markov functions.
\newblock {\em The Annals of Probability}, 9(4), 1981.

\bibitem[Sch76]{Schnakenberg1976}
J{\"u}rgen Schnakenberg.
\newblock Network theory of microscopic and macroscopic behavior of master
  equation systems.
\newblock {\em Reviews of Modern Physics}, 48:571--585, 1976.

\bibitem[Sei12]{Seifert2012}
Udo Seifert.
\newblock Stochastic thermodynamics, fluctuation theorems and molecular
  machines.
\newblock {\em Reports on Progress in Physics}, 75(12):126001, 2012.

\bibitem[SN07]{SinitsynNemenman2007}
N.~A. Sinitsyn and I.~Nemenman.
\newblock The berry phase and the pump flux in stochastic chemical kinetics.
\newblock {\em Europhysics Letters}, 77(5):58001, 2007.

\bibitem[Wil72]{Willems1972}
J.~C. Willems.
\newblock Dissipative dynamical systems part i: General theory.
\newblock {\em Archive for Rational Mechanics and Analysis}, 45(5):321--351,
  1972.

\end{thebibliography}

\end{document}